\newtheorem{thm}{Theorem}[section]
\newtheorem{prop}[thm]{Proposition}
\newtheorem{lemma}[thm]{Lemma}
\newtheorem{cor}[thm]{Corollary}
     \newcommand{\sO}{\mathcal O}
\newcommand{\fr}{\mathfrak r}
\newcommand{\fn}{\mathfrak n} 
\newcommand{\afmsum}{\; \hat + \;}
\newcommand{\afmdiff}{\; \hat - \;}
\newcommand{\afmprod}{\; \hat \cdot \;}
\newcommand{\afm}{\mathscr M _{\textrm{aff}}}
\newcommand{\afr}{\mathscr R _{\textrm{aff}}}
\def\XXint#1#2#3{{\setbox0=\hbox{$#1{#2#3}{\int}$ }
\vcenter{\hbox{$#2#3$ }}\kern-.6\wd0}}
\newcommand{\R}{\mathbb{R}}
\newcommand{\Z}{\mathbb{Z}}
\newcommand{\C}{\mathbb{C}}
\newcommand{\N}{\mathbb{N}}
\newcommand{\iu}{{i\mkern1mu}}
\theoremstyle{definition}
\newtheorem{definition}[thm]{Definition}
\newtheorem{example}[thm]{Example}
\theoremstyle{remark}
\newtheorem{remark}[thm]{Remark}
\theoremstyle{ser}
\newtheorem{qn}{Question}
\numberwithin{equation}{section}
\newtheoremstyle{ser}
{8pt}
{8pt}
{\it}
{}
{\sf}
{:}
{6mm}
{}
\newtheoremstyle{serr}
{8pt}
{8pt}
{\normalfont}
{}
{\sf}
{.}
{6mm}
{}
\theoremstyle{ser}
\newtheorem{claim}{Claim}
\theoremstyle{serr}
\newtheorem{claimpff}{Proof of Claim}
\newcommand\Wtilde[1]{\stackrel{\sim}{\smash{\mathscr{#1}}\hspace{0.12in}\rule[0in]{0pt}{1.15ex}}\hspace{-0.12in}}
\tikzset{node distance=2cm, auto}
\newcommand*\bigcdot{\mathpalette\bigcdot@{.5}}
\newcommand*\bigcdot@[2]{\mathbin{\vcenter{\hbox{\scalebox{#2}{$\m@th#1\bullet$}}}}}
\begin{document}
\title{Matrix algebras over algebras of unbounded operators}
\dedicatory{To the fond memory of my teacher, Richard V. Kadison (1925--2018)}
\author{Soumyashant Nayak}
\address{Smilow Center for Translational Research\\
 University of Pennsylvania\\
  Philadelphia\\
   PA 19104\\
   ORCiD: 0000-0002-6643-6574}
\email{nsoum@pennmedicine.upenn.edu}
\urladdr{https://nsoum.github.io/} 


\begin{abstract}
Let $\mathscr{M}$ be a $II_1$ factor acting on the Hilbert space $\mathscr{H}$, and $\afm$ be the Murray-von Neumann algebra of closed densely-defined operators affiliated with $\mathscr{M}$. Let $\tau$ denote the unique faithful normal tracial state on $\mathscr{M}$. By virtue of Nelson's theory of non-commutative integration, $\afm$ may be identified with the completion of $\mathscr{M}$ in the measure topology. In this article, we show that $M_n(\afm) \cong M_n(\mathscr{M})_{\textrm{aff}}$ as unital ordered complex topological $*$-algebras with the isomorphism extending the identity mapping of $M_n(\mathscr{M}) \to M_n(\mathscr{M})$. Consequently, the algebraic machinery of rank identities and determinant identities are applicable in this setting. As a step further in the Heisenberg-von Neumann puzzle discussed by Kadison-Liu (SIGMA, 10 (2014), Paper 009), it follows that if there exist operators $P, Q$ in $\afm$ satisfying the commutation relation $Q \afmprod P \afmdiff P \afmprod Q  = \iu I$, then at least one of them does not belong to $L^p(\mathscr{M}, \tau)$ for any $0 < p \le \infty$. Furthermore, the respective point spectrums of $P$ and $Q$ must be empty. Hence the puzzle may be recasted in the following equivalent manner - Are there invertible operators $P, A$ in $\mathscr{M}_{\textrm{aff}}$ such that $P^{-1} \afmprod A \afmprod P = I \afmsum A$? This suggests that any strategy towards its resolution must involve the study of conjugacy invariants of operators in $\mathscr{M}_{\textrm{aff}}$ in an essential way.

\bigskip\noindent
{\bf Keywords:}
Murray-von Neumann algebras, affiliated operators, Heisenberg commutation relation
\vskip 0.01in \noindent
{\bf MSC2010 subject classification:} 47C15, 46L10, 46N50, 47L90
\end{abstract}

\maketitle

\section{Introduction}
\label{sec:intro}
Let $\hbar = \frac{h}{2\pi}$ denote the reduced Planck's constant, and $Q, P$ denote observables corresponding to a particle's position and its corresponding conjugate momentum, respectively. The Heisenberg commutation relation, $$PQ - QP = -\iu \hbar I,$$ is one of the most fundamental relations of quantum mechanics. It reveals the importance of non-commutativity in any foundational mathematical theory of quantum mechanics. Naturally it is of great interest to study non-commutative mathematical structures where the commutation relation may be represented. (For our study, we may normalize $\hbar$ to $1$.) For $n \in \mathbb{N}$, the algebra of $n \times n$ complex matrices $M_n(\mathbb{C})$ does not suffice because $\textrm{tr} (QP -PQ) = 0$ whereas $\textrm{tr}(\iu I)  = \iu $, where tr denotes the trace functional. It is known that the Heisenberg relation cannot be represented in any complex unital Banach algebra $\mathfrak{B}$ as $\sigma(AB) \cup \{0\} = \sigma(BA) \cup \{0\}$ (cf.\ \cite[Remark 3.2.9]{kadison-ringrose1}) for $A, B \in \mathfrak{B}$ where $\sigma(\cdot)$ denotes the spectrum of an element. This rules out the possibility of representing the Heisenberg relation using bounded operators on a complex Hilbert space. In the Dirac-von Neumann formulation of quantum mechanics (cf.\ \cite{dirac-quant-mech}, \cite{vN-math-found-qm}), quantum mechanical observables are defined as (possibly unbounded) self-adjoint operators on a complex Hilbert space. In this article, the adjective `unbounded' is used in the sense of being `not necessarily bounded' rather than `not bounded'. The classic representation of the Heisenberg relation (cf.\ \cite{vN-heisenberg-classic}, \cite[\S 5.3]{kadison-liu}) involves modeling $Q$ and $P$ as follows:
\begin{itemize}
\item[(i)] the position observable $Q$ is modeled by the unbounded self-adjoint operator $M$ defined as $(Mf)(x) = xf(x)$ $(x \in \R)$, with the domain being the set of functions $f$ in $L^2(\R)$ such that $Mf$ belongs to $L^2(\R)$;
\item[(ii)] the momentum observable $P$ is modeled by the unbounded self-adjoint operator $D = \iu \frac{\mathrm{d}}{\mathrm{d}x}$, with the domain being the linear subspace of $L^2(\R)$ corresponding to absolutely continuous functions on $\R$ whose almost-everywhere derivatives belong to $L^2(\R)$ (see \cite[Theorem 5.11]{kadison-liu}).
\end{itemize}
With this description, $QP - PQ$ is a pre-closed operator with closure $\iu I.$ But performing algebraic computations in this framework is an arduous task as one has to indulge in `domain-tracking'. We remind the reader that two unbounded operators that agree on a dense subspace can be very different in terms of the physics they describe (see \cite[pg.\ 254, Example 5]{simon-reed}).

Let $\mathscr{R}$ be a finite von Neumann algebra acting on the complex Hilbert space $\mathscr{H}$, and let $\mathscr{R}_{\textrm{aff}}$ denote the set of closed densely-defined operators affiliated with $\mathscr{R}$.  By virtue of \cite[Theorem 6.13]{kadison-liu}, $\mathscr{R}_{\textrm{aff}}$ has the structure of a unital $*$-algebra, and is called the Murray-von Neumann algebra associated with $\mathscr{R}$ (see \cite[Definition 6.14]{kadison-liu}). In the case of a finite factor, the result was first observed by Murray and von Neumann in \cite[Theorem XV]{vN-rings}. When $\mathscr{R}$ is countably decomposable and thus possesses a faithful normal tracial state, the fact that $\mathscr{R}_{\textrm{aff}}$ is a unital $*$-algebra also follows from Theorem 4 in \cite{nelson}. In \cite[\S 7]{kadison-liu}, Kadison and Liu discuss the Heisenberg-von Neumann puzzle which may be stated as follows: Is there a representation of the Heisenberg commutation relation in $\mathscr{R}_{\textrm{aff}}$?  In \cite[Theorem 7.4]{kadison-liu}, they showed that one cannot use self-adjoint operators in $\mathscr{R}_{\textrm{aff}}$ to represent the Heisenberg relation. The key step in the proof involves the use of the center-valued  trace on $\mathscr{R}$ after `wrestling' the unbounded operators down to the bounded level (see \cite[Lemma 7.3, Theorem 7.4]{kadison-liu}) using the spectral decomposition for unbounded self-adjoint operators. So one may speculate that although there is no obvious trace on $\mathscr{R}_{\textrm{aff}}$, perhaps the center-valued trace on $\mathscr{R}$ provides a moral obstruction to the Heisenberg commutation relation. But in \cite{kadison-liu-thom}, Kadison, Liu, and Thom have shown that the identity operator is the sum of two commutators in the Murray-von Neumann algebra associated with a type $II_1$ von Neumann algebra. Thus the moral argument clearly fails, leaving the question of representing the Heisenberg relation in $\mathscr{R}_{\textrm{aff}}$ with non-self-adjoint operators wide open.

We have reviewed a few arguments above that identify obstructions to representing the Heisenberg relation in the algebra under consideration. A common feature of these arguments is that they involve comparison of the spectral content of $QP$ and $PQ$ in one form or another. For unbounded operators, we may capture the spirit of these arguments by studying notions of rank and determinant which are available in certain algebras of unbounded operators. In this article, our main goal is to faciliate such a study by rigorously defining matrix algebras over some important classes of algebras of unbounded operators. In the literature, several proofs rely on matrix algebraic arguments (for example, see \cite[Lemma 2.21-2.23]{haagerup-schultz}, \cite[Proposition 2.4]{haagerup-schultz}) whose justification is not provided. Our investigation reveals that the justification of these arguments is not a trivial matter and requires an understanding of an appropriate topology on Murray-von Neumann algebras, namely the measure topology. The main goals of this article are the following:
\begin{itemize}
\item[(i)] To develop foundations for matrix theoretic arguments involving unbounded operators,
\item[(ii)] To extend the study of rank and determinant identities to the context of unbounded operators,
\item[(iii)] To apply the techniques developed to further study of the Heisenberg-von Neumann puzzle.
\end{itemize}

The set of $n \times n$ matrices $M_n(\mathfrak{A})$ over a $*$-algebra $\mathfrak{A}$ naturally has the structure of a $*$-algebra. Using the addition and multiplication in $\mathfrak{A}$, we may define the algebraic structure of $M_n(\mathfrak{A})$ through the usual addition and multiplication of matrices. For a matrix $\mathbf{T} \in M_n(\mathfrak{A})$ (with $T_{ij} \in \mathfrak{A}$ as its $(i, j)$th entry), the adjoint $\mathbf{T}^{\dagger}$ is the matrix whose $(i, j)$th entry is $T_{ji}^*$. In the case when $\mathfrak{A}$ is a $C^*$-algebra acting on the Hilbert space $\mathscr{H}$, one may represent $M_n(\mathfrak{A})$ faithfully on the Hilbert space $\oplus_{i=1}^n \mathscr{H}$, through the usual matrix action on column vectors. It is a basic (though not entirely trivial) result that the norm on $M_n(\mathfrak{A})$ inherited from $\mathcal{B}(\oplus_{i=1}^n \mathscr{H})$  makes it a $C^*$-algebra and this norm is independent of the representation of $\mathfrak{A}$ on $\mathscr{H}$ (see \cite[Proposition 11.1.2]{kadison-ringrose2}). Similarly for a von Neumann algebra $\mathscr{R}$ acting on the Hilbert space $\mathscr{H}$, $M_n(\mathscr{R})$ is a von Neumann algebra acting on $\oplus_{i=1}^n \mathscr{H}$.

The classical operator algebras ($C^*$-algebra, von Neumann algebras, respectively) are usually defined at the outset as $*$-closed subalgebras of the $*$-algebra of bounded operators on a Hilbert space (which are norm-closed, weak-operator closed, respectively). At the same time, there are many advantages to studying intrinsic characterizations of these algebras that are independent of the representation. Firstly, such a description helps in identifying the right notion of morphism in the class of operator algebras under consideration. For instance, an abstract $C^*$-algebra is defined as an involutive complex Banach algebra with the norm satisfying the $C^*$-axioms. The Gelfand-Neumark theorem shows that every abstract $C^*$-algebra has a faithful representation as a concrete $C^*$-algebra. The morphisms in the category of $C^*$-algebras are given by $*$-homomorphisms, which are automatically norm-continuous. Similarly, Sakai's theorem (cf.\ \cite{sakai}) shows that a von Neumann algebra may be characterized as a $C^*$-algebra which is the dual of a Banach space.\footnote{The ultraweak topology corresponds to the weak-$*$ topology induced by the pre-dual.} The morphisms in the category of von Neumann algebras are given by normal $*$-homomorphisms, which are automatically ultraweak continuous. Secondly, although several constructions (such as direct sums, tensor products, etc.) involving operator algebras are best understood using concrete representations, it is crucial to ensure that the constructed object is independent of the representations used for the building blocks. One such construction of fundamental importance is the process of forming matrices over operator algebras which we discussed in the previous paragraph.

Let $\mathscr{M}$ be a $II_1$ factor acting on the Hilbert space $\mathscr{H}$ with the unique faithful normal tracial state on $\mathscr{M}$  denoted by $\tau$. In this article, we are primarily interested in studying (full) matrix algebras over $\afm$. In \cite[\S 2]{nelson}, Nelson defined the {\it measure topology} on von Neumann algebras with a faithful normal semi-finite trace with the goal of studying a non-commutative version of the notion of {\it convergence in measure}. By \cite[Theorem 4]{nelson}, $\afm$ has an intrinsic description as the completion of $\mathscr{M}$ in the measure topology. In Proposition \ref{prop:mtilde_pos}, we show that this description also accommodates a natural order structure, with the cone given by the closure of the positive cone of $\mathscr{M}^{\mathrm{sa}}$ in the measure topology. Moreover, this intrinsic order structure is compatible with the usual order structure on the space of self-adjoint operators in $\afm$ given by the cone of positive affiliated operators (see Remark \ref{rmrk:maff_pos}). Although this exercise is of interest on its own, our main goal is to utilize such a description to answer the following question: Is $M_n(\afm) \cong M_n(\mathscr{M})_{\textrm{aff}}$ in a suitable sense?\footnote{Note that $M_n(\mathscr{M})$ is a $II_1$ factor.} In Theorem \ref{thm:main_iso}, we answer this question affirmatively by showing that  $M_n(\afm)$ and $M_n(\mathscr{M})_{\textrm{aff}}$ are isomorphic as unital ordered complex topological $*$-algebras, with the isomorphism extending the identity mapping of $M_n(\mathscr{M}) \to M_n(\mathscr{M})$.\footnote{By an ordered complex $*$-algebra, we mean a complex $*$-algebra whose Hermitian elements form an ordered real vector space.}

In Theorem \ref{thm:top_comp}, we collect several results concerning the measure topology, juxtaposing it  with the norm topology and the ultraweak topology. For instance, we note that the measure topology on $\mathscr{M}$ is finer than the norm topology but, in general, it may be neither coarser nor finer than the ultraweak topology. In Theorem \ref{thm:top_comp}, (iii),  we note that the trace $\tau$ is not continuous in the measure topology on $\mathscr{M}$ and thus, does not (measure) continuously extend to a trace on $\afm$. This is not entirely surprising since if $\tau$ had such an extension, it would  contradict the previously discussed result by Kadison, Liu, and Thom in \cite{kadison-liu-thom}. In light of the above observation, it is worth mentioning that the restriction of the trace $\tau$ to the unit ball of $\mathscr{M}$ is continuous in the measure topology.

Let $\Delta, \Delta_n$ denote the Fuglede-Kadison determinant of $\mathscr{M}, M_n(\mathscr{M})$, respectively. Let $\mathscr{M}_{\Delta}$ be the set of operators $T$ in $\afm$ satisfying $\tau(\log ^{+} |T|) < \infty$, where $\log ^{+} := \max \{ 0, \log \}$ on $\R _{+}$.\ In other words, $\mathscr{M}_{\Delta}$ consists of those operators in $\afm$ whose Fuglede-Kadison determinant is bounded. Haagerup and Schultz showed that $\mathscr{M}_{\Delta}$ is a $*$-subalgebra of $\afm$ (see \cite[Proposition 2.5-2.6]{haagerup-schultz}), and studied the notion of Brown measure for operators in $\mathscr{M}_{\Delta}$.  The space $\mathscr{M}_{\Delta}$ is quite rich, containing the non-commutative $L^p$-spaces, $L^p(\mathscr{M}; \tau)$, for $0 < p \le \infty$. In Proposition \ref{prop:doug_lem}, we prove a version of the Douglas factorization lemma (cf.\ \cite{douglas-factor}) in the context of Murray-von Neumann algebras. With the help of this generalized Douglas lemma, in Theorem \ref{thm:main_iso_delta}, we show that $M_n(\mathscr{M}_{\Delta}) = M_n(\mathscr{M})_{\Delta _n}$, viewing both $*$-algebras as $*$-subalgebras of $M_n(\mathscr{M})_{\textrm{aff}}$. Together with Theorem \ref{thm:main_iso}, this result enables the use of matrix algebraic techniques such as rank identities and determinant identities in $\afm$ and $\mathscr{M}_{\Delta}$, respectively. In \S \ref{sec:heisenberg}, we apply the results to provide some necessary conditions for pairs of operators in $\afm$ satisfying the Heisenberg relation. In Theorem \ref{thm:bmeas_ab_ba},(ii), with the aid of Sylvester's determinant identity, we show that for operators $A, B$ in $\mathscr{M}_{\Delta}$, the Brown measures of $A \afmprod B$ and $B \afmprod A$ are identical. Although a stronger result may be inferred from Theorem A. 9 in \cite{brown-lidskii}, our novel approach serves as a proof-of-principle for the application of matrix identities to unbounded operators.  As a corollary (Corollary \ref{cor:heisenberg_lp}), we note that if $Q \afmprod P \afmdiff P \afmprod Q = \iu  I$ for operators $P, Q$ in $\afm$, then at least one of $P, Q$ does not belong to $\mathscr{M}_{\Delta}$ and {\it a fortiori}, does not belong to $L^p(\mathscr{M}; \tau)$ for any $0 < p \le \infty$. 

For an operator $T \in \afm$, we may define its rank by $\fr(T) := \tau(\mathcal{R}(T)) \in [0, 1]$, where $\mathcal{R}(T)$ denotes the range projection of $T$, which is a projection in $\mathscr{M}$. One of the key properties of the rank functional is that for an invertible operator $S$ in $\afm$, $\fr(S \afmprod T) = \fr(T \afmprod S) = \fr(T)$ for all $T \in \afm$. In Corollary \ref{cor:heisenberg_drange}, using a matrix identity we show that, if $Q \afmprod P \afmdiff P \afmprod Q = \iu I$ for operators $P, Q$ in $\afm$, then for all $\lambda \in \mathbb{C}$ the operators $P - \lambda I$ and $Q - \lambda I$ are invertible in $\afm$, that is, the respective point spectrums of $P$ and $Q$ must be empty. Hence the Heisenberg-von Neumann puzzle may be recasted in the following equivalent manner: Are there invertible operators $P, A$ in $\mathscr{M}_{\textrm{aff}}$ such that $$P^{-1} \afmprod A \afmprod P = I \afmsum A?$$
This suggests that any strategy towards the resolution of the Heisenberg-von Neumann puzzle must involve conjugacy invariants of operators in $\mathscr{M}_{\textrm{aff}}$ in an essential way.

\subsection{Notation and Terminology}
Throughout this article, $\mathscr{H}$ denotes a Hilbert space over the complex numbers (usually infinite-dimensional, though not necessarily separable). For a positive integer $n$, the Hilbert space $\oplus_{i=1}^n \mathscr{H}$ is denoted by $\mathscr{H}^{(n)}$. A bounded operator $A : \mathscr{H} \to \mathscr{H}$ is said to be a {\it contraction} if $\|A\| \le 1$. We use $\mathscr{R}$ to denote a von Neumann algebra, and $\mathscr{M}$ to denote a $II_1$ factor. The unique faithful normal tracial state on $\mathscr{M}$ is denoted by $\tau$. The normalized dimension function for projections in $\mathscr{M}$ is denoted by $\mathrm{dim}_c(\cdot).$ A complex $*$-algebra $\mathfrak{A}$ is said to be {\it ordered} if the Hermitian elements in $\mathfrak{A}$ form an ordered real vector space. For an ordered complex $*$-algebra $\mathfrak{A}$ (such as von Neumann algebras, Murray-von Neumann algebras, etc.), we denote the set of self-adjoint elements in $\mathfrak{A}$ by $\mathfrak{A}^{\mathrm{sa}}$, and the positive cone of $\mathfrak{A}^{\mathrm{sa}}$ by $\mathfrak{A}^{+}$. For a matrix $\mathbf{T}$ in $M_n(\mathfrak{A})$, we denote the matrix adjoint of $\mathbf{T}$ by $\mathbf{T}^{\dagger}$. The identity operator in $\mathfrak{A}$ is denoted by $I$ and the identity matrix of $M_n(\mathfrak{A})$ is denoted by $\textbf{I}_n$. We denote a net of operators by $\{ T_{\alpha} \}$ suppressing the indexing set of $\alpha$ (denoted by $\Lambda$) when it is clear from the context. The general references used are \cite{kadison-ringrose1}, \cite{kadison-ringrose2}.

\subsection{Acknowledgments}
This article is dedicated to my former advisor, Richard V. Kadison, who passed way in August 2018. In the last phase of his (mathematical) life, the topic of Murray-von Neumann algebras was dear to his heart. His vision of the field and encouraging words in regards to a preliminary version of the ideas presented herein serve as an inspiration for this work. It is a pleasure to thank Amudhan Krishnaswamy-Usha for helpful discussions on the Brown measure at ECOAS 2018, and Konrad Schrempf for ongoing discussions on free associative algebras over fields. I am also grateful to Zhe Liu for valuable feedback regarding an early draft of the article.

\section{Murray-von Neumann algebras}

\subsection{Unbounded Operators on a Hilbert space}

In this subsection, we provide a brief overview of the basic concepts and results in the theory of unbounded operators that are directly relevant to our discussion. A concise account can be found in \cite[\S 4]{kadison-liu}. For a more thorough account, the interested reader may refer to \S 2.7, \S 5.6 in \cite{kadison-ringrose1}, or Chapter VIII in \cite{simon-reed}.

Let $\mathscr{H}$ be a Hilbert space and let $T$ be a linear mapping from a linear submanifold $\mathscr{D}(T)$ of $\mathscr{H}$ (not necessarily closed), called the {\it domain} of $T$,  into $\mathscr{H}$. The linear submanifold $\mathscr{G}(T) := \{ (x, Tx) : x \in \mathscr{D}(T) \}$ of $\mathscr{H} \oplus \mathscr{H}$ is said to be the {\it graph} of $T$. We say that $T$ is {\it closed} if $\mathscr{G}(T)$ is a closed linear submanifold of $\mathscr{H} \oplus \mathscr{H}$. From the Closed Graph Theorem, if $T$ is closed and $ \mathscr{D}(T) = \mathscr{H}$, then $T$ is bounded. The property of being closed often serves as a replacement for continuity in the study of unbounded operators.  We are usually interested in operators $T$ that are {\it densely defined}, that is, $\mathscr{D}(T)$ is dense in $\mathscr{H}$. An operator $T_0$ is said to be an {\it extension} of $T$ (denoted $T \subseteq T_0$), if $\mathscr{D}(T) \subseteq \mathscr{D}(T_0)$ and $Tx = T_0 x$ for $x \in \mathscr{D}(T)$. If the closure of $\mathscr{G}(T)$ in $\mathscr{H} \oplus \mathscr{H}$ is the graph of an operator $\overline{T}$, then $T$ is said to be {\it pre-closed} or {\it closable} with closure $\overline{T}$. For a closed operator $T$, a linear submanifold $\mathscr{D}_0$ of $\mathscr{D}(T)$ is said to be a {\it core} for $T$ if $\mathscr{G}(T|_{\mathscr{D}_0})^{-} = \mathscr{G}(T)$; the operator $T$ maps a core onto a dense linear submanifold of its range.

The sum of unbounded operators $S, T$ on $\mathscr{H}$ is defined as the operator $S+T$ with $\mathscr{D}(S+T) = \mathscr{D}(S) \cap \mathscr{D}(T)$, and $(S+T)x = Sx + Tx$  for $x \in \mathscr{D}(S+T)$. The product of $S, T$ is defined as the operator $ST$ with $\mathscr{D}(ST) = \{ x \in \mathscr{H} : x \in \mathscr{D}(T) \textrm{ and } Tx \in \mathscr{D}(S) \}$, and $(ST)x = S(Tx)$ for $x \in \mathscr{D}(ST).$ The adjoint of a densely-defined operator $T$ is defined as the operator $T^*$ with $\mathscr{D}(T^*) = \{ x \in \mathscr{H} : y \in \mathscr{D}(T) \mapsto \langle Ty, x \rangle \textrm{ is a bounded functional} \}$ and for $x \in \mathscr{D}(T^*)$, $T^*x = z$ where $\langle Ty, x \rangle = \langle y, z \rangle$ for all $y \in \mathscr{D}(T)$ (such a $z$ exists by the Riesz representation theorem.) If $T = T^*$, we say that $T$ is self-adjoint. A self-adjoint operator is automatically closed.

\begin{definition}
A closed densely-defined operator $T$ is said to be {\it affiliated} with a von Neumann algebra $\mathscr{R}$ if $U^*TU = T$ for each unitary operator $U$ in $\mathscr{R}'$, the commutant of $\mathscr{R}$. We write this as $T \eta \mathscr{R}$. Note that the equality $U^*TU = T$ carries the information that $U$ transforms the domain $\mathscr{D}(T)$ onto itself. We denote the set of closed densely-defined operators affiliated with $\mathscr{R}$ by $\mathscr{R}_{\textrm{aff}}$.
\end{definition}

\begin{definition}
A self-adjoint operator $H$ is said to be {\it positive} if $0 \le \langle Hx, x \rangle$ for all $x \in \mathscr{D}(H)$. 
\end{definition}

\begin{definition}
Let $T$ be a closed densely-defined operator affiliated with a von Neumann algebra $\mathscr{R}$.  The projection onto the closure of the range of $T$ is said to be the range projection of $T$, and denoted by $\mathcal{R}(T)$. The projection onto the null space of $T$ is denoted by $\mathcal{N}(T)$.

The range projection of $T$ is the smallest projection in $\mathscr{R}$ amongst all projections $E$ in $\mathscr{R}$ satisfying $ET = T$.
\end{definition}

\begin{prop}[see {\cite[Proposition 4.7, 6.5]{kadison-liu}}]
\label{prop:range_proj}
\textsl{
Let $\mathscr{R}$ be a von Neumann algebra acting on a Hilbert space $\mathscr{H}$, and $\sim$ denote the Murray-von Neumann equivalence relation (relative to $\mathscr{R}$) on the set of projections in $\mathscr{R}$. If $T$ is a closed densely-defined operator affiliated with $\mathscr{R}$, then:
\begin{itemize}
\item[(i)] $\mathcal{R}(T) = I - \mathcal{N}(T^*)$ and $\mathcal{N}(T) = I - \mathcal{R}(T^*)$;
\item[(ii)] $\mathcal{R}(T) = \mathcal{R}(TT^*)$ and $\mathcal{N}(T) = \mathcal{N}(T^*T)$;
\item[(iii)] $\mathcal{R}(T)$ and $\mathcal{N}(T)$ are in $\mathscr{R}$;
\item[(iv)] $\mathcal{R}(T) \sim \mathcal{R}(T^*)$ relative to $\mathscr{R}$.
\end{itemize}
}
\end{prop}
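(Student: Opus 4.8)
The plan is to reduce the whole proposition to two classical facts about a closed densely-defined operator $T$ on $\mathscr{H}$: the orthogonality relation $(\operatorname{ran} T)^{\perp} = \ker T^{*}$, and von Neumann's theorem that $T^{**} = T$ while $T^{*}T$ and $TT^{*}$ are positive self-adjoint operators. For (i), I would first observe that a vector $x$ is orthogonal to $\operatorname{ran} T$ precisely when $\langle Ty, x\rangle = 0$ for every $y \in \mathscr{D}(T)$, which by the definition of the adjoint means $x \in \mathscr{D}(T^{*})$ and $T^{*}x = 0$. Hence $(\operatorname{ran} T)^{\perp} = \ker T^{*}$, and passing to closures and orthocomplements gives $\overline{\operatorname{ran} T} = (\ker T^{*})^{\perp}$; in the language of projections this is exactly $\mathcal{R}(T) = I - \mathcal{N}(T^{*})$. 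Applying the same identity with $T^{*}$ in place of $T$ and using $T^{**} = T$ (valid because $T$ is closed and densely defined) yields $\mathcal{R}(T^{*}) = I - \mathcal{N}(T)$, which is the second half of (i).

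For (ii) I would compute kernels directly. If $Tx = 0$ then $Tx = 0 \in \mathscr{D}(T^{*})$ and $T^{*}Tx = 0$, so $\ker T \subseteq \ker(T^{*}T)$; conversely, for $x \in \mathscr{D}(T^{*}T)$ with $T^{*}Tx = 0$ the identity $\|Tx\|^{2} = \langle T^{*}Tx, x\rangle = 0$ forces $Tx = 0$. Thus $\ker T = \ker(T^{*}T)$, i.e. $\mathcal{N}(T) = \mathcal{N}(T^{*}T)$, von Neumann's theorem ensuring that $T^{*}T$ is a densely-defined self-adjoint operator (affiliated with $\mathscr{R}$), so that $\mathcal{N}(T^{*}T)$ is meaningful. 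Replacing $T$ by $T^{*}$ gives $\mathcal{N}(T^{*}) = \mathcal{N}(TT^{*})$, and since $TT^{*}$ is self-adjoint, (i) applied to it reads $\mathcal{R}(TT^{*}) = I - \mathcal{N}(TT^{*})$. Combining these, $\mathcal{R}(T) = I - \mathcal{N}(T^{*}) = I - \mathcal{N}(TT^{*}) = \mathcal{R}(TT^{*})$.

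For (iii) I would use the affiliation relation directly. First note that $T \eta \mathscr{R}$ gives $T^{*} \eta \mathscr{R}$, since $U^{*}T^{*}U = (U^{*}TU)^{*} = T^{*}$ for every unitary $U \in \mathscr{R}'$. For such $U$ we have $UT = TU$ and $U\mathscr{D}(T) = \mathscr{D}(T)$, so $U$ carries $\ker T$ onto $\ker T$ and $\operatorname{ran} T$ onto $\operatorname{ran} T$; being bounded and invertible it leaves $\overline{\operatorname{ran} T}$ invariant as well. Hence $\mathcal{N}(T)$ and $\mathcal{R}(T)$ commute with every unitary of $\mathscr{R}'$, and therefore with all of $\mathscr{R}'$, so they lie in $(\mathscr{R}')' = \mathscr{R}$.

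For (iv) I would invoke the polar decomposition $T = V|T|$ of an affiliated operator, in which $V \in \mathscr{R}$ is a partial isometry with initial space $\overline{\operatorname{ran}|T|} = (\ker T)^{\perp}$ and final space $\overline{\operatorname{ran} T}$. Then by (i) the initial projection is $V^{*}V = I - \mathcal{N}(T) = \mathcal{R}(T^{*})$ and the final projection is $VV^{*} = \mathcal{R}(T)$, and since $V \in \mathscr{R}$ this is precisely the assertion $\mathcal{R}(T^{*}) \sim \mathcal{R}(T)$ relative to $\mathscr{R}$. The only genuinely non-formal inputs are thus the polar decomposition with partial isometry inside $\mathscr{R}$ together with von Neumann's self-adjointness theorem for $T^{*}T$ and $TT^{*}$; I expect the point requiring the most care to be bookkeeping rather than any deep difficulty, namely checking at each stage that $T^{*}$, $T^{*}T$, and $TT^{*}$ are genuinely affiliated with $\mathscr{R}$ so that their range and null projections are defined, and that the domain subtleties underlying the kernel computation in (ii) are handled correctly.
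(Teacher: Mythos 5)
The paper states this proposition as a quoted result from Kadison--Liu and supplies no proof of its own, so there is nothing internal to compare against; your argument is correct and is essentially the standard proof given there, resting on $(\operatorname{ran} T)^{\perp} = \ker T^{*}$, von Neumann's theorem that $T^{*}T$ and $TT^{*}$ are self-adjoint with $T^{**}=T$, commutation with the unitaries of $\mathscr{R}'$ plus the double commutant theorem, and the polar decomposition with partial isometry in $\mathscr{R}$. The domain bookkeeping you flag is handled correctly (e.g.\ $Tx = 0 \in \mathscr{D}(T^{*})$ gives $\ker T \subseteq \ker(T^{*}T)$, and $\|Tx\|^{2} = \langle T^{*}Tx, x\rangle$ is legitimate for $x \in \mathscr{D}(T^{*}T)$), so no gaps remain.
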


\subsection{Murray-von Neumann algebras}

In this subsection, $\mathscr{R}$ denotes a finite von Neumann algebra acting on the Hilbert space $\mathscr{H}$. In \cite[Theorem XV]{vN-rings}, Murray and von Neumann observed that when $\mathscr{R}$ is a finite factor, $\mathscr{R}_{\textrm{aff}}$ may be endowed with the structure of a $*$-algebra. When $\mathscr{R}$ is countably decomposable (and thus, possesses a faithful normal tracial state), it follows from the work of Nelson (cf.\ \cite[Theorem 4]{nelson}) that a similar conclusion holds.

\begin{prop}[see {\cite[Proposition 6.8]{kadison-liu}}]
\label{prop:fund_mva}
\textsl{
Let $\mathscr{R}$ be a finite von Neumann algebra acting on the Hilbert space $\mathscr{H}$. For operators $A, B$ in $\mathscr{R}_{\textrm{aff}}$, we have:
\begin{itemize}
\item[(i)] $A+B$ is densely defined, preclosed and has a unique closed extension $A \afmsum B$ in $\mathscr{R}_{\textrm{aff}}$;
\item[(ii)] $AB$ is densely defined, preclosed and has a unique closed extension $A \afmprod B$ in $\mathscr{R}_{\textrm{aff}}$.
\end{itemize}
}
\end{prop}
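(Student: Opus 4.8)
The plan is to reduce every clause of the statement to a single substantive fact: that the relevant domain is dense. Once density of $\mathscr{D}(A+B)=\mathscr{D}(A)\cap\mathscr{D}(B)$ and of $\mathscr{D}(AB)$ is established, preclosedness follows by an adjoint argument, affiliation of the closure follows by a soft commutation argument, and uniqueness follows from a core argument. The organizing tool is a good supply of projections in $\mathscr{R}$ whose ranges lie inside these domains. Concretely, for $A\eta\mathscr{R}$ with polar decomposition $A=V|A|$ and spectral resolution of $|A|$, I set $E_m:=\chi_{[0,m]}(|A|)\in\mathscr{R}$, so that $E_m\uparrow I$ strongly, $E_m\mathscr{H}\subseteq\mathscr{D}(A)$, and $AE_m\in\mathscr{R}$ is bounded. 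Writing $\Phi$ for the faithful normal center-valued trace on the finite algebra $\mathscr{R}$, normality gives $\Phi(I-E_m)\downarrow 0$. I do the same for $B$, producing $F_n\uparrow I$ with $\Phi(I-F_n)\downarrow 0$. Throughout I use that adjoints of affiliated operators are affiliated (take adjoints in $U^{*}TU=T$), the Kaplansky identity $I-E\wedge F=(I-E)\vee(I-F)$, subadditivity $\Phi(P\vee Q)\le\Phi(P)+\Phi(Q)$, invariance $\Phi(E)=\Phi(F)$ when $E\sim F$, and faithfulness.

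The technical heart is density. For the sum, put $G_k:=E_k\wedge F_k$; then $G_k\mathscr{H}\subseteq\mathscr{D}(A)\cap\mathscr{D}(B)$, and $\Phi(I-G_k)\le\Phi(I-E_k)+\Phi(I-F_k)\to 0$, so $\Phi(I-\bigvee_k G_k)=0$ and hence $\bigvee_k G_k=I$ by faithfulness; thus $\mathscr{D}(A)\cap\mathscr{D}(B)$ is dense. For the product the argument is more delicate. For $x\in F_n\mathscr{H}$ one has $x\in\mathscr{D}(B)$ and $Bx=BF_nx$, and I want in addition $Bx\in\mathscr{D}(A)$, for which it suffices that $E_m^{\perp}BF_nx=0$. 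Since $BF_n\in\mathscr{R}$ is bounded, $T_{m,n}:=E_m^{\perp}BF_n\in\mathscr{R}$, its null projection $N_{m,n}:=\mathcal{N}(T_{m,n})$ lies in $\mathscr{R}$, and $x\in(F_n\wedge N_{m,n})\mathscr{H}$ implies $x\in\mathscr{D}(AB)$. The range of $T_{m,n}$ lies in $E_m^{\perp}\mathscr{H}$, so $\mathcal{R}(T_{m,n})\le E_m^{\perp}$; combining $I-N_{m,n}=\mathcal{R}(T_{m,n}^{*})$ and $\mathcal{R}(T_{m,n}^{*})\sim\mathcal{R}(T_{m,n})$ from Proposition \ref{prop:range_proj} with trace-invariance under equivalence gives $\Phi(I-N_{m,n})\le\Phi(E_m^{\perp})=\Phi(I-E_m)\to 0$. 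Setting $P_k:=F_k\wedge N_{k,k}$, I get $P_k\mathscr{H}\subseteq\mathscr{D}(AB)$ and $\Phi(I-P_k)\le\Phi(I-F_k)+\Phi(I-N_{k,k})\to 0$, whence $\bigvee_k P_k=I$ and $\mathscr{D}(AB)$ is dense.

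Preclosedness now follows formally: since $(A+B)^{*}\supseteq A^{*}+B^{*}$ and $(AB)^{*}\supseteq B^{*}A^{*}$, applying the two density results to the affiliated operators $A^{*},B^{*}$ shows that $(A+B)^{*}$ and $(AB)^{*}$ are densely defined, so $A+B$ and $AB$ are preclosable and I may define $A\afmsum B:=\overline{A+B}$ and $A\afmprod B:=\overline{AB}$. For affiliation, let $U\in\mathscr{R}'$ be unitary. Affiliation of $A,B$ means $U$ preserves $\mathscr{D}(A),\mathscr{D}(B)$ and $AU=UA$, $BU=UB$ on those domains; a direct check then shows $U$ preserves $\mathscr{D}(A)\cap\mathscr{D}(B)$ and $\mathscr{D}(AB)$ and commutes with $A+B$ and $AB$ there. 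Since conjugation by $U\oplus U$ is a graph-preserving homeomorphism of $\mathscr{H}\oplus\mathscr{H}$, it fixes the closures, giving $A\afmsum B,\,A\afmprod B\,\eta\,\mathscr{R}$. For uniqueness, any closed $T\eta\mathscr{R}$ extending $A+B$ (respectively $AB$) contains the closure; the dense $\mathscr{R}'$-invariant core supplied by $\{G_k\mathscr{H}\}$ (respectively $\{P_k\mathscr{H}\}$) is, by the standard reducing-projection argument in finite algebras, a core for every affiliated closed extension, which forces equality and hence uniqueness of $A\afmsum B$ and $A\afmprod B$ in $\mathscr{R}_{\textrm{aff}}$.

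The main obstacle is the density of $\mathscr{D}(AB)$, and this is the one place where finiteness of $\mathscr{R}$ is genuinely indispensable. Controlling $\Phi(I-N_{m,n})$ relies on \emph{both} the equivalence $\mathcal{R}(T)\sim\mathcal{R}(T^{*})$ and the faithful trace that converts ``small co-trace'' into ``join equals $I$''; in a properly infinite algebra these estimates break down and the product domain can fail to be dense. The sum is comparatively routine once the truncation projections are in hand, and everything after density is soft, so I expect essentially all of the difficulty to be concentrated in the product-density step.
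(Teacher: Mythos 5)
The paper does not actually prove this proposition---it is quoted from Kadison--Liu (their Proposition 6.8)---so there is no internal proof to compare against; your argument is essentially the standard one from that source and from Murray--von Neumann. The skeleton matches: truncate by spectral projections $E_m$ of $|A|$ and $F_n$ of $|B|$, show the ``good'' projections ($E_k\wedge F_k$ for the sum, $F_k\wedge\mathcal{N}((I-E_k)BF_k)$ for the product) have join $I$, deduce density of $\mathscr{D}(A+B)$ and $\mathscr{D}(AB)$, get preclosedness from $(A+B)^*\supseteq A^*+B^*$ and $(AB)^*\supseteq B^*A^*$ applied to the affiliated operators $A^*,B^*$, and affiliation of the closures by conjugating graphs with $U\oplus U$. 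Your use of $\mathcal{R}(T)\sim\mathcal{R}(T^*)$ to bound $\Phi(I-N_{m,n})$ by $\Phi(I-E_m)$ is exactly the right finiteness input for the product step. The only real difference is cosmetic: you channel the ``co-smallness'' bookkeeping through the center-valued trace, whereas Kadison--Liu phrase the same step as the lattice fact that $E_n\wedge F_n\uparrow I$ when $E_n\uparrow I$ and $F_n\uparrow I$ in a finite algebra (their Proposition 6.3, reused in this paper's Proposition \ref{prop:inc_net_conv}); the two are interchangeable, and note your join argument does not even need the $P_k$ to be increasing, since $I-\bigvee_kP_k\le I-P_k$ for every $k$. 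The one place you owe more detail is uniqueness: the claim that a dense $\mathscr{R}'$-invariant submanifold of the domain is a core for \emph{every} closed affiliated extension is precisely the maximality lemma that a closed densely-defined operator affiliated with a finite von Neumann algebra admits no proper closed affiliated extension. That lemma is true and standard (the graph projections $Q_S\le Q_T$ lie in $M_2(\mathscr{R})$, each is equivalent to $I\oplus 0$, and finiteness forces $Q_S=Q_T$), but it is where finiteness enters a second time and deserves at least a sentence, since without it uniqueness genuinely fails---a symmetric operator on $\mathscr{H}$ can have many distinct closed extensions.
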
 

 In \cite[Proposition 6.9 - 6.12]{kadison-liu}, Kadison and Liu showed that for a general finite von Neumann algebra $\mathscr{R}$, the set of affiliated operators $\mathscr{R}_{\textrm{aff}}$ is a $*$-algebra with $\afmsum$ as addition, $\afmprod$ as multiplication, and $T \mapsto T^*$ as the involution. This was accomplished by carefully studying and tracking the domains of the operators under consideration to prove the associative, distributive and involutive laws involving $\afmsum, \afmprod$, and $(\cdot)^*$.   
 
Although it may be tempting to replace the symbols $\afmsum, \afmprod$ with $+, \cdot$ once the algebraic structure of $\mathscr{R}_{\textrm{aff}}$ has been established, we refrain from doing so in this article as $+, \cdot$ already have pre-defined meanings for unbounded operators. For a bounded operator $B$ and a closed densely-defined operator $T$ in $\mathscr{R}_{\textrm{aff}}$, it is straightforward to see that $BT = B \afmprod T$ and $B + T = B \afmsum T$, but $TB$ is not necessarily equal to $T \afmprod B$.

\begin{definition}
For a finite von Neumann algebra $\mathscr{R}$, the $*$-algebra of affiliated operators $\mathscr{R}_{\textrm{aff}}$ is called the {\it Murray-von Neumann algebra} associated with $\mathscr{R}$. 
\end{definition}

The set of positive operators in $\mathscr{R}_{\textrm{aff}}$ is a cone and with this positive cone, $\mathscr{R}_{\textrm{aff}}$ may be viewed as an ordered $*$-algebra. In other words, for self-adjoint operators $A, B \in \mathscr{R}_{\textrm{aff}}$, we say that $A \le B$ if $B \afmdiff A$ is a positive operator.

\begin{definition}
\label{def:rank}
Let $\mathscr{R}$ be a finite von Neumann algebra with center $\mathscr{C}$ and let $\tau$ denote the center-valued trace. For a closed densely-defined operator $T$ in $\mathscr{R}_{\textrm{aff}}$, the $\mathscr{C}$-valued {\it rank} of $T$ is defined as $\fr (T) := \tau(\mathcal{R}(T)).$ In other words, the rank of $T$ is the $\mathscr{C}$-valued dimension of the range projection of $T$.
\end{definition}

\begin{prop}
\label{prop:rank_prop}
\textsl{
Let $\mathscr{R}$ be a finite von Neumann algebra acting on the Hilbert space $\mathscr{H}$ and $S, T$ be operators in $\mathscr{R}_{\textrm{aff}}$ such that $S$ is invertible in $\mathscr{R}_{\textrm{aff}}$. Then $\fr (S \afmprod T) = \fr (T \afmprod S) = \fr (T).$
} 
\end{prop}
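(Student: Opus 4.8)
The plan is to express the rank entirely through range projections and to exploit the fact that the center-valued trace $\tau$ is constant on Murray--von Neumann equivalence classes of projections. Concretely, I would reduce both equalities to a single observation, that for an invertible $S \in \mathscr{R}_{\textrm{aff}}$ and arbitrary $T \in \mathscr{R}_{\textrm{aff}}$ one has $\mathcal{R}(T \afmprod S) = \mathcal{R}(T)$, together with the symmetry $\fr(A) = \fr(A^*)$.

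First I would treat right multiplication. Using the characterization of $\mathcal{R}(T)$ as the smallest projection $E \in \mathscr{R}$ with $ET = T$ (equivalently $E \afmprod T = T$, since $E$ is bounded), note that $\mathcal{R}(T) \afmprod (T \afmprod S) = (\mathcal{R}(T) \afmprod T) \afmprod S = T \afmprod S$ by associativity in $\mathscr{R}_{\textrm{aff}}$, whence $\mathcal{R}(T \afmprod S) \le \mathcal{R}(T)$. For the reverse inequality, invertibility of $S$ gives $T = (T \afmprod S) \afmprod S^{-1}$, so that $\mathcal{R}(T \afmprod S) \afmprod T = \big(\mathcal{R}(T \afmprod S) \afmprod (T \afmprod S)\big) \afmprod S^{-1} = (T \afmprod S) \afmprod S^{-1} = T$, giving $\mathcal{R}(T) \le \mathcal{R}(T \afmprod S)$. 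Hence $\mathcal{R}(T \afmprod S) = \mathcal{R}(T)$ and therefore $\fr(T \afmprod S) = \tau(\mathcal{R}(T \afmprod S)) = \tau(\mathcal{R}(T)) = \fr(T)$.

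Next I would record the symmetry $\fr(A) = \fr(A^*)$ for every $A \in \mathscr{R}_{\textrm{aff}}$. This follows at once from Proposition \ref{prop:range_proj}(iv), which yields $\mathcal{R}(A) \sim \mathcal{R}(A^*)$ relative to $\mathscr{R}$, combined with the invariance of the center-valued trace under Murray--von Neumann equivalence, so that $\tau(\mathcal{R}(A)) = \tau(\mathcal{R}(A^*))$.

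Finally, the left-multiplication case follows by an adjoint detour, which I expect to be the only genuinely non-automatic point: since left multiplication alters the range, $\mathcal{R}(S \afmprod T)$ and $\mathcal{R}(T)$ need not coincide, so one cannot argue as in the right case directly. Instead, observe that $(S \afmprod T)^* = T^* \afmprod S^*$ and that $S^*$ is invertible with inverse $(S^{-1})^*$. Applying the symmetry and then the right-multiplication identity to $T^*$ and $S^*$ gives $\fr(S \afmprod T) = \fr\big((S \afmprod T)^*\big) = \fr(T^* \afmprod S^*) = \fr(T^*) = \fr(T)$, completing the proof. The main care needed throughout is the bookkeeping ensuring that each formal manipulation ($E \afmprod T = ET$ for bounded $E$, associativity, and $(S \afmprod T)^* = T^* \afmprod S^*$) is legitimate in the $*$-algebra $\mathscr{R}_{\textrm{aff}}$, which is exactly what the structure theory recalled above guarantees.
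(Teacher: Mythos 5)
Your proof is correct and follows essentially the same route as the paper: establish $\mathcal{R}(T \afmprod S) = \mathcal{R}(T)$ for invertible $S$, then handle left multiplication by passing to adjoints and using $\mathcal{R}(A) \sim \mathcal{R}(A^*)$ together with the trace's invariance under Murray--von Neumann equivalence. The only (harmless) difference is that you prove the right-multiplication identity directly from the minimality characterization of the range projection, whereas the paper derives it from the null-projection inclusions $\mathcal{N}(T^*) \le \mathcal{N}(S^* \afmprod T^*) \le \mathcal{N}(T^*)$ and the relation $\mathcal{R}(A) = I - \mathcal{N}(A^*)$.
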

\begin{proof}
Note that the operators $S, S^*$ are invertible in $\mathscr{R}_{\textrm{aff}}$. It is easy to see that $\mathcal{N}(T^*) \le \mathcal{N}(S^* \afmprod T^*) \le \mathcal{N}((S^*)^{-1} \afmprod (S^* \afmprod T^*)) = \mathcal{N}(T^*).$ Thus $\mathcal{N}(S^* \afmprod T^*) = \mathcal{N}(T^*)$. By Proposition \ref{prop:range_proj},(i), we have $\mathcal{R}(T \afmprod S) = \mathcal{R}(T)$ which implies that $ \fr (T \afmprod S) = \fr (T).$ Similarly $\mathcal{R}(T^* \afmprod S^*) = \mathcal{R}(T^*)$. From Proposition \ref{prop:range_proj},(iv), we conclude that $\mathcal{R}(S \afmprod T) \sim \mathcal{R} (T)$ which implies that $\fr (S \afmprod T) = \fr (T).$
\end{proof}

\begin{prop}[see {\cite[Lemma 8.20, Theorem 8.22]{luck}}]
\label{prop:luck}
\textsl{
Let $A$ be an operator in the ring $\afr$. Then the following are equivalent:
\begin{itemize}
    \item[(i)] $A$ is not a left zero-divisor;
    \item[(ii)] $A$ is not a zero-divisor;
    \item[(iii)] $A$ is invertible;
    \item[(iv)] $A$ has dense range;
    \item[(v)] $A$ has trivial nullspace;
    \item[(vi)] $\fr (A) = I$.
\end{itemize}
}
\end{prop}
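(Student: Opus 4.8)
The plan is to prove the six equivalences by first isolating the three ``geometric'' conditions (iv), (v), (vi), showing they coincide, then manufacturing an inverse from them, and finally closing the loop through the (formal) zero-divisor implications. The essential use of finiteness of $\mathscr{R}$ will be concentrated in the passage between dense range and trivial nullspace.

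First I would settle the relations among the range and null projections. Since the center-valued trace $\tau$ is faithful and maps $I$ to $I$, the identity $\fr(A) = \tau(\mathcal{R}(A)) = I$ holds precisely when $\tau(I - \mathcal{R}(A)) = 0$, i.e.\ when $\mathcal{R}(A) = I$; this gives (iv) $\iff$ (vi) at once. For (iv) $\iff$ (v) I would invoke Proposition \ref{prop:range_proj}. Part (i) there gives $\mathcal{N}(A) = I - \mathcal{R}(A^*)$, so $\mathcal{N}(A) = 0 \iff \mathcal{R}(A^*) = I$; part (iv) gives $\mathcal{R}(A) \sim \mathcal{R}(A^*)$ relative to $\mathscr{R}$. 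This is exactly where finiteness enters: in a finite von Neumann algebra every projection Murray--von Neumann equivalent to $I$ must equal $I$, so $\mathcal{R}(A^*) = I \iff \mathcal{R}(A) = I$. Chaining these yields $\mathcal{N}(A) = 0 \iff \mathcal{R}(A) = I$, that is, (v) $\iff$ (iv).

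Next I would show the geometric conditions force invertibility, i.e.\ (iv) and (v) together imply (iii). Assuming $A$ has trivial nullspace and dense range, $A$ is injective, so the set-theoretic inverse $A^{-1}$ is a well-defined linear map on the range of $A$, which is dense since $\mathcal{R}(A) = I$. It is closed because its graph is the image of the closed graph of $A$ under the coordinate flip $(x,y) \mapsto (y,x)$, and it is affiliated with $\mathscr{R}$ because conjugation by a unitary in $\mathscr{R}'$ fixes $A$ and hence fixes $A^{-1}$; thus $A^{-1} \in \afr$. It then remains to verify that $A^{-1}$ is a two-sided inverse with respect to $\afmprod$: the ordinary products $A A^{-1}$ and $A^{-1}A$ are the identity on the range of $A$ and on $\mathscr{D}(A)$ respectively, both dense, so their unique closed extensions are $I$; by Proposition \ref{prop:fund_mva}(ii) these extensions are precisely $A \afmprod A^{-1}$ and $A^{-1} \afmprod A$, whence both equal $I$ and $A$ is invertible.

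Finally I would close the cycle with the zero-divisor conditions. The implications (iii) $\Rightarrow$ (ii) $\Rightarrow$ (i) are formal: if $A$ is invertible and $A \afmprod B = 0$, then $B = A^{-1} \afmprod (A \afmprod B) = 0$ by associativity, and symmetrically for $B \afmprod A = 0$, so (iii) $\Rightarrow$ (ii); and being a left zero-divisor is a special case of being a zero-divisor, so (ii) $\Rightarrow$ (i). The remaining implication (i) $\Rightarrow$ (v) I would prove by contraposition: if $\mathcal{N}(A) \neq 0$, set $B = \mathcal{N}(A)$, a nonzero bounded projection in $\mathscr{R}$; since $B$ maps $\mathscr{H}$ into the null space of $A \subseteq \mathscr{D}(A)$, the ordinary product $AB$ is everywhere-defined and identically zero, so its closure $A \afmprod B = 0$ exhibits $A$ as a left zero-divisor. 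This completes the chain (i) $\Rightarrow$ (v) $\Rightarrow$ (iv),(vi),(iii) $\Rightarrow$ (ii) $\Rightarrow$ (i). The main obstacle is the equivalence (iv) $\iff$ (v): for a general von Neumann algebra these conditions genuinely differ (the phenomenon of one-sided invertibility), and it is only finiteness — encoded in the impossibility of $\mathcal{R}(A) \sim \mathcal{R}(A^*) = I$ with $\mathcal{R}(A) \ne I$ — that forces them together. The secondary technical point is the careful checking that the unbounded set-theoretic inverse is genuinely the $\afmprod$-inverse, the usual ``domain-tracking'' demanded by affiliated operators.
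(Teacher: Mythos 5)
The paper does not prove this proposition at all: it is quoted verbatim from L\"uck (Lemma 8.20 and Theorem 8.22 there), so there is no in-paper argument to compare against. Your proof is correct and self-contained, and it supplies essentially the standard argument behind the cited result. The two places where real content lives are handled properly: the equivalence of dense range and trivial nullspace via $\mathcal{N}(A) = I - \mathcal{R}(A^*)$, $\mathcal{R}(A) \sim \mathcal{R}(A^*)$, and the finiteness of $\mathscr{R}$ (a projection equivalent to $I$ equals $I$); and the construction of the $\afmprod$-inverse from the set-theoretic inverse, where you correctly verify closedness (graph flip), affiliation (a unitary in $\mathscr{R}'$ commuting with $A$ preserves $\mathrm{ran}(A)$ and commutes with $A^{-1}$), and that the ordinary products $AA^{-1}$, $A^{-1}A$ are densely defined restrictions of the identity whose unique closed extensions are $I$, invoking Proposition \ref{prop:fund_mva}. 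The closing cycle through the zero-divisor conditions, with (i) $\Rightarrow$ (v) by contraposition using $B = \mathcal{N}(A)$ (a bounded projection in $\mathscr{R}$ mapping $\mathscr{H}$ into the nullspace, so $AB$ is everywhere defined and zero), is also sound, and the chain (i) $\Rightarrow$ (v) $\Rightarrow$ (iv) $\Leftrightarrow$ (vi), (iv)$+$(v) $\Rightarrow$ (iii) $\Rightarrow$ (ii) $\Rightarrow$ (i) does cover all six conditions. No gaps.
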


\begin{lemma}
\label{lem:dense_dom}
\textsl{
Let $\mathscr{R}$ be a finite von Neumann algebra acting on the Hilbert space $\mathscr{H}$. For operators $A_1, \cdots, A_n$ in $\mathscr{R}_{\textrm{aff}}$, the linear manifold $\bigcap_{i=1}^n \mathscr{D}(A_i)$ is a core for each of $A_1, \cdots, A_n$.
}
\end{lemma}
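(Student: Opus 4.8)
The plan is to show that for each target index $j$ and each $x \in \mathscr{D}(A_j)$ there is a sequence in $\mathscr{D}_0 := \bigcap_{i=1}^n \mathscr{D}(A_i)$ converging to $x$ in the graph norm of $A_j$, which is precisely the assertion that $\mathscr{D}_0$ is a core for $A_j$. First I would fix the polar decomposition $A_i = V_i |A_i|$ with $V_i$ a partial isometry in $\mathscr{R}$ and $|A_i| \,\eta\, \mathscr{R}$, and for each $i$ let $E^{(i)}_m$ denote the spectral projection of $|A_i|$ for the interval $[0,m]$. Then $E^{(i)}_m \in \mathscr{R}$, $E^{(i)}_m \nearrow I$ in the strong operator topology, $E^{(i)}_m \mathscr{H} \subseteq \mathscr{D}(|A_i|) = \mathscr{D}(A_i)$, and $A_i E^{(i)}_m = V_i |A_i| E^{(i)}_m$ is bounded with norm at most $m$. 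For a single operator this already yields graph-norm approximants: for $x \in \mathscr{D}(A_j)$ one has $E^{(j)}_m x \to x$ and, since $E^{(j)}_m$ commutes with $|A_j|$, $A_j E^{(j)}_m x = V_j E^{(j)}_m |A_j| x \to V_j |A_j| x = A_j x$. The difficulty is that $E^{(j)}_m x$ need not lie in $\mathscr{D}(A_i)$ for $i \ne j$, and bounded elements of $\mathscr{R}$ do not in general preserve the domain of an affiliated operator (a unitary $U \in \mathscr{R}$ sends $\mathscr{D}(A_i)$ to $\mathscr{D}(U A_i U^*) \ne \mathscr{D}(A_i)$), so a single round of truncation does not land in $\mathscr{D}_0$.

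The remedy, and the place where finiteness of $\mathscr{R}$ is essential, is to cut down simultaneously by a meet of projections and to exploit the order-continuity of the center-valued trace $\tau$ on the projection lattice. Fixing the target $j$, set $E_m := E^{(j)}_m$ and $G_b := \bigwedge_{i \ne j} E^{(i)}_b \in \mathscr{R}$. Because $E_m \wedge G_b \le E_m$ and $E_m \wedge G_b \le E^{(i)}_b$ for every $i \ne j$, the range of $E_m \wedge G_b$ is contained in $\mathscr{D}(A_j) \cap \bigcap_{i \ne j}\mathscr{D}(A_i) = \mathscr{D}_0$, and on this range $A_j$ acts as the bounded operator $A_j E_m$ of norm at most $m$. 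The valuation identity $\tau(P \vee Q) + \tau(P \wedge Q) = \tau(P) + \tau(Q)$, a consequence of Kaplansky's parallelogram law together with the invariance of $\tau$ under Murray-von Neumann equivalence, gives by De Morgan's law and induction $\tau(I - G_b) = \tau\bigl(\bigvee_{i \ne j}(I - E^{(i)}_b)\bigr) \le \sum_{i \ne j} \tau(I - E^{(i)}_b)$, which tends to $0$ by normality of $\tau$ since each $E^{(i)}_b \nearrow I$; the same identity yields $\tau(E_m) - \tau(E_m \wedge G_b) \le \tau(I - G_b)$. As $G_b$ and $E_m \wedge G_b$ are increasing in $b$, faithfulness and normality of $\tau$ then force $G_b \nearrow I$ and, for each fixed $m$, $E_m \wedge G_b \nearrow E_m$ in the strong operator topology as $b \to \infty$.

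With these two monotone limits I would finish by a diagonal approximation. Given $x \in \mathscr{D}(A_j)$ and $\varepsilon > 0$, first choose $m$ so large that $\|E_m x - x\| < \varepsilon$ and $\|A_j E_m x - A_j x\| < \varepsilon$, using the single-operator property above; then set $w_b := (E_m \wedge G_b) E_m x \in \mathscr{D}_0$. As $b \to \infty$ the strong convergence $E_m \wedge G_b \to E_m$ gives $w_b \to E_m x$, and since $w_b$ lies in the range of $E_m$ we may write $A_j w_b = (A_j E_m) w_b$ with $A_j E_m$ bounded, whence $A_j w_b \to (A_j E_m) E_m x = A_j E_m x$. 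Choosing $b$ large makes both $\|w_b - E_m x\|$ and $\|A_j w_b - A_j E_m x\|$ less than $\varepsilon$, so $w_b \in \mathscr{D}_0$ approximates $x$ within $2\varepsilon$ in the graph norm of $A_j$. As $\varepsilon$ and $j$ were arbitrary, $\mathscr{D}_0$ is a core for each $A_i$. The main obstacle is exactly the second step: the naive truncations ignore the common domain and bounded elements of $\mathscr{R}$ fail to preserve the individual domains, and it is only the order-continuity of the trace in the finite algebra, forcing $E_m \wedge G_b \nearrow E_m$ in spite of the non-commutativity of the projections involved, that lets the simultaneous cut-down converge back to $E_m$.
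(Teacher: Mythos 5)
Your argument is correct, but it takes a genuinely different route from the paper's. The paper leans entirely on the already-established algebraic structure of $\mathscr{R}_{\textrm{aff}}$: for operators $T_1, \dots, T_m$ in $\mathscr{R}_{\textrm{aff}}$ the ordinary sum $T_1 + \cdots + T_m$, whose domain is $\bigcap_{i}\mathscr{D}(T_i)$, is densely defined and preclosed with closure $T_1 \afmsum \cdots \afmsum T_m$ (Kadison--Liu, Proposition 6.8), so the common domain is automatically a core for that closure; one then applies this to the telescoping list $T_1 = A_1, T_2 = A_2, \dots, T_n = A_n, T_{n+1} = -A_n, \dots, T_{2n-1} = -A_2$, so that the closed sum collapses to $A_1$ while the intersection of the $2n-1$ domains is exactly $\bigcap_{i=1}^n \mathscr{D}(A_i)$. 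That is a three-line argument, but it delegates all of the analysis to the cited proposition. You instead build the graph-norm approximants explicitly, via spectral truncation, meets of projections, and the identity $\tau(P \vee Q) + \tau(P \wedge Q) = \tau(P) + \tau(Q)$ for the center-valued trace --- which is essentially the analytic content of the Kadison--Liu machinery, re-derived from scratch. Each step checks out: the range of $E_m \wedge G_b$ lies in $\mathscr{D}_0$, the trace estimate together with normality and faithfulness forces $E_m \wedge G_b \nearrow E_m$ strongly for fixed $m$, and the boundedness of $A_j E_m$ on the range of $E_m$ converts strong convergence of $w_b$ into convergence in the graph norm of $A_j$. What your approach buys is self-containedness and an explicit approximating sequence $w_b = (E_m \wedge G_b)E_m x$ with quantitative control; what the paper's approach buys is brevity and reuse of the exact lemma that is already needed to make $\mathscr{R}_{\textrm{aff}}$ a $*$-algebra in the first place. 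Both proofs use finiteness of $\mathscr{R}$ in the same essential place, namely to guarantee that meets of projections with small complements still have small complements.
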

\begin{proof}
Since the key ingredients for a proof are already present in the proofs of \cite[Proposition 6.8 - 6.9]{kadison-liu}, we only provide an outline for an argument. For a positive integer $m$, let $T_1, \cdots, T_m$ be operators in $\mathscr{R}_{\textrm{aff}}.$ Note that the operators $T_1 + \cdots + T_m$ and $T_1 ^* + \cdots + T_m ^*$ are both densely defined (cf.\ \cite[Proposition 6.8,(i)]{kadison-liu}). As $T_1 ^* + \cdots + T_m ^* \subseteq (T_1 + \cdots + T_m)^*$, $T_1 + \cdots + T_m$ is preclosed and thus $\bigcap_{i=1}^{m}\mathscr{D}(T_i)$ is a core for $T_1 \afmsum \cdots \afmsum T_m$. Take $T_1 = A_1, \cdots, T_n = A_n, T_{n+1} = -A_n, T_{n+2} = -A_{n-1}, \cdots, T_{2n-1} = -A_2.$ Thus $\bigcap_{i=1}^n \mathscr{D}(A_i) = \bigcap_{i=1}^{2n-1} \mathscr{D}(T_i)$ is a core for $(A_1 \afmsum (A_2 \afmdiff A_2) \afmsum \cdots \afmsum (A_n \afmdiff A_n)) = A_1$. Similarly $\bigcap_{i=1}^n \mathscr{D}(A_i)$ is a core for each of $A_2, \cdots, A_n.$
\end{proof}

\begin{prop}[Douglas factorization lemma]
\label{prop:doug_lem}
\textsl{
Let $\mathscr{R}$ be a finite von Neumann algebra acting on the Hilbert space $\mathscr{H}$. For $A, B \in \mathscr{R}_{\textrm{aff}}$, the following are equivalent:
\begin{itemize}
\item[(i)] $A^* \afmprod A \le  B^* \afmprod B$,
\item[(ii)] $A = CB$ for $C$ in $\mathscr{R}$ with $\| C \| \le 1$.
\end{itemize}
}
\end{prop}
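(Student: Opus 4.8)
The plan is to dispatch $(ii)\Rightarrow(i)$ by a direct computation in the $*$-algebra $\mathscr{R}_{\textrm{aff}}$, and to prove the substantive direction $(i)\Rightarrow(ii)$ by adapting the classical Douglas argument, using Lemma \ref{lem:dense_dom} to keep track of domains. For $(ii)\Rightarrow(i)$, suppose $A = CB$ with $C \in \mathscr{R}$ and $\|C\| \le 1$; since $C$ is bounded, $CB = C \afmprod B$, and so $A^* \afmprod A = (C \afmprod B)^* \afmprod (C \afmprod B) = B^* \afmprod (C^* C) \afmprod B$. Because $\|C\| \le 1$, the element $I - C^* C$ is a positive contraction in $\mathscr{R}$, hence $I - C^* C = D^* D$ with $D := (I - C^* C)^{1/2} \in \mathscr{R}$. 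Distributing over $B^* \afmprod (\,\cdot\,) \afmprod B$ then gives $B^* \afmprod B \afmdiff A^* \afmprod A = B^* \afmprod (I - C^* C) \afmprod B = (D \afmprod B)^* \afmprod (D \afmprod B)$, which is positive, establishing $(i)$.

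For $(i)\Rightarrow(ii)$, the first task is to convert the abstract order relation into a pointwise estimate. Write $H := B^* \afmprod B$ and $K := A^* \afmprod A$; since $B$ and $A$ are closed, $B^*B$ and $A^*A$ are already self-adjoint, so $H$ and $K$ coincide with these honest operators, and $\langle Hx, x\rangle = \|Bx\|^2$ for $x \in \mathscr{D}(H)$, $\langle Kx, x\rangle = \|Ax\|^2$ for $x \in \mathscr{D}(K)$. Set $\mathscr{D}_0 := \mathscr{D}(H) \cap \mathscr{D}(K)$. Applying Lemma \ref{lem:dense_dom} to the family $\{A, B, H, K\}$ and noting $\mathscr{D}(H) \subseteq \mathscr{D}(B)$ and $\mathscr{D}(K) \subseteq \mathscr{D}(A)$, the intersection of all four domains collapses to $\mathscr{D}_0$, which is therefore a core for each of $A, B, H, K$. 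For $x \in \mathscr{D}_0$ one has $x \in \mathscr{D}(H \afmdiff K)$ with $\langle (H \afmdiff K)x, x\rangle = \|Bx\|^2 - \|Ax\|^2$; since $(i)$ asserts that $H \afmdiff K$ is positive, this forces $\|Ax\| \le \|Bx\|$ for all $x \in \mathscr{D}_0$.

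Next I would upgrade this estimate and build $C$. Given $x \in \mathscr{D}(B)$, choose $x_n \in \mathscr{D}_0$ with $x_n \to x$ and $Bx_n \to Bx$ (possible as $\mathscr{D}_0$ is a core for $B$); then $\|A(x_n - x_m)\| \le \|B(x_n - x_m)\| \to 0$, so $\{Ax_n\}$ is Cauchy and closedness of $A$ gives $x \in \mathscr{D}(A)$ with $\|Ax\| \le \|Bx\|$. Hence $\mathscr{D}(B) \subseteq \mathscr{D}(A)$ and the estimate holds on all of $\mathscr{D}(B)$. The assignment $C_0(Bx) := Ax$ (for $x \in \mathscr{D}(B)$) is then well-defined and contractive, so it extends by continuity to a contraction on the closure $\mathcal{R}(B)\mathscr{H}$ of $\mathrm{ran}(B)$; extending by $0$ on $(I - \mathcal{R}(B))\mathscr{H}$ produces $C \in \mathcal{B}(\mathscr{H})$ with $\|C\| \le 1$. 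To see $C \in \mathscr{R}$, I would verify $CU = UC$ for every unitary $U \in \mathscr{R}'$: affiliation of $A, B, H, K$ makes $U$ preserve $\mathscr{D}_0$ and intertwine with $A$ and $B$, while $\mathcal{R}(B) \in \mathscr{R}$ (Proposition \ref{prop:range_proj},(iii)) commutes with $U$; thus $CU = UC$ holds on $\mathrm{ran}(B)$, on its closure, and on $(I - \mathcal{R}(B))\mathscr{H}$, hence everywhere. Finally $CBx = Ax$ for all $x \in \mathscr{D}(B)$, and since $\mathscr{D}(B)$ contains the core $\mathscr{D}_0$ of $A$ it is itself a core for $A$; taking closures yields $C \afmprod B = \overline{CB} = \overline{A|_{\mathscr{D}(B)}} = A$, which is $(ii)$.

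I expect the main obstacle to be the passage at the start of $(i)\Rightarrow(ii)$: the relation $A^* \afmprod A \le B^* \afmprod B$ is a statement about the positive cone and, a priori, about the domain of the difference $H \afmdiff K$, so care is required to extract the pointwise bound $\|Ax\| \le \|Bx\|$ on a set large enough — a common core — to reconstruct $C$. Lemma \ref{lem:dense_dom} is exactly what lets $\mathscr{D}_0$ serve as such a core simultaneously for $A$ and $B$, and the closedness of $A$ is what bootstraps the bound to all of $\mathscr{D}(B)$ and forces $\mathscr{D}(B) \subseteq \mathscr{D}(A)$. Once these domain matters are settled, the membership $C \in \mathscr{R}$ follows routinely from domain-preservation under $\mathscr{R}'$, and the identity $C \afmprod B = A$ is a core argument.
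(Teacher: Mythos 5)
Your proof is correct and follows essentially the same route as the paper's: both directions match, with (ii)$\Rightarrow$(i) done by the same positivity computation and (i)$\Rightarrow$(ii) by defining $C(Bx):=Ax$ on a common core supplied by Lemma \ref{lem:dense_dom}, extending by $0$ on $\mathrm{ran}(B)^{\perp}$, and verifying $C\in\mathscr{R}$ via commutation with unitaries in $\mathscr{R}'$. Your added care in bootstrapping the bound from the core to all of $\mathscr{D}(B)$ and in taking the closure $\overline{CB}=A$ at the end only makes explicit steps the paper leaves implicit.
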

\begin{proof}
(i) $\Rightarrow$ (ii).\\
Let $\mathscr{V} := \mathscr{D}(A) \cap \mathscr{D}(B) \cap  \mathscr{D}(A^* \afmprod A) \cap \mathscr{D}(B^* \afmprod B) \subseteq \mathscr{H}$ . By Lemma \ref{lem:dense_dom}, note that $\mathscr{V}$ is a core for each of $A, B, A^* \afmprod A, B^* \afmprod B$. Let us denote the range of $B$ by $\mathrm{ran}(B) \subseteq \mathscr{H}$.  For a vector $x \in \mathscr{V}$, we define $C(Bx) = Ax$, and for a vector $y$ in $\mathrm{ran}(B)^{\perp}$, we define $Cy = 0$. Since $\mathscr{V}$ is a core for $B$, we observe that $B$ maps $\mathscr{V}$ onto a dense subset of $\mathrm{ran}(B)$. For $x \in \mathscr{V}$ and $y$ in $\mathrm{ran}(B)^{\perp}$, as $\|C(Bx + y) \|^2 = \|Ax \|^2 = \langle A^* \afmprod A x, x \rangle \le \langle B^* \afmprod Bx, x \rangle = \|Bx\|^2 \le \|Bx\|^2 + \|y\|^2 = \|Bx + y\|^2$, we conclude that $C$ is a bounded operator such that $A=CB$ and $\| C \| \le 1$.

Let $U$ be a unitary operator in the commutant $\mathscr{R}'$ of $\mathscr{R}$. Then $UA = AU, U^*A = AU^*, UB = BU, U^*B=BU^*$ and the linear subspace $\mathrm{ran}(B)$ is invariant under $U$ and so is the closed subspace $\mathrm{ran}(B)^{\perp}$. For vectors $x_1$ in $\mathscr{V}$ and $x_2$ in $\mathrm{ran}(B)^{\perp}$, we have $CU(Bx_1 + x_2) = CUBx_1 + C(Ux_2) = CB(Ux_1) + 0 = A(Ux_1) = U(Ax_1) = UCBx_1 = UC(Bx_1 + x_2)$. Thus $UC$ and $CU$ coincide on the dense subspace of $\mathscr{H}$ given by $B\mathscr{V} \oplus \mathrm{ran}(B)^{\perp}$. Since $UC$ and $CU$ are bounded operators, we note that $UC = CU$ for any unitary operator $U$ in $\mathscr{R}'$. As every element in a von Neumann algebra can be written as a linear combination of four unitary elements, we conclude that $C$ commutes with every element in $\mathscr{R}'$. By the double commutant theorem, $C$ is in $(\mathscr{R}')' = \mathscr{R}$.
\vskip 0.1in
\noindent (ii) $\Rightarrow$ (i).\\
If $A=CB$ for a contraction $C \in \mathscr{R}$, then $A^* \afmprod A = B^* \afmprod (C^*C) \afmprod B$. As $I - C^*C$ is a positive operator, $ B^* \afmprod (I-C^*C) \afmprod B = B^* \afmprod B - A^* \afmprod A$ is a positive operator in $\mathscr{R}_{\textrm{aff}}$.
\end{proof}

\subsection{Operators in $\mathscr{M}_{\Delta}$}

Let $\mathscr{M}$ be a $II_1$ factor acting on the Hilbert space. Let $\tau$ be the unique faithful normal tracial state on $\mathscr{M}$. An operator $T \in \afm$ has a unitary polar decomposition (see \cite[Theorem 6.1.11]{kadison-ringrose2}, \cite[Exercise 6.9.6]{kadison-ringrose2}), $$T = U|T| =U \left( \int_{0}^{\infty} \lambda \; dE_{|T|}(\lambda) \right),$$ where $U \in \mathscr{M}$ is unitary, and the spectral measure $E_{|T|}$ takes values in $\mathscr{M}$. We may define a Borel probability measure $\mu_{|T|}$ on $\mathbb{R}_{+}$ by $$\mu_{|T|}(S) = \tau(E_{|T|}(S)),$$ for a Borel set $S \subseteq \mathbb{R}_{+}$. 
\begin{definition}
Let $\log ^{+}$ be the function on $\R_{+}$ defined by $\max \{0 , \log \}$.
We define $\mathscr{M}_{\Delta}$ to be the set of operators $T \in \afm$ satisfying the condition $$\tau(\log ^{+} |T|) = \int_{0}^{\infty} \log ^{+} \lambda \; d\mu_{|T|}(\lambda) = \int_{1}^{\infty} \log \lambda \; d\mu_{|T|}(\lambda)< \infty.$$ 
Thus for $T \in \mathscr{M}_{\Delta}$, we have $$-\infty \le \int_{0}^{\infty} \log \lambda \; d\mu_{|T|}(\lambda) < \infty.$$
The {\it Fuglede-Kadison} determinant of $T$ is defined as $$\Delta(T) := \exp \Big( \int_{0}^{\infty} \log \lambda \; d\mu_{|T|}(\lambda)  \Big).$$
\end{definition}

\begin{lemma}[see {\cite[Remark 2.2]{haagerup-schultz}}]
\label{lem:mdelta_desc}
\textsl{
$L^p(\mathscr{M}, \tau) \subset \mathscr{M}_{\Delta}$ for all $p \in (0, \infty].$ In particular, $\mathscr{M} \subset \mathscr{M}_{\Delta}.$
}
\end{lemma}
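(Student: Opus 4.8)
The plan is to reduce the statement to an elementary pointwise inequality on $\R_{+}$ and then integrate it against the spectral distribution $\mu_{|T|}$. Recall that for $0 < p < \infty$ an operator $T \in \afm$ lies in $L^p(\mathscr{M}, \tau)$ precisely when $\int_{0}^{\infty} \lambda^p \, d\mu_{|T|}(\lambda) = \tau(|T|^p) < \infty$, whereas $L^{\infty}(\mathscr{M}, \tau) = \mathscr{M}$. Since membership in $\mathscr{M}_{\Delta}$ is governed entirely by the single quantity $\int_{0}^{\infty} \log^{+}\lambda \, d\mu_{|T|}(\lambda)$, everything comes down to comparing the integrands $\log^{+}\lambda$ and $\lambda^p$ and invoking that $\mu_{|T|}$ is a Borel probability measure.

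First I would establish the scalar estimate $\log^{+}\lambda \le \tfrac{1}{p}\lambda^p$ for all $\lambda \ge 0$ and all $p \in (0, \infty)$. This follows from the tangent-line bound $\log x \le x - 1$ (valid for $x > 0$): substituting $x = \lambda^p$ gives $p\log\lambda \le \lambda^p - 1 \le \lambda^p$, hence $\log\lambda \le \tfrac{1}{p}\lambda^p$ for $\lambda \ge 1$, while for $0 \le \lambda < 1$ the left-hand side $\log^{+}\lambda = 0$ is trivially dominated.

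Next, for $T \in L^p(\mathscr{M}, \tau)$ with $p$ finite, I would integrate this inequality against $\mu_{|T|}$ to obtain
\[
\tau(\log^{+}|T|) = \int_{0}^{\infty} \log^{+}\lambda \, d\mu_{|T|}(\lambda) \le \frac{1}{p}\int_{0}^{\infty} \lambda^p \, d\mu_{|T|}(\lambda) = \frac{1}{p}\,\tau(|T|^p) < \infty,
\]
which is exactly the defining condition for $T \in \mathscr{M}_{\Delta}$. The case $p = \infty$ is even simpler: if $T \in \mathscr{M}$ then $|T| \le \|T\| I$, so $\mu_{|T|}$ is supported on $[0, \|T\|]$ and $\log^{+}|T| \le (\log^{+}\|T\|) I$, giving $\tau(\log^{+}|T|) \le \log^{+}\|T\| < \infty$. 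Equivalently, since $\tau$ is a normalized (hence finite) trace, $\mathscr{M} = L^{\infty}(\mathscr{M},\tau) \subseteq L^p(\mathscr{M},\tau)$ for every finite $p$, so the inclusion $\mathscr{M} \subset \mathscr{M}_{\Delta}$ also follows directly from the finite-$p$ case; this is precisely the ``in particular'' assertion.

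I do not expect a genuine obstacle here; the only points requiring care are bookkeeping ones. One must ensure that the definition of $L^p(\mathscr{M}, \tau)$ being used (in terms of $\tau(|T|^p)$, i.e.\ the $p$-th moment of $\mu_{|T|}$) matches the spectral-measure formulation underlying $\mathscr{M}_{\Delta}$, and one should note that the constant $\tfrac{1}{p}$ in $\log^{+}\lambda \le \tfrac{1}{p}\lambda^p$ blows up as $p \to 0^{+}$. This is harmless, since $p$ is fixed throughout, but it explains why no single uniform estimate covers all exponents at once and why the argument is run separately for each $p$.
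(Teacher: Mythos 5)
Your proof is correct, and it is essentially the standard argument: the paper itself gives no proof of this lemma, citing instead \cite[Remark 2.2]{haagerup-schultz}, where the same pointwise bound $\log^{+}\lambda \le \tfrac{1}{p}\lambda^{p}$ integrated against the probability measure $\mu_{|T|}$ is the intended justification. Nothing further is needed.
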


\begin{lemma}[see {\cite[Proposition 2.5, 2.6]{haagerup-schultz}}]
\label{lem:mdelt_alg}
\textsl{
For operators $S, T \in \mathscr{M}_{\Delta}$,  we have:
\begin{itemize}
\item[(i)] $S \afmprod T \in \mathscr{M}_{\Delta}$,
\item[(ii)] $S \afmsum T \in \mathscr{M}_{\Delta}$,
\item[(iii)] $S^* \in \mathscr{M}_{\Delta}$.
\end{itemize} 
Thus $\mathscr{M}_{\Delta}$ is a $*$-subalgebra of $\afm$, containing $\mathscr{M}$.
}
\end{lemma}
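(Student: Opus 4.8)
The plan is to run the entire argument through the \emph{generalized singular numbers} (decreasing rearrangements) of $\tau$-measurable operators. Since $\mathscr{M}$ is a $II_1$ factor, $\tau$ is a finite trace and every operator in $\afm$ is $\tau$-measurable; thus $\afm$ coincides with the algebra of $\tau$-measurable operators, and for $T \in \afm$ we may define, for $t \in (0,1)$,
$$\mu_t(T) := \inf\{ s \ge 0 : \tau(E_{|T|}((s, \infty))) \le t \},$$
the decreasing rearrangement of the distribution of $|T|$. The function $t \mapsto \mu_t(T)$ is nonincreasing, and a change of variables in the spectral integral gives $\tau(f(|T|)) = \int_0^1 f(\mu_t(T))\,dt$ for every nonnegative increasing Borel function $f$. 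Taking $f = \log^+$ yields the reformulation
$$T \in \mathscr{M}_{\Delta} \iff \int_0^1 \log^+ \mu_t(T)\, dt < \infty,$$
so the whole lemma reduces to an integrability statement about rearrangements.

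The analytic engine is the submajorization inequalities of Fack and Kosaki for generalized $s$-numbers: for $S, T \in \afm$ and $s, t > 0$ with $s + t < 1$ one has $\mu_{s+t}(S \afmprod T) \le \mu_s(S)\,\mu_t(T)$ and $\mu_{s+t}(S \afmsum T) \le \mu_s(S) + \mu_t(T)$, together with $\mu_t(S^*) = \mu_t(S)$. Part (iii) is then immediate: writing the polar decomposition $S = U|S|$ with $U$ unitary in $\mathscr{M}$, we have $|S^*| = U|S|U^*$, so $E_{|S^*|}$ and $E_{|S|}$ have the same $\tau$-distribution and $\tau(\log^+|S^*|) = \tau(\log^+|S|) < \infty$.

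For (i), set $s = t$ in the multiplicative inequality to get $\mu_{2t}(S \afmprod T) \le \mu_t(S)\mu_t(T)$ for $t \in (0, \tfrac12)$. Applying $\log^+$ and the elementary bound $\log^+(ab) \le \log^+ a + \log^+ b$, then integrating over $(0,\tfrac12)$ and substituting $u = 2t$ on the left, yields
$$\tfrac12 \int_0^1 \log^+\mu_u(S \afmprod T)\, du \le \int_0^{1/2}\big(\log^+\mu_t(S) + \log^+\mu_t(T)\big)\, dt \le \tau(\log^+|S|) + \tau(\log^+|T|) < \infty,$$
so $S \afmprod T \in \mathscr{M}_{\Delta}$. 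Part (ii) is identical but uses $\mu_{2t}(S \afmsum T) \le \mu_t(S) + \mu_t(T)$ together with $\log^+(a+b) \le \log 2 + \log^+ a + \log^+ b$; the extra constant contributes only a finite term $\tfrac12 \log 2$ after integration, so $S \afmsum T \in \mathscr{M}_{\Delta}$.

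Finally, $\mathscr{M}_{\Delta}$ is closed under scalar multiplication (from $\log^+(|c|\lambda) \le \log^+|c| + \log^+\lambda$) and contains $I$, and $\mathscr{M} \subset \mathscr{M}_{\Delta}$ by Lemma \ref{lem:mdelta_desc}; combined with (i)--(iii) this shows $\mathscr{M}_{\Delta}$ is a unital $*$-subalgebra of $\afm$ containing $\mathscr{M}$. The main obstacle is not the bookkeeping above but securing the Fack--Kosaki inequalities in this setting: this requires the variational (min--max) characterization of $\mu_t$ and the verification that the strong product $S \afmprod T$ and strong sum $S \afmsum T$---the closures of $ST$ and $S+T$---are genuinely the $\tau$-measurable operators to which those inequalities apply, rather than merely the algebraic operations on a common core. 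Granting that identification (which rests on the finiteness of $\tau$ and Nelson's theory), the rearrangement estimates close the argument.
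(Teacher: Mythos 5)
Your argument is correct, but it is worth pointing out that the paper does not actually prove this lemma at all: it is quoted verbatim from Haagerup--Schultz (their Propositions 2.5 and 2.6), so what you have produced is an independent, self-contained proof of a cited result. Your route --- identifying $\afm$ with the $\tau$-measurable operators (legitimate here because $\tau$ is a finite normal trace, so every closed densely-defined affiliated operator is $\tau$-measurable and the strong sum and product coincide with the operations in the measurable-operator algebra), rewriting $\tau(\log^+|T|)=\int_0^1\log^+\mu_t(T)\,dt$, and then feeding the Fack--Kosaki submajorization inequalities $\mu_{2t}(S\afmprod T)\le\mu_t(S)\mu_t(T)$ and $\mu_{2t}(S\afmsum T)\le\mu_t(S)+\mu_t(T)$ through $\log^+$ --- is clean and checks out: the change of variables $u=2t$ costs only a harmless factor $\tfrac12$, the constant $\log 2$ in the sum case integrates to something finite, and part (iii) via the unitary polar decomposition in a finite factor is immediate. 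This is a genuinely different mechanism from what Haagerup and Schultz do (they work directly with spectral cut-offs $E_{|T|}([0,n])$, approximation by bounded operators, and properties of the Fuglede--Kadison determinant, rather than with decreasing rearrangements), and arguably your version is more modular, since all the analysis is outsourced to two standard $s$-number inequalities. The one point you rightly flag as the real content --- that $S\afmprod T$ and $S\afmsum T$ are exactly the measurable-operator product and sum to which Fack--Kosaki applies --- is indeed where the work hides, but it is secured by Nelson's Theorem 4, which this paper already relies on elsewhere. Two tiny housekeeping remarks: the identity $\tau(f(|T|))=\int_0^1 f(\mu_t(T))\,dt$ requires $f$ continuous, nondecreasing with $f(0)=0$, which $\log^+$ satisfies, so you should cite that hypothesis explicitly; and in a normalized $II_1$ factor $\mu_t(T)=0$ for $t\ge 1$, which is what lets you truncate the integral at $1$.
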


\begin{definition}
Let $T$ be an operator in $\mathscr{M}_{\Delta}$ and let $f$ be the mapping $\lambda \in \mathbb{C} \to \log \Delta(\lambda I - T) \in \mathbb{C}.$ Then by \cite[Theorem 2.7]{haagerup-schultz}, $f$ is a subharmonic function on $\mathbb{C}$, and $$d\mu_T = \frac{1}{2 \pi} \nabla ^2 f \; d\lambda$$
(in the distributional sense) defines a Borel probability measure on $\mathbb{C}$ which is called the {\it Brown measure} of $T$.
\end{definition}

\begin{remark}
\label{rmrk:mdelta_pos}
Let $\mathscr{M}_{\Delta}^{\textrm{sa}}$ be the set of self-adjoint operators in $\mathscr{M}_{\Delta}$, and $\mathscr{M}_{\Delta}^{+}$ be the set of positive operators in $\mathscr{M}_{\Delta}$.
For a positive operator $H \in \mathscr{M}_{\Delta}$, since $\log ^{+} (\sqrt{H}) = \frac{1}{2} \log ^{+}(H)$, we have $\sqrt{H} \in \mathscr{M}_{\Delta}.$ Thus $\mathscr{M}_{\Delta}^{+} := (\mathscr{M}_{\Delta}) ^{\mathrm{sa}} \cap \afm ^{+} = \{ H^2 : H \in \mathscr{M}_{\Delta}^{\textrm{sa}} \}$, which gives an algebraic description of the positive cone of $(\mathscr{M}_{\Delta})^{\mathrm{sa}}$ inherited from $(\afm) ^{\mathrm{sa}}$. 
\end{remark}

\begin{prop}
\label{prop:ord_ideal_mdelta}
\textsl{
For operators $A \in \, (\afm) ^{\mathrm{sa}}$ and $B \in (\mathscr{M}_{\Delta}) ^{\mathrm{sa}}$, if $0 \le A \le B$, then $A \in (\mathscr{M}_{\Delta}) ^{\mathrm{sa}}$.
}
\end{prop}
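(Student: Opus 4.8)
The plan is to reduce the statement to the Douglas factorization lemma (Proposition \ref{prop:doug_lem}) by passing to square roots, so that operator monotonicity of the square-root function is exploited only implicitly, through the lemma itself. Since $A$ and $B$ are positive self-adjoint operators in $\afm$, the spectral theorem for unbounded self-adjoint operators furnishes their positive square roots $A^{1/2}$ and $B^{1/2}$, which are again positive self-adjoint operators affiliated with $\mathscr{M}$, hence elements of $(\afm)^{\mathrm{sa}}$. The first thing I would record is the functional-calculus identity $(A^{1/2})^* \afmprod A^{1/2} = A^{1/2} \afmprod A^{1/2} = A$, and likewise $(B^{1/2})^* \afmprod B^{1/2} = B$; here the point is that, by Proposition \ref{prop:fund_mva},(ii), the Murray--von Neumann product $A^{1/2} \afmprod A^{1/2}$ is the unique closed extension of the ordinary operator product, and this closure agrees with $A$ because $A^{1/2} A^{1/2}$ coincides with $A$ on the core given by the bounded spectral subspaces of $A$.

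With these identities in hand, the hypothesis $0 \le A \le B$ reads precisely as $(A^{1/2})^* \afmprod A^{1/2} \le (B^{1/2})^* \afmprod B^{1/2}$, which is exactly condition (i) of Proposition \ref{prop:doug_lem} applied to the pair $(A^{1/2}, B^{1/2})$. The lemma then produces a contraction $C \in \mathscr{M}$ with $\|C\| \le 1$ such that $A^{1/2} = C B^{1/2} = C \afmprod B^{1/2}$, the last equality holding because $C$ is bounded.

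To conclude, I would invoke the $*$-algebra structure of $\mathscr{M}_{\Delta}$. Since $B \in \mathscr{M}_{\Delta}$ is positive, Remark \ref{rmrk:mdelta_pos} gives $B^{1/2} \in \mathscr{M}_{\Delta}$; as $C \in \mathscr{M} \subset \mathscr{M}_{\Delta}$, Lemma \ref{lem:mdelt_alg},(i) shows $A^{1/2} = C \afmprod B^{1/2} \in \mathscr{M}_{\Delta}$. Applying Lemma \ref{lem:mdelt_alg},(i) once more yields $A = A^{1/2} \afmprod A^{1/2} \in \mathscr{M}_{\Delta}$, and since $A$ is self-adjoint this places $A$ in $(\mathscr{M}_{\Delta})^{\mathrm{sa}}$, as required.

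The only genuinely delicate step is the bookkeeping behind the identity $A^{1/2} \afmprod A^{1/2} = A$ and the reinterpretation of $A \le B$ as a Douglas-type inequality between strong products: one must verify that forming square roots via the unbounded functional calculus is compatible with the $\afmprod$-product and with the order on $(\afm)^{\mathrm{sa}}$, and in particular that $A^{1/2}$ admits a dense core on which $(A^{1/2})^* \afmprod A^{1/2}$ agrees with $A$. Everything else is a direct appeal to Proposition \ref{prop:doug_lem}, Remark \ref{rmrk:mdelta_pos}, and Lemma \ref{lem:mdelt_alg}; notably, no independent estimate on the measure $\mu_A$ is needed, which is the main economy of this route.
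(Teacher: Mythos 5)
Your proof is correct and follows essentially the same route as the paper: apply the Douglas factorization lemma (Proposition \ref{prop:doug_lem}) to the square roots to get $\sqrt{A} = C\sqrt{B}$ with $C \in \mathscr{M}$ a contraction, then conclude via Remark \ref{rmrk:mdelta_pos} and Lemma \ref{lem:mdelt_alg}. The only difference is that you spell out the functional-calculus bookkeeping ($A = (A^{1/2})^{*} \afmprod A^{1/2}$, etc.) that the paper leaves implicit.
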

\begin{proof}
By the Douglas lemma (Proposition \ref{prop:doug_lem}), there is a bounded operator $C \in \mathscr{M}$ such that $\sqrt{A} = C \sqrt{B}$. By Remark \ref{rmrk:mdelta_pos} and Lemma \ref{lem:mdelt_alg}, we have $\sqrt{B}\in \mathscr{M}_{\Delta} \Longrightarrow \sqrt{A} \in \mathscr{M}_{\Delta} \Longrightarrow A \in \mathscr{M}_{\Delta}.$
\end{proof}

\section{The measure topology}
\label{sec:tau_meas_top}

In this section, $\mathscr{M}$ denotes a $II_1$ factor acting on the Hilbert space $\mathscr{H}$. Let $\tau$ be the unique faithful normal tracial state on $\mathscr{M}$. For $\varepsilon, \delta > 0$, we define
\begin{equation}
\label{def:tau_top}
\sO_{\tau}(\varepsilon, \delta) := \{ A \in \mathscr{M}:\textrm{ for some projection } E \textrm{ in } \mathscr{M}, \|AE\| \le \varepsilon,\textrm{ and } \tau(I- E) \le \delta \}.
\end{equation} The translation-invariant topology on $\mathscr{M}$ generated by the fundamental system of neighborhoods $\{ \sO_{\tau}(\varepsilon, \delta)\}$ of $0$ is called the \emph{measure topology}. We denote the completion of $\mathscr{M}$ in the measure topology by $\Wtilde{M}$. In \cite{nelson}, Nelson defined the notion of measure topology to study {\it convergence in measure} in a non-commutative setting. In the words of Nelson, the main idea is to ``break up the underlying space into a piece where things behave well plus a small piece.'' We note that in (\ref{def:tau_top}), the projection $I - E$ corresponds to the `small piece'. 

The connection between $\Wtilde{M}$ and $\afm$ becomes apparent from \cite[Theorem 4]{nelson}. Let $A$ be a positive operator in $\afm$ with spectral decomposition $A = \int_{0}^{\infty} \lambda \; dE_{\lambda}$ where $\{ E_{\lambda} \}$ is the resolution of the identity relative to $A$ (the linear manifold $\bigcup _{n \in \mathbb{N}} E_n(\mathscr{H})$ being a core for $A$). For the reader curious about the relevance of the measure topology to $\afm$, it may be helpful to keep in mind that $\{ AE_{\lambda} \}$ is a Cauchy net in $\mathscr{M}$ in the measure topology (which converges to $A$).

In this section, our main goal is to provide an intrinsic characterization of $\afm$ as an ordered complex topological $*$-algebra and study properties of the measure topology. At the outset, we collect some relevant results from \cite{nelson} without proof.

\begin{lemma}[{see \cite[Theorem 1]{nelson}}]
\label{lem:nbd_fund}
\textsl{
For $\varepsilon_1, \delta_1 > 0$ and $\varepsilon_2, \delta_2 > 0$, we have:
\begin{itemize}
\item[(i)] $\sO_{\tau}(\varepsilon_1, \delta_1)^* \subseteq \sO_{\tau}(\varepsilon_1, 2 \delta_1)$,
\item[(ii)] $\sO_{\tau}(\varepsilon_1, \delta_1) + \sO_{\tau}(\varepsilon_2, \delta_2)\subseteq \sO_{\tau}(\varepsilon_1 + \varepsilon_2, \delta_1 + \delta_2)$,
\item[(iii)] $\sO_{\tau}(\varepsilon_1, \delta_1)\sO_{\tau}(\varepsilon_2, \delta_2)\subseteq 
\sO_{\tau}(\varepsilon_1 \varepsilon_2, \delta_1 + \delta_2)$
\end{itemize}
}
\end{lemma}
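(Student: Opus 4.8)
The plan is to prove all three containments in the same spirit: given operators equipped with good ``localizing'' projections, I construct a single projection witnessing membership of the combination in the appropriate $\sO_{\tau}$. The two structural inputs are Proposition~\ref{prop:range_proj} --- in particular the identity $\mathcal{R}(T^*) = I - \mathcal{N}(T)$ and the Murray--von Neumann equivalence $\mathcal{R}(T) \sim \mathcal{R}(T^*)$ --- together with two facts special to the finite factor $\mathscr{M}$: the trace $\tau$ is constant on equivalence classes of projections, and it is subadditive on joins, $\tau(P \vee Q) \le \tau(P) + \tau(Q)$, which follows from the parallelogram law $\tau(P \vee Q) + \tau(P \wedge Q) = \tau(P) + \tau(Q)$. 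Since range projections of elements of $\mathscr{M}$ lie in $\mathscr{M}$, every projection I build is automatically in $\mathscr{M}$, so I need only verify the norm and trace estimates.

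I would dispatch (ii) first, as it is cleanest. Taking $A \in \sO_{\tau}(\varepsilon_1, \delta_1)$ and $B \in \sO_{\tau}(\varepsilon_2, \delta_2)$ with witnessing projections $E_1, E_2$, I use the meet $E := E_1 \wedge E_2$. Because $E \le E_i$ gives $AE = AE_1 E$ and $BE = BE_2 E$, the triangle inequality yields $\|(A+B)E\| \le \varepsilon_1 + \varepsilon_2$, while de Morgan's law $I - E = (I - E_1) \vee (I - E_2)$ and subadditivity of $\tau$ give $\tau(I - E) \le \delta_1 + \delta_2$. For (i), given $A$ with witness $E$ and writing $P := I - E$, I claim $F := I - \mathcal{R}(AP)$ works for $A^*$. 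The key observation is that $FAP = 0$ forces $A^* F = E A^* F$, so $\|A^* F\| \le \|EA^*\| = \|AE\| \le \varepsilon_1$; and the equivalence $\mathcal{R}(AP) \sim \mathcal{R}((AP)^*) = \mathcal{R}(PA^*)$ combined with the evident containment $\mathcal{R}(PA^*) \le P$ gives $\tau(I - F) = \tau(\mathcal{R}(AP)) \le \tau(P) \le \delta_1$. This in fact proves the sharper statement $\sO_{\tau}(\varepsilon_1, \delta_1)^* \subseteq \sO_{\tau}(\varepsilon_1, \delta_1)$, so the factor of $2$ is comfortable slack.

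For (iii) I again take witnesses $E_1, E_2$ for $A, B$, set $P_1 := I - E_1$, and use $E := \mathcal{N}(P_1 B) \wedge E_2$. The design forces two things at once: $E \le \mathcal{N}(P_1 B)$ yields $P_1 B E = 0$, i.e.\ $BE = E_1 B E$, so that $ABE = (AE_1)(BE)$; and $E \le E_2$ yields $\|BE\| \le \|BE_2\| \le \varepsilon_2$. Hence $\|ABE\| \le \|AE_1\|\,\|BE\| \le \varepsilon_1 \varepsilon_2$. For the trace estimate I use $I - \mathcal{N}(P_1 B) = \mathcal{R}((P_1 B)^*)$ together with $\mathcal{R}(P_1 B) \sim \mathcal{R}((P_1 B)^*)$ and $\mathcal{R}(P_1 B) \le P_1$ to get $\tau(I - \mathcal{N}(P_1 B)) \le \tau(P_1) \le \delta_1$; then de Morgan and subadditivity give $\tau(I - E) \le \delta_1 + \delta_2$.

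The main obstacle, and the place where finiteness of $\mathscr{M}$ is genuinely used, is the pair of trace bounds in (i) and (iii). The naive range containments only control one side --- $\mathcal{R}(AP)$ is equivalent to a subprojection of $P$, and $\mathcal{R}(P_1 B)$ sits below $P_1$ --- whereas the projections I actually need to estimate are the complementary ones $I - F$ and $I - \mathcal{N}(P_1 B)$. Bridging this gap is precisely where I must invoke $\mathcal{R}(T) \sim \mathcal{R}(T^*)$ and the invariance of $\tau$ under Murray--von Neumann equivalence. Once that transfer is in hand, everything else is lattice bookkeeping with the projections of $\mathscr{M}$.
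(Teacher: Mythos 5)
Your proposal is correct, but note that the paper itself offers no proof of this lemma to compare against: it is stated explicitly as one of several results ``collected from \cite{nelson} without proof'' (Nelson's Theorem 1). Judged on its own, your argument is complete and sound. Part (ii) is the standard meet-of-witnesses argument with the Kaplansky formula $\tau(P\vee Q)+\tau(P\wedge Q)=\tau(P)+\tau(Q)$ supplying subadditivity of $\tau$ on joins. In part (i), the choice $F = I - \mathcal{R}(A(I-E))$ does give $A^*F = EA^*F$ (take adjoints of $FA(I-E)=0$), hence $\|A^*F\|\le\|EA^*\|=\|AE\|$, and the transfer $\tau(\mathcal{R}(A(I-E)))=\tau(\mathcal{R}((I-E)A^*))\le\tau(I-E)$ via $\mathcal{R}(T)\sim\mathcal{R}(T^*)$ and trace-invariance under equivalence is exactly right; you in fact prove the sharper containment $\sO_{\tau}(\varepsilon_1,\delta_1)^*\subseteq\sO_{\tau}(\varepsilon_1,\delta_1)$, so the stated factor of $2$ is indeed slack. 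Part (iii) with $E=\mathcal{N}((I-E_1)B)\wedge E_2$ correctly forces $ABE=(AE_1)(BE)$ while the same equivalence-plus-trace argument bounds $\tau(I-\mathcal{N}((I-E_1)B))$ by $\delta_1$. The same mechanism (passing a range projection across the adjoint via Murray--von Neumann equivalence) reappears in the paper's own proof of Lemma \ref{lem:pos_tau_conv}, so your argument is entirely in the spirit of the techniques the paper does spell out.
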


\begin{cor}
\label{cor:nbd_fund}
\textsl{
For $\varepsilon, \delta > 0,$ let $A \in \sO_{\tau}(\varepsilon, \delta)$ and $B$ be a contraction in $\mathscr{M}$. Then 
\begin{itemize}
\item[(i)] $BA \in \sO_{\tau}(\varepsilon, \delta)$;
\item[(ii)] $AB \in \sO_{\tau}(\varepsilon, 4\delta).$
\end{itemize}
}
\end{cor}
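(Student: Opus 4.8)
The plan is to dispatch part (i) by a direct estimate and then bootstrap part (ii) from it by passing to adjoints.

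For (i), since $A \in \sO_{\tau}(\varepsilon, \delta)$, I would choose a projection $E$ in $\mathscr{M}$ with $\|AE\| \le \varepsilon$ and $\tau(I - E) \le \delta$, and simply reuse the same projection for $BA$. By submultiplicativity of the norm together with $\|B\| \le 1$, one has $\|(BA)E\| = \|B(AE)\| \le \|B\|\,\|AE\| \le \varepsilon$, while $\tau(I-E) \le \delta$ is untouched. Hence $BA \in \sO_{\tau}(\varepsilon, \delta)$, with no loss in either parameter.

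For (ii), the obstacle to repeating this argument is that the defining projection $E$ controls $A$ on the \emph{right} (it makes $\|AE\|$ small), so it is adapted to left multiplication as in (i); right multiplication by $B$ cannot be absorbed directly, since $ABE$ need not be small. My plan is to convert right multiplication into left multiplication by taking adjoints, using that $B^*$ is again a contraction because $\|B^*\| = \|B\| \le 1$. Concretely: starting from $A \in \sO_{\tau}(\varepsilon, \delta)$, Lemma \ref{lem:nbd_fund}(i) gives $A^* \in \sO_{\tau}(\varepsilon, 2\delta)$; applying the already-proven part (i) with the contraction $B^*$ yields $B^* A^* \in \sO_{\tau}(\varepsilon, 2\delta)$; and a second application of Lemma \ref{lem:nbd_fund}(i) gives $(B^* A^*)^* \in \sO_{\tau}(\varepsilon, 4\delta)$. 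Since $(B^* A^*)^* = AB$, this is exactly the claim $AB \in \sO_{\tau}(\varepsilon, 4\delta)$.

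There is essentially no hard step; the only thing to monitor is the bookkeeping on the mass parameter $\delta$. The factor $4 = 2 \times 2$ is precisely the cost of invoking the adjoint inclusion $\sO_{\tau}(\cdot,\delta')^* \subseteq \sO_{\tau}(\cdot, 2\delta')$ twice — once to move $A$ to $A^*$ and once to move $B^* A^*$ back to $AB$ — whereas part (i) itself leaves $\delta$ unchanged. The single mild subtlety worth flagging is confirming that $\sO_{\tau}$-membership is preserved under the adjoint at the stated cost, but this is exactly Lemma \ref{lem:nbd_fund}(i) and so demands no new argument.
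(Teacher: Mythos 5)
Your proposal is correct and follows exactly the paper's own argument: part (i) by reusing the same projection $E$ together with $\|B\|\le 1$, and part (ii) by applying Lemma \ref{lem:nbd_fund},(i) twice around an application of part (i) to the contraction $B^*$, which accounts for the factor $4$ in the $\delta$ parameter. Nothing further is needed.
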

\begin{proof}
\noindent (i) For any projection $E$ in $\mathscr{M}$ such that $\|AE\| \le \varepsilon$, we have $\|(B A) E \| \le \|B\| \|AE\| \le \varepsilon$. Thus if $A \in \sO_{\tau}(\varepsilon, \delta)$, then $BA \in \sO_{\tau}(\varepsilon, \delta)$. 
\vskip 0.1in
\noindent (ii) From Lemma \ref{lem:nbd_fund},(i), we note that $A^* \in \sO_{\tau}(\varepsilon, 2\delta)$. Since $B^*$ is a contraction, by part (i), we observe that $B^*A^* \in \sO_{\tau}(\varepsilon, 2\delta)$. By virtue of Lemma \ref{lem:nbd_fund},(i), we conclude that $AB \in \sO_{\tau}(\varepsilon, 4 \delta)$.
\end{proof}

\begin{thm}[{see \cite[Theorem 1]{nelson}}]
\label{thm:cont_mvn}
\textsl{
The mappings
\begin{align}
A &\mapsto A^* \textrm{ of } \mathscr{M} \to \mathscr{M},\\
(A, B) &\mapsto A+B \textrm{ of } \mathscr{M} \times \mathscr{M} \to \mathscr{M},\\
(A, B) &\mapsto AB \textrm{ of } \mathscr{M} \times \mathscr{M} \to \mathscr{M},
\end{align}
are Cauchy-continuous in the measure topology. Thus the above mappings have unique continuous extensions to $\Wtilde{M}$ giving it the structure of a topological $*$-algebra.}
\end{thm}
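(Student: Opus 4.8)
The plan is to establish the three stated continuity properties separately and then invoke the universal property of completion. Since the neighborhoods $\sO_\tau(\varepsilon,\delta)$ form a translation-invariant fundamental system at $0$, the pair $(\mathscr{M}, +)$ is a topological abelian group and $\Wtilde{M}$ is its group completion. The guiding principle I would use is the standard fact that a Cauchy-continuous map between uniform spaces—equivalently here, a map carrying Cauchy nets to Cauchy nets—extends uniquely to a continuous map between the completions; so it suffices to verify Cauchy-continuity of $A \mapsto A^*$, of $(A,B) \mapsto A+B$, and of $(A,B) \mapsto AB$, and then transport the algebraic identities across by density and continuity.

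The adjoint and addition are the easy cases, and in fact I would show they are uniformly continuous. For the adjoint, translation-invariance reduces matters to differences, and $A^* - B^* = (A-B)^*$; by Lemma \ref{lem:nbd_fund}(i), $A - B \in \sO_\tau(\varepsilon,\delta)$ forces $A^* - B^* \in \sO_\tau(\varepsilon, 2\delta)$, which immediately gives uniform continuity, hence Cauchy-continuity. For addition, $(A+B)-(A'+B') = (A-A')+(B-B')$, so if $A - A', B-B' \in \sO_\tau(\varepsilon/2,\delta/2)$ then Lemma \ref{lem:nbd_fund}(ii) places the difference in $\sO_\tau(\varepsilon,\delta)$; thus addition maps Cauchy nets to Cauchy nets.

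The hard part is multiplication, since $(A,B) \mapsto AB$ is not uniformly continuous (already the scalar case $xy$ fails), and one must genuinely exploit that Cauchy nets are \emph{bounded} in the measure topology. First I would record this boundedness: given a Cauchy net $\{B_\alpha\}$ and any $\delta_1 > 0$, the Cauchy property at scale $(\varepsilon,\delta) = (1,\delta_1)$ yields an index $\beta_0$ with $B_\alpha - B_{\beta_0} \in \sO_\tau(1,\delta_1)$ for $\alpha \ge \beta_0$; since $B_{\beta_0} \in \sO_\tau(\|B_{\beta_0}\|, 0)$ (take $E = I$), Lemma \ref{lem:nbd_fund}(ii) gives $B_\alpha \in \sO_\tau(K,\delta_1)$ eventually, with $K := \|B_{\beta_0}\| + 1$. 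Then, writing
$$A_\alpha B_\alpha - A_\beta B_\beta = (A_\alpha - A_\beta)B_\alpha + A_\beta(B_\alpha - B_\beta),$$
I would bound each summand with Lemma \ref{lem:nbd_fund}(iii): the first lies in $\sO_\tau(\varepsilon_2, \delta_2)\,\sO_\tau(K,\delta_1) \subseteq \sO_\tau(\varepsilon_2 K, \delta_2 + \delta_1)$ and, symmetrically, the second in $\sO_\tau(K', \delta_1')\,\sO_\tau(\varepsilon_2', \delta_2') \subseteq \sO_\tau(K'\varepsilon_2', \delta_1' + \delta_2')$, after which I combine them with Lemma \ref{lem:nbd_fund}(ii). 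The crucial subtlety—and the real content of the argument—is the order of quantifiers: for a target $(\varepsilon, \delta)$ one must first fix the measure budgets $\delta_1 = \delta_1' = \delta/4$, which determines the (possibly large) bounds $K, K'$, and only then choose $\varepsilon_2 = \varepsilon/(2K)$, $\varepsilon_2' = \varepsilon/(2K')$ and $\delta_2 = \delta_2' = \delta/4$ and invoke the Cauchy property of $\{A_\alpha\}$ and $\{B_\alpha\}$. For indices beyond the finitely many thresholds so produced, the difference lands in $\sO_\tau(\varepsilon,\delta)$, proving that $\{A_\alpha B_\alpha\}$ is Cauchy.

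With Cauchy-continuity of all three operations in hand, each extends uniquely to a continuous map on $\Wtilde{M}$ (respectively on $\Wtilde{M} \times \Wtilde{M}$, using that the completion of a product is the product of completions). The remaining step is routine: the associative, distributive, and involutive identities hold on the dense subalgebra $\mathscr{M}$, and since both sides of each identity are continuous on the completion, they agree on all of $\Wtilde{M}$. This endows $\Wtilde{M}$ with the structure of a topological $*$-algebra, completing the proof.
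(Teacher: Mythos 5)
The paper does not prove this theorem at all: it is quoted verbatim from Nelson's Theorem 1 and the surrounding text explicitly says these results are collected ``without proof.'' Your argument is correct and is essentially the standard one underlying that citation --- uniform continuity of the adjoint and of addition via Lemma \ref{lem:nbd_fund}(i)--(ii), and for multiplication the decomposition $A_\alpha B_\alpha - A_\beta B_\beta = (A_\alpha - A_\beta)B_\alpha + A_\beta(B_\alpha - B_\beta)$ together with the observation that Cauchy nets are eventually contained in some $\sO_\tau(K,\delta_1)$, with the quantifiers ordered correctly so that $K$ is fixed before $\varepsilon_2$ is chosen.
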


\begin{thm}[{see \cite[Theorem 2]{nelson}}]
\label{thm:approx_mvn}
\textsl{
\begin{itemize}
\item[(i)]  $\mathscr{M}$ is Hausdorff in the measure topology. Thus the natural mapping of $\mathscr{M}$ into its completion $\Wtilde{M}$ is an injection.
\item[(ii)] For $A \in \,\Wtilde{M}$ and $\varepsilon > 0$, there is a projection $E$ in $\mathscr{M}$ such that $AE \in \mathscr{M}$ and $\tau(I-E) \le \varepsilon.$
\end{itemize}
}
\end{thm}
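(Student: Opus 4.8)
The plan is to establish the two parts by different means: spectral theory for the Hausdorff property in (i), and a rapid-Cauchy extraction followed by a single controlling projection for (ii). For (i), since the measure topology is translation-invariant and $\{\sO_\tau(\varepsilon,\delta)\}$ is a fundamental system of neighborhoods of $0$, Hausdorffness is equivalent to $\overline{\{0\}} = \bigcap_{\varepsilon,\delta>0}\sO_\tau(\varepsilon,\delta) = \{0\}$. So I would fix $A \neq 0$ in $\mathscr{M}$ and exhibit $\varepsilon,\delta$ with $A \notin \sO_\tau(\varepsilon,\delta)$. Writing $a = \|A\| = \| |A| \| > 0$ and letting $P = \chi_{[a/2,\,a]}(|A|)$ be the top spectral projection of $|A|$, faithfulness of $\tau$ gives $c := \tau(P) > 0$ (and $P \neq 0$ since $a \in \sigma(|A|)$). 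I claim $A \notin \sO_\tau(a/4,\,c/2)$: if there were a projection $E$ with $\|AE\| \le a/4$ and $\tau(I-E) \le c/2$, then the trace identity $\tau(P \vee E) + \tau(P \wedge E) = \tau(P) + \tau(E)$ yields $\tau(P \wedge E) \ge \tau(P) - \tau(I-E) \ge c/2 > 0$, so $P \wedge E \neq 0$; a unit vector $x$ in its range satisfies $Ex = x$ and $x \in P\mathscr{H}$, whence $\| |A| x\| \ge a/2$ and $\|Ax\| = \| |A| x\| \ge a/2$, contradicting $\|Ax\| = \|AEx\| \le \|AE\| \le a/4$. Thus $A = 0$, the natural map $\mathscr{M} \to \Wtilde{M}$ is injective, and $\Wtilde{M}$, being the completion of a Hausdorff topological group, is itself Hausdorff.

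For (ii), the starting point is that $\{\sO_\tau(1/n,1/n)\}_{n \in \mathbb{N}}$ is a countable fundamental system of neighborhoods of $0$, so the measure topology is metrizable and every $A \in \Wtilde{M}$ is the limit of a Cauchy sequence in $\mathscr{M}$. I would pass to a rapidly Cauchy subsequence $\{A_n\}$ with $A_{n+1} - A_n \in \sO_\tau(2^{-n},\,\varepsilon 2^{-n-1})$, and select witnessing projections $E_n \in \mathscr{M}$ with $\|(A_{n+1}-A_n)E_n\| \le 2^{-n}$ and $\tau(I-E_n) \le \varepsilon 2^{-n-1}$. Setting $E := \bigwedge_{n} E_n$, subadditivity of $\tau$ over joins of projections gives $\tau(I-E) = \tau\big(\bigvee_n (I-E_n)\big) \le \sum_n \tau(I-E_n) \le \varepsilon/2 < \varepsilon$.

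The purpose of the infimum is that $E \le E_n$ for every $n$, so $E_n E = E$ and hence $\|(A_{n+1}-A_n)E\| = \|(A_{n+1}-A_n)E_n E\| \le 2^{-n}$ simultaneously for all $n$. Consequently the telescoping series $A_1 E + \sum_{m \ge 1}(A_{m+1}-A_m)E$ converges in operator norm, so $\{A_n E\}$ is norm-Cauchy with limit some $B \in \mathscr{M}$ (as $\mathscr{M}$ is norm-closed). It remains to identify $B$ with the product $AE$ computed in the completion: since norm convergence implies measure convergence (take $E = I$ in the definition of $\sO_\tau$), we have $A_n E \to B$ in the measure topology, while continuity of the extended multiplication (Theorem \ref{thm:cont_mvn}) gives $A_n E \to AE$; Hausdorffness of $\Wtilde{M}$ from (i) then forces $AE = B \in \mathscr{M}$, with $\tau(I-E) \le \varepsilon$ as required.

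I expect the main obstacle to lie in (ii), namely securing a \emph{single} projection $E$ of small co-trace for which the entire tail $\{A_n E\}$ converges in norm. This is exactly what the nested construction $E = \bigwedge_n E_n$ together with the summability $\sum_n \tau(I-E_n) < \infty$ delivers; once this uniform norm control is in place, the identification of the norm limit $B$ with the algebraic product $AE$ is a soft consequence of the continuity of multiplication and the Hausdorffness established in (i).
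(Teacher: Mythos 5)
Your proof is correct, and there is nothing in the paper to compare it against: the paper imports this statement verbatim from Nelson's Theorem~2 and explicitly collects it ``without proof,'' so you have supplied an argument where the author supplies only a citation. Your argument is essentially the standard (Nelson-style) one, and the steps all check out: in (i), the reduction of Hausdorffness to $\bigcap_{\varepsilon,\delta}\sO_{\tau}(\varepsilon,\delta)=\{0\}$ is legitimate for a translation-invariant topology, the top spectral projection $P=\chi_{[a/2,a]}(|A|)$ is nonzero because $a=\|\,|A|\,\|\in\sigma(|A|)$, faithfulness gives $\tau(P)>0$, and the trace identity $\tau(P\vee E)+\tau(P\wedge E)=\tau(P)+\tau(E)$ forces $P\wedge E\neq 0$, after which $\|Ax\|=\|\,|A|x\,\|\ge a/2$ yields the contradiction. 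In (ii), first countability of the measure topology justifies working with sequences, the single projection $E=\bigwedge_n E_n$ with $\sum_n\tau(I-E_n)\le\varepsilon/2$ gives the uniform norm control on the telescoping series, and the identification of the norm limit $B$ with $AE$ correctly combines the fact that norm convergence implies measure convergence with the continuity of the extended multiplication and the Hausdorffness from (i).
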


\begin{lemma}
\label{lem:approx_sa_mvn}
\textsl{
For a self-adjoint operator $A$ in $\Wtilde{M}$, there is an increasing sequence of projections $\{ E_n \}$ in $\mathscr{M}$ such that $\lim E_n = I$ and $\lim E_nAE_n = A$ in the measure topology.
}
\end{lemma}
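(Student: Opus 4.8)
The plan is to realize $A$, via the identification of $\Wtilde{M}$ with $\afm$ furnished by \cite[Theorem 4]{nelson}, as a self-adjoint operator affiliated with $\mathscr{M}$, and then to take for $\{E_n\}$ the spectral projections of $A$ over the intervals $[-n,n]$. These are the natural candidates: they lie in $\mathscr{M}$, they reduce $A$, and they truncate it to a bounded operator, while the discarded tail is supported on a projection whose trace tends to $0$. Because the chosen projections commute with $A$, the two-sided compression $E_n A E_n$ reduces to a one-sided one, which keeps the bookkeeping clean.

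Concretely, I would first apply the spectral theorem for unbounded self-adjoint operators to write $A = \int_{-\infty}^{\infty} \lambda \, dP_\lambda$, with resolution of the identity $\{P_\lambda\} \subseteq \mathscr{M}$, and set $E_n := \chi_{[-n,n]}(A)$. Then $\{E_n\}$ is an increasing sequence of projections in $\mathscr{M}$ with $E_n \uparrow I$ in the strong operator topology, so normality of $\tau$ gives $\tau(I - E_n) \to 0$. Since $E_n$ reduces $A$, the compression $E_n A E_n$ coincides with $A E_n = \int_{[-n,n]} \lambda \, dP_\lambda$, a bounded self-adjoint element of $\mathscr{M}$ of norm at most $n$, hence a genuine member of $\mathscr{M} \subseteq \Wtilde{M}$.

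Next I would verify both limits directly from the neighborhood basis $\{ \sO_{\tau}(\varepsilon, \delta) \}$ (as extended to $\Wtilde{M}$). For $\lim E_n = I$, the projection $E_n$ itself witnesses membership of $I - E_n$ in $\sO_{\tau}(\varepsilon, \delta)$, since $(I - E_n)E_n = 0$ and $\tau(I - E_n) \le \delta$ for large $n$. For $\lim E_n A E_n = A$, I would observe that $A \afmdiff E_n A E_n = A \afmprod (I - E_n)$ is, under the identification, the tail $\int_{|\lambda| > n} \lambda \, dP_\lambda$; because the spectral measures over disjoint sets are orthogonal, right multiplication by $E_n$ annihilates it, so once more $E_n$ witnesses that $A \afmdiff E_n A E_n \in \sO_{\tau}(\varepsilon, \delta)$ for every $\varepsilon > 0$ as soon as $\tau(I - E_n) \le \delta$. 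Both convergences follow.

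The step requiring the most care — and essentially the only one beyond routine spectral bookkeeping — is justifying that these computations are legitimate in the completion $\Wtilde{M}$ rather than in $\mathscr{M}$ alone: one must know that for $B \in \Wtilde{M}$ the existence of a projection $E$ with $BE \in \mathscr{M}$, $\|BE\| \le \varepsilon$, and $\tau(I - E) \le \delta$ genuinely describes a fundamental neighborhood of $0$, which is Nelson's description of the completion together with Theorem \ref{thm:approx_mvn}. Relatedly, one must confirm that the abstract involution on $\Wtilde{M}$ agrees with the operator adjoint, so that ``self-adjoint in $\Wtilde{M}$'' does deliver a self-adjoint affiliated operator to which the spectral theorem applies; this is immediate from Theorem \ref{thm:cont_mvn} and the $*$-algebra identification $\Wtilde{M} \cong \afm$.
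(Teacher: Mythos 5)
Your proof is correct, but it takes a genuinely different route from the paper's. The paper stays entirely inside the abstract completion: it invokes Theorem \ref{thm:approx_mvn},(ii) to get, for each $n$, a projection $F_n$ with $AF_n \in \mathscr{M}$ and $\tau(I-F_n) \le \tfrac1n$, uses the lattice identity $A(F \vee G) = (AF + AG)\bigl(I - \tfrac{F \wedge G}{2}\bigr)$ to show that the joins $E_n := \vee_{i=1}^n F_i$ still satisfy $AE_n \in \mathscr{M}$ (this is how it manufactures an \emph{increasing} sequence), and then concludes $E_nAE_n \to A$ in one line from the joint continuity of multiplication on $\Wtilde{M}$ (Theorem \ref{thm:cont_mvn}). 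You instead pass through the identification $\Wtilde{M} \cong \afm$ and take spectral projections $E_n = \chi_{[-n,n]}(A)$. This is legitimate and not circular --- the $*$-algebra identification is Nelson's Theorem 4 and does not depend on this lemma --- and it buys you more: your $E_n$ commute with $A$, so $E_nAE_n = AE_n$ with $\|AE_n\| \le n$. What it costs is consistency with the section's program of characterizing $\afm$ intrinsically (the lemma feeds into Proposition \ref{prop:mtilde_pos}, whose whole point is to build the order structure before leaning on the representation), plus the extra care you correctly flag about whether the witness condition $\|BE\|\le\varepsilon$, $\tau(I-E)\le\delta$ describes neighborhoods of $0$ in the completion rather than just in $\mathscr{M}$. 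That last worry can be sidestepped entirely: once you know $\tau(I-E_n)\to 0$, you have $E_n \to I$ in measure by a computation purely inside $\mathscr{M}$, and then $E_nAE_n \to A$ follows from continuity of multiplication on $\Wtilde{M}$, exactly as in the paper; I would recommend closing your argument that way rather than by hand from the neighborhood basis.
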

\begin{proof}
Let $F$ and $G$ be projections in $\mathscr{M}$. Using projection lattice identities,
\begin{align*}
F(F \wedge G) &= F\wedge G = G(F \wedge G),\\
 F + G &= (F \vee G) + (F \wedge G),
\end{align*}
we observe that $A(F \vee G) = A(F + G) - A(F \wedge G) = A(F+G)(I - \frac{(F\wedge G)}{2}) = (AF + AG)(I - \frac{(F\wedge G)}{2})$. Thus if $F$ and $G$ are projections in $\mathscr{M}$ such that $AF$ and $AG$ belong to $\mathscr{M}$, then $A(F \vee G) \in \mathscr{M}$.

Let $A \in \; \Wtilde{M}^{\!\!\!\textrm{sa}}.$ By Theorem \ref{thm:approx_mvn},(ii), for every $n \in \mathbb{N}$ there is a projection $F_n$ in $\mathscr{M}$ such that $AF_n \in \mathscr{M}$ and $\tau(I-F_n) \le \frac{1}{n}.$ Let $E_n := \vee_{i=1}^n F_i$. We note that $AE_n \in \mathscr{M}$ and $\tau(I-E_n) \le \tau(I-F_n) \le \frac{1}{n}.$ Clearly $E_n \uparrow I$ in the measure topology and $E_nAE_n$ is self-adjoint for all $n \in \mathbb{N}$. By Theorem \ref{thm:cont_mvn}, the sequence  $\{ E_nAE_n \}$ converges to $A$ in the measure topology. Thus $\Wtilde{M}^{\!\textrm{sa}}$ is contained in the measure closure of $\mathscr{M}^{\textrm{sa}}$. 
\end{proof}

\begin{prop}
\label{prop:inc_net_conv}
\textsl{
Let $\{ E_{\alpha} \}$ and $\{ F_{\alpha} \}$ be increasing nets of projections in $\mathscr{M}$ (with the same index set). Let $E := \vee_{\alpha} E_{\alpha}$ and $F := \vee_{\alpha} F_{\alpha}$ and $G$ be a projection in $\mathscr{M}$. Then 
\begin{itemize}
\item[(i)] $\{ E_{\alpha} \vee G \}$ converges in measure to $E \vee G$;
\item[(ii)] $\{ E_{\alpha} \wedge G \}$ converges in measure to $E \wedge G$;
\item[(iii)] $\{ E_{\alpha} \wedge F_{\alpha} \}$ converges in measure to $E \wedge F$.
\end{itemize} 
}
\end{prop}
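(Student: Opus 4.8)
The plan is to reduce all three statements to a single observation: an increasing net of projections converges in the measure topology to its lattice supremum. First I would record this \emph{basic lemma}. If $\{P_\alpha\}$ is an increasing net of projections in $\mathscr{M}$ with supremum $P := \vee_\alpha P_\alpha$, then each $P - P_\alpha$ is a projection and, since $\tau$ is normal, $\tau(P - P_\alpha) = \tau(P) - \tau(P_\alpha) \to 0$. Given $\varepsilon, \delta > 0$, the projection $R_\alpha := I - (P - P_\alpha)$ satisfies $(P - P_\alpha)R_\alpha = 0$, so $\|(P - P_\alpha)R_\alpha\| = 0 \le \varepsilon$ while $\tau(I - R_\alpha) = \tau(P - P_\alpha) \le \delta$ eventually; hence $P - P_\alpha \in \sO_{\tau}(\varepsilon, \delta)$ and, as $\sO_{\tau}(\varepsilon, \delta)$ is invariant under negation, $P_\alpha - P \in \sO_{\tau}(\varepsilon, \delta)$ eventually, i.e.\ $P_\alpha \to P$ in measure. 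Since each of $\{E_\alpha \vee G\}$, $\{E_\alpha \wedge G\}$, and $\{E_\alpha \wedge F_\alpha\}$ is an increasing net of projections (monotonicity of $\vee$ and $\wedge$ in each argument), the whole problem collapses to computing the supremum of each net and checking that it equals the claimed limit.

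For (i) this is purely lattice-theoretic: $\vee_\alpha (E_\alpha \vee G) = (\vee_\alpha E_\alpha) \vee G = E \vee G$ by associativity and commutativity of the join, so the basic lemma applies directly. The difficulty, and the main obstacle, is (ii) and (iii): the meet does \emph{not} in general commute with suprema of increasing nets, so a priori $\vee_\alpha(E_\alpha \wedge G)$ could be strictly smaller than $E \wedge G$. The essential tool for overcoming this is that $\mathscr{M}$ is \emph{finite}, so the trace obeys the modular (parallelogram) law $\tau(P \vee Q) + \tau(P \wedge Q) = \tau(P) + \tau(Q)$, which follows from Kaplansky's relation $P \vee Q - Q \sim P - P \wedge Q$ together with the trace-invariance of Murray--von Neumann equivalence.

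For (ii), I would write $\tau(E_\alpha \wedge G) = \tau(E_\alpha) + \tau(G) - \tau(E_\alpha \vee G)$. By normality $\tau(E_\alpha) \to \tau(E)$, and by part (i) together with normality $\tau(E_\alpha \vee G) \to \tau(E \vee G)$, so $\tau(E_\alpha \wedge G) \to \tau(E) + \tau(G) - \tau(E \vee G) = \tau(E \wedge G)$. Writing $S := \vee_\alpha(E_\alpha \wedge G) \le E \wedge G$, normality gives $\tau(S) = \tau(E \wedge G)$; since $\tau$ is faithful and $S \le E \wedge G$, we conclude $S = E \wedge G$, and the basic lemma finishes (ii). Part (iii) is identical in structure: $\{E_\alpha \vee F_\alpha\}$ increases to $E \vee F$ (again lattice-theoretic), so $\tau(E_\alpha \vee F_\alpha) \to \tau(E \vee F)$; then $\tau(E_\alpha \wedge F_\alpha) = \tau(E_\alpha) + \tau(F_\alpha) - \tau(E_\alpha \vee F_\alpha) \to \tau(E \wedge F)$, and the same faithfulness argument gives $\vee_\alpha(E_\alpha \wedge F_\alpha) = E \wedge F$, whence measure convergence by the basic lemma. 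The crux throughout is the interplay of the modular law, normality, and faithfulness that pins down the supremum of the meets; everything else is the soft reduction to increasing nets.
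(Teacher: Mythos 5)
Your proof is correct and follows essentially the same route as the paper, which simply delegates: the paper observes that (by normality of $\tau$) an increasing net of projections converging strong-operator to a projection also converges in measure, and then invokes the proof of Proposition 6.3 of Kadison--Liu, whose content is exactly your identification of the suprema $\vee_\alpha(E_\alpha\wedge G)=E\wedge G$ and $\vee_\alpha(E_\alpha\wedge F_\alpha)=E\wedge F$ via finiteness. Your write-up is a correct self-contained version of that argument, and you rightly isolate the only nontrivial point --- that meets need not commute with increasing suprema in general, so the Kaplansky parallelogram law together with normality and faithfulness of $\tau$ is what pins the suprema down.
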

\begin{proof}
Since $\tau$ is normal, if an increasing net of projections converges to a projection in the strong-operator topology, then the net converges in measure to the same projection. Keeping this in mind, the proof of \cite[Proposition 6.3]{kadison-liu} may be applied here almost verbatim.
\end{proof}

\begin{lemma}
\label{lem:pos_tau_conv}
\textsl{
Let $ \{ A_{\alpha} \}, \{ B_{\alpha} \}$ be nets of positive operators in $\mathscr{M}$ such that $0 \le A_{\alpha} \le B_{\alpha}$. If $\lim B_{\alpha} = 0$ in the measure topology, then $\lim A_{\alpha} = 0$ in the measure topology.
}
\end{lemma}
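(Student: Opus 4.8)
The plan is to reduce the order inequality to a multiplicative relation via the Douglas factorization lemma (Proposition \ref{prop:doug_lem}) and then propagate measure convergence through the neighborhood estimates of Lemma \ref{lem:nbd_fund} and Corollary \ref{cor:nbd_fund}. Since $0 \le A_{\alpha} \le B_{\alpha}$ compares the operators themselves rather than their squares, the natural move is to pass to positive square roots: writing $A_{\alpha} = (\sqrt{A_{\alpha}})^* \sqrt{A_{\alpha}}$ and $B_{\alpha} = (\sqrt{B_{\alpha}})^* \sqrt{B_{\alpha}}$ (all operators lying in $\mathscr{M}$), the hypothesis $A_{\alpha} \le B_{\alpha}$ is exactly condition (i) of the Douglas lemma applied to $\sqrt{A_{\alpha}}, \sqrt{B_{\alpha}}$. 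Hence there is a contraction $C_{\alpha} \in \mathscr{M}$ with $\sqrt{A_{\alpha}} = C_{\alpha} \sqrt{B_{\alpha}}$.

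First I would record the elementary estimate that lets the square root inherit measure smallness: for a positive $B \in \mathscr{M}$ and any projection $E$, one has $\|\sqrt{B}\,E\|^2 = \|(\sqrt{B}\,E)^*(\sqrt{B}\,E)\| = \|E B E\| \le \|BE\|$, since $\|E\| \le 1$. Consequently, if $B \in \sO_{\tau}(\varepsilon^2, \delta)$, witnessed by a projection $E$ with $\|BE\| \le \varepsilon^2$ and $\tau(I - E) \le \delta$, then the same $E$ gives $\|\sqrt{B}\,E\| \le \varepsilon$, so that $\sqrt{B} \in \sO_{\tau}(\varepsilon, \delta)$. Applied to the net, this shows $\sqrt{B_{\alpha}} \to 0$ in the measure topology.

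Next I would push this through the factorization: by Corollary \ref{cor:nbd_fund},(i), multiplying on the left by the contraction $C_{\alpha}$ preserves membership in $\sO_{\tau}(\varepsilon, \delta)$, so $\sqrt{A_{\alpha}} = C_{\alpha} \sqrt{B_{\alpha}} \to 0$ in measure as well. Finally, since $A_{\alpha} = \sqrt{A_{\alpha}} \cdot \sqrt{A_{\alpha}}$, Lemma \ref{lem:nbd_fund},(iii) (or, equivalently, joint continuity of multiplication, Theorem \ref{thm:cont_mvn}) yields $A_{\alpha} \to 0$: given target $\varepsilon, \delta > 0$, pick $\alpha$ large enough that $B_{\alpha} \in \sO_{\tau}(\varepsilon, \delta/2)$, whence $\sqrt{B_{\alpha}} \in \sO_{\tau}(\sqrt{\varepsilon}, \delta/2)$, then $\sqrt{A_{\alpha}} \in \sO_{\tau}(\sqrt{\varepsilon}, \delta/2)$, and finally $A_{\alpha} \in \sO_{\tau}(\varepsilon, \delta)$.

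The main obstacle is conceptual rather than computational: the neighborhood system $\{\sO_{\tau}(\varepsilon, \delta)\}$ and the estimates available for it (Lemma \ref{lem:nbd_fund}, Corollary \ref{cor:nbd_fund}) are fundamentally multiplicative, whereas the hypothesis is an order inequality, and $A \le B$ does not transfer to the norms $\|AE\|, \|BE\|$ directly (nor does $A \le B$ give $A^2 \le B^2$, which is what a naive application of Douglas to $A, B$ would require). The Douglas factorization is precisely what converts the order relation into the multiplicative form the measure topology can digest, and the one-line bound $\|\sqrt{B}\,E\|^2 = \|EBE\| \le \|BE\|$ is what guarantees that the single projection $E$ witnessing smallness of $B_{\alpha}$ simultaneously witnesses smallness of $\sqrt{B_{\alpha}}$, thereby avoiding any appeal to spectral-projection comparisons.
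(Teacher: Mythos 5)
Your proof is correct, but it takes a genuinely different route from the paper's. The paper stays entirely at the level of projections: from a witness $E$ for $B \in \sO_{\tau}(\varepsilon,\delta)$ it extracts $\|EAE\|\le\varepsilon$ via $0\le EAE\le EBE$, then manufactures a second projection $F = I - \mathcal{R}(A(I-E))$, uses Murray--von Neumann comparison ($\mathcal{R}(A(I-E))\sim\mathcal{R}((I-E)A)$, hence $\tau(I-F)\le\tau(I-E)$) and the lattice identity $A(E\wedge F)=EAE(E\wedge F)$ to conclude $A\in\sO_{\tau}(\varepsilon,2\delta)$ --- note that this preserves the $\varepsilon$-parameter exactly. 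You instead convert the order hypothesis into a multiplicative one by applying the Douglas factorization (Proposition \ref{prop:doug_lem}, already available since it is proved in the preceding section without reference to the measure topology) to the square roots, which is the correct level at which to apply it, as you rightly observe that $0\le A\le B$ does not give $A^2\le B^2$; the one-line bound $\|\sqrt{B}E\|^2=\|EBE\|\le\|BE\|$ then lets the same witnessing projection serve for $\sqrt{B_{\alpha}}$, and Corollary \ref{cor:nbd_fund},(i) together with Lemma \ref{lem:nbd_fund},(iii) finishes the chain. Your version costs a square root in the $\varepsilon$-parameter (harmless for the topological statement) and leans on the Douglas lemma as an external tool, but in exchange it is more modular and avoids the range-projection and comparison-theory bookkeeping entirely; for bounded operators you could even cite the classical Douglas lemma rather than the affiliated-operator version. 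Both arguments are sound.
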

\begin{proof}
Let $A, B$ be positive operators in $\mathscr{M}$ such that $ 0\le A \le B$ and for $\varepsilon, \delta > 0$, let $B \in \sO_{\tau}(\varepsilon , \delta)$. Take a projection $E$ in $\mathscr{M}$ such that $\|B E \| \le \varepsilon$ and $\tau(I-E) \le \delta.$ As $0 \le EAE \le EBE \le \varepsilon I$, we have $\|EAE\| \le \varepsilon.$ For the projection $F := I - \mathcal{R} \big( A(I-E) \big)$ in $\mathscr{M}$, we have $FA(I-E) = 0$, and using Proposition \ref{prop:range_proj}, (iv), $$I-F = \mathcal{R}(A(I-E)) \sim \mathcal{R}((I-E)A) \Longrightarrow I-F \precsim I-E \Longrightarrow \tau(I-F) \le \tau(I-E).$$ Since $E(E \wedge F) = F(E \wedge F) = E \wedge F$, we have 
$$(I-E)AF = 0 \Longrightarrow AF = EAF \Longrightarrow A(E \wedge F) = EAE(E \wedge F).$$
Thus $\|A(E \wedge F)\| = \|EAE(E \wedge F)\| \le \|EAE\| \le \varepsilon$ and $\tau(I - E \wedge F) = \tau((I-E) \vee (I-F)) \le \tau(I-E) + \tau(I-F) \le 2\tau(I-E) \le 2\delta.$ We conclude that $A \in \sO_{\tau}(\varepsilon, 2\delta).$ Thus in the measure topology, if $\lim B_{\alpha} = 0$, then $ \lim A_{\alpha} = 0$.
\end{proof}

\begin{prop}
\label{prop:mtilde_pos}
\textsl{
\begin{itemize}
\item[(i)] $\mathscr{M}^{\textrm{sa}}$ is a closed subset of $\mathscr{M}$ in the measure topology. The measure closure of $\mathscr{M}^{\textrm{sa}}$ in $\Wtilde{M}$ is $\Wtilde{M}^{\!\textrm{sa}}$, the set of self-adjoint elements in $\Wtilde{M}$.
\item[(ii)] $\mathscr{M}^{+}$ is a closed subset of $\mathscr{M}$ in the measure topology. Let $\Wtilde{M}^{\!\!+}$ denote the measure closure of $\mathscr{M}^{+}$ in $\Wtilde{M}$. Then $\Wtilde{M}^{\!\!+}$ is a cone in $\Wtilde{M}$.
\end{itemize}
The cone $\Wtilde{M}^{\!\!\!+}$ equips $\Wtilde{M} ^{\!\!\mathrm{sa}}$ with a natural order structure making $(\Wtilde{M}; \Wtilde{M}^{\!\!\!\!+})$ an ordered complex topological $*$-algebra.
}
\end{prop}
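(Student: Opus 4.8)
The plan is to treat the parts in order, relying throughout on the continuity of the $*$-algebra operations (Theorem \ref{thm:cont_mvn}) together with the truncation and approximation results already in place. For part (i), I would first observe that since the involution is continuous on $\mathscr{M}$ (and on $\Wtilde{M}$) and the topology is Hausdorff — on $\mathscr{M}$ by Theorem \ref{thm:approx_mvn},(i), and on the completion $\Wtilde{M}$ by the standard fact that the completion of a Hausdorff topological abelian group is Hausdorff — the singleton $\{0\}$ is closed, so the self-adjoint sets, being the preimages $\{A : A - A^* = 0\}$ of $\{0\}$ under the continuous map $A \mapsto A - A^*$, are closed. Thus $\mathscr{M}^{\textrm{sa}}$ is closed in $\mathscr{M}$ and $\Wtilde{M}^{\!\textrm{sa}}$ is closed in $\Wtilde{M}$. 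Because $\mathscr{M}^{\textrm{sa}} \subseteq \Wtilde{M}^{\!\textrm{sa}}$ and the latter is closed, the measure closure of $\mathscr{M}^{\textrm{sa}}$ lies inside $\Wtilde{M}^{\!\textrm{sa}}$; the reverse inclusion is precisely Lemma \ref{lem:approx_sa_mvn}, which produces for each $A \in \Wtilde{M}^{\!\textrm{sa}}$ a sequence $E_n A E_n \in \mathscr{M}^{\textrm{sa}}$ converging to $A$. This identifies the closure as $\Wtilde{M}^{\!\textrm{sa}}$.

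For part (ii), the crux is closedness of $\mathscr{M}^{+}$. Given a net $\{A_{\alpha}\}$ in $\mathscr{M}^{+}$ converging in measure to some $A \in \mathscr{M}$, part (i) forces $A \in \mathscr{M}^{\textrm{sa}}$. I would then isolate the negative part: let $E := \chi_{(-\infty,0)}(A)$ be the spectral projection and $A_{-} := -AE \ge 0$, so that $E$ commutes with $A$ and $EAE = -A_{-}$. By continuity of left and right multiplication by the bounded operator $E$, we get $E A_{\alpha} E \to EAE = -A_{-}$, and hence $A_{-} + E A_{\alpha} E \to 0$ in measure. Since $0 \le A_{-} \le A_{-} + E A_{\alpha} E$ (each $E A_{\alpha} E \ge 0$), applying Lemma \ref{lem:pos_tau_conv} to the \emph{constant} net $A_{-}$ dominated by $A_{-} + E A_{\alpha} E \to 0$ yields $A_{-} \to 0$; a constant net converging to $0$ in a Hausdorff topology must equal $0$, so $A_{-} = 0$ and $A \ge 0$. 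This constant-net device is the step I expect to be the main obstacle, as it is the only place where the order genuinely interacts with the neighborhoods $\sO_{\tau}(\varepsilon,\delta)$.

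Next I would verify that $\Wtilde{M}^{\!\!+} = \overline{\mathscr{M}^{+}}$ is a cone. Closure under nonnegative scaling and under addition is immediate, since these operations are continuous and $\mathscr{M}^{+}$ is already closed under them, so their closure is too. For pointedness — needed for antisymmetry of the resulting order — suppose $A, -A \in \Wtilde{M}^{\!\!+} \subseteq \Wtilde{M}^{\!\textrm{sa}}$. The closedness of $\mathscr{M}^{+}$ just proved gives $\Wtilde{M}^{\!\!+} \cap \mathscr{M} = \mathscr{M}^{+}$. Choosing projections $E_n$ as in Lemma \ref{lem:approx_sa_mvn} with $A E_n \in \mathscr{M}$ and $E_n A E_n \to A$, self-adjointness of $A$ gives $E_n A \in \mathscr{M}$ and thus $E_n A E_n \in \mathscr{M}$. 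Compressing by $E_n$ the defining approximating nets in $\mathscr{M}^{+}$ for $A$ and for $-A$ shows that both $E_n A E_n$ and $-E_n A E_n$ lie in $\Wtilde{M}^{\!\!+} \cap \mathscr{M} = \mathscr{M}^{+}$. Pointedness of the ordinary positive cone of the von Neumann algebra $\mathscr{M}$ then forces $E_n A E_n = 0$ for every $n$, and passing to the limit gives $A = 0$.

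Finally I would assemble these facts. By Theorem \ref{thm:cont_mvn}, $\Wtilde{M}$ is a topological $*$-algebra; $\Wtilde{M}^{\!\textrm{sa}}$ is a real vector space, closure under real-linear combinations following from continuity of addition and of the involution; and $\Wtilde{M}^{\!\!+}$ is a pointed cone contained in $\Wtilde{M}^{\!\textrm{sa}}$. Consequently the relation $A \le B \Leftrightarrow B - A \in \Wtilde{M}^{\!\!+}$ is a reflexive, antisymmetric (by pointedness), transitive (by closure under addition) partial order that is translation-invariant and compatible with nonnegative scaling, so $\Wtilde{M}^{\!\textrm{sa}}$ is an ordered real vector space and $(\Wtilde{M}; \Wtilde{M}^{\!\!+})$ is an ordered complex topological $*$-algebra, as claimed.
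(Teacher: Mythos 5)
Your proof is correct and follows essentially the same route as the paper's: closedness of $\mathscr{M}^{\textrm{sa}}$ via continuity of the involution together with Lemma \ref{lem:approx_sa_mvn}, closedness of $\mathscr{M}^{+}$ by isolating the negative part and applying Lemma \ref{lem:pos_tau_conv} to the constant net, and pointedness of the cone via compression by projections. The only (harmless) difference is that in the pointedness step you use a single sequence of projections for both $A$ and $-A$, which lets you bypass the paper's appeal to Proposition \ref{prop:inc_net_conv},(iii) for the meets $E_n \wedge F_n$ of its two sequences.
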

\begin{proof}
\noindent (i) Let $\{ A_{\alpha} \}_{\alpha \in \Lambda}$ be a net of self-adjoint operators in $\mathscr{M}$ converging to $A \in \,\Wtilde{M}$. By Theorem \ref{thm:cont_mvn}, since the adjoint operation is continuous in the measure topology, we conclude that $A = \lim A_{\alpha} = \lim A_{\alpha}^* = A^*$ which implies that $A$ is self-adjoint. Thus the closure of $\mathscr{M}^{\textrm{sa}}$ is contained in $\Wtilde{M}^{\!\textrm{sa}}$.

Let $A \in \,\Wtilde{M}^{\!\!\textrm{sa}}.$ From Lemma \ref{lem:approx_sa_mvn}, we have an increasing sequence of projections $\{ E_n \}$ such that $\{ E_nAE_n \}$ is a sequence of self-adjoint operators in $\mathscr{M}$ and $\lim E_nAE_n = A$ in the measure topology. Thus $\Wtilde{M}^{\!\!\textrm{sa}}$ is contained in the closure of $\mathscr{M}^{\textrm{sa}}$. 

\noindent (ii)  Let $A \in \mathscr{M}$ and $\{ H_{\alpha} \}$ be a net of positive operators in $\mathscr{M}^{+}$ such that $\lim H_{\alpha} = A$ in the measure topology. By part (i), we note that $A$ is self-adjoint. Let $A_{+}, A_{-}$ denote the positive and negative parts of $A$, respectively, such that $A = A_{+} - A_{-}$. Let $E$ be a spectral projection of $A$ such that $EAE = -A_{-}$. Thus we have $\lim  (EH_{\alpha}E + A_{-}) = 0$ in the measure topology. As $0 \le A_{-} \le EH_{\alpha}E + A_{-}$, by Lemma \ref{lem:pos_tau_conv} we conclude that $A_{-} = 0$ which implies that $A \in \mathscr{M}^{+}$. Hence $\Wtilde{M}^{\!\!+} \cap \mathscr{M} = \mathscr{M}^{+}.$

 For every projection $F$ in $\mathscr{M}$, the mapping $A \in \mathscr{M}^{+} \mapsto FAF \in \mathscr{M}^{+}$ is measure continuous and thus continuously extends to $\Wtilde{M}^{\!\!\!+}$. In other words, for $H \in \,\Wtilde{M}^{\!\!+}$, we have $FHF \in \,\Wtilde{M}^{\!\!+}$.
 
  Let $H \in \,\Wtilde{M}^{\!\!+} \cap (-\Wtilde{M}^{\!\!+})$. From Lemma \ref{lem:approx_sa_mvn}, we have an increasing sequence of projections $\{ E_n \}$ in $\mathscr{M}$ such that $\lim E_n = I$ in the measure topology and $E_nHE_n \in \mathscr{M} \,\cap \Wtilde{M}^{\!\!+} = \mathscr{M}^{+}$ for all $n \in \mathbb{N}$. Similarly as $-H \in \,\Wtilde{M}^{\!\!+}$, we have an increasing sequence of projections $\{ F_n \}$ in $\mathscr{M}$ such that $\lim F_n = I$ in the measure topology and $F_n (-H)F_n \in \mathscr{M}^{+}$ for all $n \in \mathbb{N}$. Let $G_n := E_n \wedge F_n$ for $n \in \mathbb{N}$. Thus $G_nHG_n, -G_nHG_n \in \mathscr{M}^{+}$ which implies that $G_n HG_n = 0$ for all $n \in \mathbb{N}$. By Proposition \ref{prop:inc_net_conv}, (iii), we have $\lim G_n = I$ in the measure topology and hence $H = \lim G_nHG_n = 0$. Consequently, $\Wtilde{M}^{\!\!+} \cap (-\Wtilde{M}^{\!\!+}) = \{0\}.$

 Since by Theorem \ref{thm:cont_mvn} the mapping $+ : \mathscr{M}^{+} \times \mathscr{M}^{+} \to \mathscr{M}^{+} $ is continuous in the measure topology, we note that $\Wtilde{M}^{\!\!\!+}$ is closed under addition. Thus $\Wtilde{M}^{\!\!+}$ is a cone in $\Wtilde{M}$.
\end{proof}

\begin{remark}
From \cite[Theorem 4]{nelson} and the discussion following it in \cite{nelson}, we note that $\Wtilde{M}$ and $\afm$ are isomorphic as unital $*$-algebras with the isomorphism extending the identity mapping of $\mathscr{M} \to \mathscr{M}$. For $A \in \,\Wtilde{M}$, we denote the corresponding operator in $\afm$ by $M_A$. 

\end{remark}

\begin{prop}
\label{prop:pos_equiv}
\textsl{
Let $A$ be a self-adjoint operator in $\Wtilde{M}$. Then $A$ is positive, that is, $A \in \,\Wtilde{M}^{\!\!\!+}$ if and only if $M_A \in \mathscr{M}_{\textrm{aff}}$ is positive.}
\end{prop}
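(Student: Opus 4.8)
The plan is to prove the two implications separately, using the spectral decomposition of the self-adjoint affiliated operator $M_A$ to pass between the abstract completion $\Wtilde{M}$ and the concrete algebra $\afm$. Since the isomorphism $\Wtilde{M} \cong \afm$ of the preceding remark is a $*$-algebra isomorphism extending the identity on $\mathscr{M}$, the self-adjointness of $A$ guarantees that $M_A$ is a self-adjoint operator affiliated with $\mathscr{M}$, so the spectral theorem applies to it. Throughout I would use the identification of the measure topology on $\Wtilde{M}$ with that on $\afm$, together with the fact recorded after the definition of the measure topology: for a positive operator in $\afm$ with spectral resolution $\{E_\lambda\}$, the net of bounded truncations is Cauchy in measure and converges to the operator.

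For the implication that $M_A$ positive implies $A \in \Wtilde{M}^{\!\!+}$, I would argue directly. Writing $M_A = \int_0^\infty \lambda \, dE_\lambda$ with spectral projections $E_\lambda \in \mathscr{M}$, each truncation $M_A E_\lambda = \int_0^\lambda s \, dE_s$ is a bounded positive operator and hence lies in $\mathscr{M}^{+}$. By the convergence fact above, $\{M_A E_\lambda\}$ converges to $M_A$ in measure, so under the identification $\Wtilde{M} \cong \afm$ the net $\{M_A E_\lambda\} \subseteq \mathscr{M}^{+}$ converges in measure to $A$. Since $\Wtilde{M}^{\!\!+}$ is by definition the measure closure of $\mathscr{M}^{+}$, this gives $A \in \Wtilde{M}^{\!\!+}$.

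For the converse, I would argue by contradiction. By Proposition \ref{prop:mtilde_pos}, (i), $A$ is self-adjoint, so $M_A$ is self-adjoint; suppose it is not positive. By the spectral theorem there exist reals $0 < c < d$ for which the spectral projection $E := E^{M_A}_{(-d,-c)}$ is nonzero, and $E$ lies in $\mathscr{M}$. Since $E$ commutes with $M_A$, the compression $E M_A E = M_A E$ is a \emph{bounded} operator satisfying $E M_A E \le -cE$; in particular $EAE$ corresponds under the isomorphism to a bounded operator and thus lies in $\mathscr{M}$. On the other hand, the map $H \mapsto EHE$ carries $\Wtilde{M}^{\!\!+}$ into itself (as established in the proof of Proposition \ref{prop:mtilde_pos}), so $EAE \in \Wtilde{M}^{\!\!+}$. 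Combining this with the identity $\Wtilde{M}^{\!\!+} \cap \mathscr{M} = \mathscr{M}^{+}$ (also established there) yields $E M_A E \ge 0$. Then $0 \le E M_A E \le -cE$ forces $cE \le 0$ and hence $E = 0$, a contradiction. Therefore $M_A$ is positive.

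The main obstacle is the bookkeeping in the converse direction: one must ensure that the compression $EAE$, computed abstractly in $\Wtilde{M}$, genuinely corresponds to the bounded operator $E M_A E \in \mathscr{M}$, so that both measure-topological facts about $\Wtilde{M}^{\!\!+}$ can be brought to bear simultaneously. This is precisely why I would select a \emph{bounded} negative spectral interval $(-d,-c)$ rather than an unbounded half-line: it guarantees the boundedness of $M_A E$ and keeps the compression inside $\mathscr{M}$, which is what unlocks the intersection identity $\Wtilde{M}^{\!\!+} \cap \mathscr{M} = \mathscr{M}^{+}$.
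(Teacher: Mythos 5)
Your proof is correct, and while the direction ``$M_A$ positive $\Rightarrow A \in \,\Wtilde{M}^{\!\!+}$'' coincides with the paper's (truncate by the spectral projections of $M_A$ for $[0,n]$ and use that $\Wtilde{M}^{\!\!+}$ is the measure closure of $\mathscr{M}^{+}$), your argument for the forward direction is genuinely different. The paper proceeds directly: it takes the approximating sequence $\{E_nAE_n\}\subseteq\mathscr{M}^{+}$ from the proof of Proposition \ref{prop:mtilde_pos}, sets $F_n$ equal to the spectral projection of $M_A$ for $(-n,0)$, shows $E_n\wedge F_n=0$ by testing $\langle M_Ax,x\rangle$ on $E_n(\mathscr{H})\cap\mathscr{D}(M_A)$, and then invokes Proposition \ref{prop:inc_net_conv},(iii) to pass to the limit $I\wedge F=F=0$. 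You instead argue by contradiction with a single \emph{bounded} negative spectral window $E=E^{M_A}_{(-d,-c)}$, compress $A$ by $E$, and play off the two facts recorded in the proof of Proposition \ref{prop:mtilde_pos},(ii) --- that $H\mapsto EHE$ maps $\Wtilde{M}^{\!\!+}$ into itself and that $\Wtilde{M}^{\!\!+}\cap\mathscr{M}=\mathscr{M}^{+}$ --- against the operator inequality $EM_AE\le -cE$. Your route avoids the lattice-convergence machinery of Proposition \ref{prop:inc_net_conv} entirely and localizes the whole argument to one compression; the paper's route avoids having to identify the abstractly-defined compression $EAE$ with the concrete bounded operator $EM_AE$, which is the one piece of bookkeeping your argument must (and does) justify via the continuity of multiplication and the $*$-isomorphism $\Wtilde{M}\cong\afm$. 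Both are sound; yours is arguably the more economical given that Proposition \ref{prop:mtilde_pos} has already been proved.
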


\begin{proof}
Let $A \in \,\Wtilde{M}^{\!\!+}$. From the proof of Proposition \ref{prop:mtilde_pos},(ii) (recall that $\Wtilde{M}^{\!\!+} \cap \mathscr{M} = \mathscr{M}^{+}$), we have an increasing sequence of projections $\{ E_n \}$ in $\mathscr{M}$ which is measure convergent to $I$ such that the sequence of positive operators $\{ E_nAE_n \}$ in $\mathscr{M}$ is measure convergent to $A$. For $n \in \mathbb{N}$, let $F_n$ denote the spectral projection of $M_A \in \afm$ corresponding to the interval $(-n, 0)$, and let $F := \vee_{i=1}^{\infty} F_i$. 

As $M_{E_kAE_k} = \overline{M_{E_k} M_A M_{E_k}} = \overline{E_k M_A E_k}$ is a bounded positive operator in $\mathscr{M}$, for any vector $x \in E_n(\mathscr{H}) \cap \mathscr{D}(M_A)$, we have $\langle M_A x, x \rangle \ge 0.$ Thus $E_n(\mathscr{H}) \cap F_n(\mathscr{H}) = \{0\}$ for all $n \in \mathbb{N}$. Equivalently, $E_n \wedge F_n = 0$ for all $n \in \mathbb{N}$. By Proposition \ref{prop:inc_net_conv}, the sequence $\{ E_n \wedge F_n \}$ is converges in measure to $I \wedge F = F$. Thus $F = 0$.
This shows that $M_A$ is a positive operator in $\afm$. 

We next prove the converse. For $A \in \,\Wtilde{\mathscr{M}}$, let $M_A$ be a positive operator in $\afm$ and for $n \in \N$, let $E_n \in \mathscr{M}$ denote the spectral projection of $M_A$ corresponding to the interval $[0, n]$. Clearly $E_n \uparrow I$ in measure and for each $n \in \N$, $AE_n \in \mathscr{M}$ is a bounded positive operator. Since $\{ AE_n \}$ converges in measure to $A$, we conclude that $A \in \,\Wtilde{M}^{\!\!+}$.
\end{proof}

\begin{cor}
\label{cor:pos_equiv}
\textsl{
Let $A \in \,\Wtilde{M}^{\!\!+}$, and $B \in \Wtilde{M}$. Then $B^*AB$ is in $\Wtilde{M}^{\!\!+}$.
}
\end{cor}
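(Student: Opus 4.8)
The plan is to transport the statement to $\afm$ through the $*$-isomorphism $A \mapsto M_A$ and reduce it to the elementary fact that $C^* \afmprod C$ is positive in $\afm$ for every $C \in \afm$. First, since $A = A^*$ we have $(B^* A B)^* = B^* A^* B = B^* A B$, so $B^* A B$ is self-adjoint in $\Wtilde{M}$; hence, by Proposition \ref{prop:pos_equiv}, it suffices to prove that the corresponding operator $M_{B^*AB}$ is positive in $\afm$.

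Next I would produce a square root of $M_A$ inside $\afm$. By Proposition \ref{prop:pos_equiv}, $M_A$ is a positive operator in $\afm$, so writing its spectral decomposition $M_A = \int_0^\infty \lambda \, dE_\lambda$ with $\{E_\lambda\} \subseteq \mathscr{M}$ and applying the Borel functional calculus, the operator $K := \int_0^\infty \sqrt{\lambda}\, dE_\lambda$ is a positive operator affiliated with $\mathscr{M}$, whence $K \in \afm$ and $K \afmprod K = M_A$. Because the isomorphism $\Wtilde{M} \cong \afm$ is multiplicative and intertwines the involutions, $M_{B^*AB} = (M_B)^* \afmprod M_A \afmprod M_B = (K \afmprod M_B)^* \afmprod (K \afmprod M_B)$, using $K = K^*$ and the associativity of $\afmprod$.

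Setting $C := K \afmprod M_B \in \afm$, it then remains to show $C^* \afmprod C \ge 0$, which is where the genuine content sits (though it is standard). Since $C$ is closed and densely defined, von Neumann's theorem guarantees that $C^*C$ on its natural domain is a positive self-adjoint operator; being already closed, it coincides with the algebra product $C^* \afmprod C$, and positivity follows from $\langle (C^* \afmprod C)x, x \rangle = \langle Cx, Cx \rangle = \|Cx\|^2 \ge 0$ for $x$ in its domain. Hence $M_{B^*AB} = C^* \afmprod C$ is positive in $\afm$, and pulling back through Proposition \ref{prop:pos_equiv} gives $B^* A B \in \Wtilde{M}^{\!\!+}$.

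The only point demanding care is the bookkeeping in the second step: I must ensure that the products occurring are the algebra operations $\afmprod$ of the Murray-von Neumann algebra rather than naive operator products, so that associativity and the identity $(XY)^* = Y^* \afmprod X^*$ hold honestly in $\afm$. Given the isomorphism theorem and the $*$-algebra structure already established in the excerpt, this is purely a matter of rewriting, and the substantive input is merely the positivity of $C^* \afmprod C$, which the above computation (or equivalently the polar decomposition $C = U|C|$, giving $C^* \afmprod C = |C| \afmprod |C|$) supplies.
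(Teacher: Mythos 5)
Your proposal is correct and follows essentially the same route as the paper: both reduce the claim via Proposition \ref{prop:pos_equiv} to showing that $M_{B^*AB} = \overline{M_B^* M_A M_B}$ is positive in $\afm$ and then transport back. The paper simply asserts that positivity (citing Nelson), whereas you supply the standard justification via the square root $K$ of $M_A$ and the positivity of $C^* \afmprod C$ --- added detail rather than a different argument.
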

\begin{proof}
Since by Proposition \ref{prop:pos_equiv} $M_A$ is a positive operator in $\afm$, we observe that $M_{B^*AB} = \overline{M_B ^* M_A M_B}$ (see \cite[Theorem 4]{nelson}) is a positive operator in $\afm$. The assertion follows from Proposition \ref{prop:pos_equiv}.
\end{proof}

\begin{remark}
\label{rmrk:maff_pos}
From Proposition \ref{prop:pos_equiv}, the $*$-algebra isomorphism $ A \mapsto M_A :\Wtilde{M} \mapsto\afm$ also induces an order isomorphism of $(\Wtilde{M};\Wtilde{M}^{\!\!+}) $ and $(\afm; \afm ^{+})$. In the rest of the article, we use $\Wtilde{M}$ and $\afm$ interchangeably by transferring the topology of $\Wtilde{M}$ to $\afm$ via the isomorphism. 
\end{remark}

\begin{remark}[Universal property of measure completion]
Let $K$ be a unital ordered {\it complete} complex topological $*$-algebra and consider $\mathscr{M}$ with the measure topology. Let $\iota : \mathscr{M} \to \;\Wtilde{M}$ denote the natural embedding.\ For any continuous order-preserving unital $*$-homomorphism $\Phi : \mathscr{M} \to K$, there is a unique continuous order-preserving unital $*$-homomorphism $\Psi :\; \Wtilde{M} \to K$ such that $\Phi = \Psi \circ \iu.$

Let $\mathscr{M}_1$ and $\mathscr{M}_2$ be $II_1$ factors that are isomorphic as von Neumann algebras, that is, there is a unital $*$-isomorphism $\Phi : \mathscr{M}_1 \to \mathscr{M}_2$ (which is automatically normal). It is straightforward to see from the definition of measure topology that $\Phi$ is measure continuous (considering both $\mathscr{M}_1$ and $\mathscr{M}_2$ with the measure topology.) Thus $\Phi$ induces a continuous unital $*$-isomorphism $\widetilde{\Phi} :\; \widetilde{\mathscr{M}_1} \to  \widetilde{\mathscr{M}_2}$ which is also an order-isomorphism. It is in this sense that we regard $\Wtilde{M}$ as an intrinsic description of $\mathscr{M}_{\textrm{aff}}$.  
\end{remark}

\begin{remark}
\label{rmrk:intrinsic_rank}
Note that a self-adjoint idempotent in $\Wtilde{M}$ is a projection in $\mathscr{M}$ (by the spectral theorem). For a self-adjoint operator $A \in \; \Wtilde{M}$, we define the {\it range projection} of $A$ to be the smallest projection $E$ in $\mathscr{M}$ such that $EA = A$.  If $A$ is not self-adjoint, the range projection of $A$ is defined to be the range projection of $A^*A$. This is compatible with the definition of the range projection for a represented version of $A$ (in $\afm$). Thus Definition \ref{def:rank} gives an intrinsic notion of the rank functional on $\Wtilde{M}$.
\end{remark}

Let $\mathscr{A}$ be a maximal abelian self-adjoint $*$-subalgebra of $\mathscr{M}$ which is generated by a maximal totally ordered set of projections $\{ E_t \}_{t \in [0, 1]}$ in $\mathscr{M}$ such that $\tau(E_s) = s$ for $s \in [0, 1]$.  With $\mu$ denoting the Lebesgue measure on $[0, 1]$, we may view $\mathscr{A}$ as $L^{\infty}([0, 1]; d\mu)$ with the projection $E_t$ corresponding to the characteristic function $\chi_{[0, t]}$. Furthermore, $\tau : \mathscr{A} \to \mathbb{C}$ is given by the mapping $f \in L^{\infty}([0, 1]) \mapsto \int_{0}^{1} f \, d\mu$. The ultraweak topology on $\mathscr{A}$ corresponds to the weak-$*$ topology induced by the predual $L^1([0, 1]; d\mu).$
Recall that for a von Neumann subalgebra $\mathscr{S}$ of $\mathscr{M}$, the ultraweak topology on $\mathscr{S}$ is equivalent to the subspace topology on $\mathscr{S}$ inherited from the ultraweak topology on $\mathscr{M}$. In this framework, we discuss two examples (Example \ref{ex:not_coarser}, \ref{ex:not_finer}) with the goal of answering some natural questions about the measure topology, juxtaposing it with the ultraweak topology. These results are encapsulated in Theorem \ref{thm:top_comp}.

\begin{example}
\label{ex:not_coarser}
For $n \in \mathbb{N}$, define a unitary operator $U_n := \int_{0}^{1} e^{2 \pi \iu n\lambda} \; dE_{\lambda} \in \mathscr{A}$ corresponding to the function $u_n$ in $L^{\infty}([0, 1]; d\mu)$ given by $x \in [0, 1] \mapsto e^{2 \pi \iu n x} \in \mathbb{C}.$ Approximating $f \in L^1([0, 1]; d\mu)$ by step functions and noting that $\int_{0}^{1} e^{2 \pi \iu n x} \chi_{[a, b]}\; d\mu(x) = \frac{1}{2 \pi \iu n}(e^{2 \pi \iu n b} - e^{2 \pi \iu n a}) \longrightarrow 0$ as $n \longrightarrow \infty$ (for $0 \le a < b \le 1$), a standard argument shows that $u_n$ converges in the weak-$*$ topology to $0$. 

 Let $X_n := \bigcup_{k=0}^{n-1} \; [\frac{1}{6n} + \frac{k}{n}, \frac{5}{6n} + \frac{k}{n}] \subset [0, 1]$ and let $F_n$ denote the projection in $\mathscr{A}$ corresponding to the characteristic function of $X_n$. We note that $\tau(F_n) = \mu(X_n) = \frac{2}{3}$ and for $x \in X_n$, 
\begin{equation}
\label{eqn:unitary_conv}
|u_m(x) - u_{m+n}(x)| = |e^{2 \pi \iu nx}  -1 | = 2|\sin ( \pi nx)| \ge 1, \forall m \in \mathbb{N}.
\end{equation}
Let $G$ be a projection in $\mathscr{M}$ such that $\tau(I - G) \le \frac{1}{2}$. As $\tau(G \wedge F_n) = \tau(G) + \tau(F_n) - \tau(G \vee F_n) \ge \frac{1}{2} + \frac{2}{3} - 1 = \frac{1}{6} > 0$, we have $G \wedge F_n \ne 0$. For a unit vector $x \in F_n(\mathscr{H})$, from the inequality in (\ref{eqn:unitary_conv}) we observe that $\|(U_m - U_{m+n})x\| \ge \|x\|$. Thus $\|(U_m - U_{m+n})(G \wedge F_n)\| \ge 1$ and consequently $\|(U_m - U_{m+n})G\| \ge 1.$ We conclude that $U_m - U_{m+n} \notin \sO_{\tau}(\frac{1}{2}, \frac{1}{2})$ for all positive integers $m, n$. In summary,
\begin{itemize}
\item[(i)] $\{ U_n \}$ converges to $0$ in the ultraweak topology;
\item[(ii)] $\{ U_n \}$ is not a Cauchy sequence in the measure topology;
\item[(iii)] $\{ U_n \}$ has no Cauchy subsequences in the measure topology.
\end{itemize}  

\end{example}

\begin{example}
\label{ex:not_finer}
For $n \in \mathbb{N}$, define $H_n := n E_{\frac{1}{n}}$. Since $H_n \in \sO_{\tau}(\frac{1}{n}, \frac{1}{n})$, the sequence $\{ H_n \}$ converges to $0$ in the measure topology. Let $h_n$ be the element of $L^{\infty}([0, 1]; d\mu)$ corresponding to $H_n$. The function $x \in (0, 1] \mapsto \frac{1}{\sqrt{x}} \in [1, \infty)$ is in $L^1([0, 1]; d\mu)$. We note that 
$$\int_{0}^{1} (h_{2n}(x) - h_{n}(x))\frac{1}{\sqrt{x}} \; d\mu(x) = 2(\sqrt{2} - 1) \sqrt{n}.$$
This shows that $\{ H_n \}$ is not a Cauchy sequence in the ultraweak topology.
\end{example}

\begin{thm}
\label{thm:top_comp}
\textsl{ For a $II_1$ factor $\mathscr{M}$, we have the following:
\begin{itemize}
\item[(i)] The measure topology on $\mathscr{M}$ is coarser than the norm topology.
\item[(ii)] The measure topology on $\mathscr{M}$ is neither coarser nor finer than the ultraweak topology on $\mathscr{M}$.
\item[(iii)] The trace $\tau$ is {\bf not} continuous in the measure topology.
\item[(iv)] The restriction of the trace $\tau$ to the unit ball of $\mathscr{M}$ is continuous in the measure topology.
\item[(v)] The unit ball of $\mathscr{M}$ is {\bf not} compact in the measure topology.
\item[(vi)] The unit ball of $\mathscr{M}$ is complete in the measure topology.
\end{itemize}
}
\end{thm}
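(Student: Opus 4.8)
The plan is to prove each of the six assertions in turn, leaning on Nelson's fundamental estimates (Lemma \ref{lem:nbd_fund}) and the examples already constructed. For \textbf{(i)}, I would observe that if $\|A\| \le \varepsilon$, then taking $E = I$ gives $\|AE\| = \|A\| \le \varepsilon$ and $\tau(I - E) = 0 \le \delta$, so $A \in \sO_{\tau}(\varepsilon, \delta)$. Hence every norm ball around $0$ sits inside every measure-neighborhood of $0$, which by translation-invariance shows the measure topology is coarser than the norm topology. For \textbf{(ii)}, the two examples do all the work: Example \ref{ex:not_coarser} exhibits a sequence $\{U_n\}$ converging ultraweakly to $0$ that is not even measure-Cauchy, so the measure topology is not coarser than the ultraweak topology; Example \ref{ex:not_finer} exhibits a sequence $\{H_n\}$ converging to $0$ in measure that is not ultraweak-Cauchy, so the measure topology is not finer than the ultraweak topology. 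I would simply cite these and note that neither comparison holds.

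For \textbf{(iii)} I would reuse the operators $H_n = n E_{1/n}$ from Example \ref{ex:not_finer}: they converge to $0$ in measure, yet $\tau(H_n) = n \cdot \tau(E_{1/n}) = n \cdot \tfrac{1}{n} = 1$ for every $n$, so $\tau(H_n) \to 1 \ne 0 = \tau(0)$, contradicting continuity at $0$. For \textbf{(v)}, the same sequence $\{H_n\}$ (or a rescaled bounded analogue) together with the failure of continuity gives non-compactness; more directly, I would use Example \ref{ex:not_coarser}, where $\{U_n\}$ is a sequence of unitaries in the unit ball with no measure-Cauchy subsequence (part (iii) of that example), so the unit ball cannot be compact (indeed not even sequentially compact) in the measure topology.

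The substantive parts are \textbf{(iv)} and \textbf{(vi)}. For \textbf{(iv)}, the key point is that on the unit ball the measure topology behaves like an $L^1$-type convergence for the trace. Concretely, for contractions $A, B$ with $A - B \in \sO_{\tau}(\varepsilon, \delta)$, I choose a projection $E$ with $\|(A-B)E\| \le \varepsilon$ and $\tau(I - E) \le \delta$, and then estimate
\begin{align*}
|\tau(A) - \tau(B)| = |\tau(A - B)| &\le |\tau((A-B)E)| + |\tau((A-B)(I-E))|\\
&\le \|(A-B)E\| + \|A - B\|\,\tau(I - E)\\
&\le \varepsilon + 2\delta,
\end{align*}
using $|\tau(X)| \le \|X\|$, $|\tau(XF)| \le \|X\|\,\tau(F)$ for a projection $F$, and $\|A - B\| \le 2$ since $A, B$ are contractions. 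As $\varepsilon, \delta \to 0$ this forces $\tau(A) \to \tau(B)$, giving continuity of $\tau$ on the unit ball. I expect the main obstacle to be \textbf{(vi)}: showing the unit ball is complete in the (non-complete ambient) measure topology. The strategy is to take a measure-Cauchy net $\{A_\alpha\}$ of contractions; since $\afm \cong \Wtilde{M}$ is complete (Theorem \ref{thm:cont_mvn}), it converges in measure to some $A \in \Wtilde{M}$, and the real work is to show $A$ lies back in the unit ball of $\mathscr{M}$, i.e.\ that the limit is bounded with $\|A\| \le 1$. For this I would argue that $A^* \afmprod A \le I$ by using Lemma \ref{lem:pos_tau_conv} and the order-theoretic description of positivity from Proposition \ref{prop:mtilde_pos}: the relations $I - A_\alpha^* A_\alpha \ge 0$ pass to the measure limit because $\Wtilde{M}^{+}$ is measure-closed, so $I - A^* \afmprod A \in \Wtilde{M}^{+}$, whence $A^* \afmprod A \le I$ forces $A \in \mathscr{M}$ with $\|A\| \le 1$ via the spectral description of boundedness in $\afm$.
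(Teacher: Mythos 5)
Your proposal is correct and follows essentially the same route as the paper for all six parts: taking $E = I$ for (i), invoking Examples \ref{ex:not_coarser} and \ref{ex:not_finer} for (ii), (iii), and (v), and, for (vi), passing the relations $I - A_\alpha^* A_\alpha \ge 0$ to the measure limit via the closedness of the positive cone and then recovering boundedness of the limit from $A^* \afmprod A \le I$. The only cosmetic difference is in (iv), where you bound $|\tau((A-B)(I-E))|$ by $\|A-B\|\,\tau(I-E) \le 2\delta$ whereas the paper applies the Kadison--Schwarz inequality to obtain $\varepsilon + \sqrt{\delta}$; both estimates suffice.
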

\begin{proof}
\noindent (i) Let $B(\varepsilon) := \{ A : \|A\| < \varepsilon \}$. For all $\delta > 0$, we observe that $B(\varepsilon) \subseteq \sO_{\tau}(\varepsilon, \delta)$ by using the identity matrix as the projection `$E$' appearing in the definition of $\sO_{\tau}(\varepsilon, \delta)$. Thus the measure topology on $\mathscr{M}$ is coarser than the norm topology. 

In the setting of Example \ref{ex:not_coarser}, since $E_{\frac{1}{n}} \in \sO_{\tau}(\frac{1}{n}, \frac{1}{n} )$, we have $E_{\frac{1}{n}} \rightarrow 0$ in the measure topology as $n \rightarrow \infty$. But clearly $E_{\frac{1}{n}}$ is not a Cauchy sequence in the norm topology. Hence the measure topology, in general, can be different from the norm topology.

\vskip 0.05in

\noindent (ii) From Example \ref{ex:not_coarser}, it follows that the measure topology is not coarser than the ultraweak topology. Similarly, from Example \ref{ex:not_finer}, it follows that the measure topology is not finer than the ultraweak topology.
\vskip 0.05in

\noindent (iii) Consider the sequence of positive operators $\{ H_n \}$ as defined in example \ref{ex:not_finer}. We note that $H_n \longrightarrow 0$ in the measure topology whereas $1 = \tau(H_n) \longrightarrow 1$. Thus $\tau$ is not continuous in the measure topology.
\vskip 0.05in

\noindent (iv) Let $(\mathscr{M})_1$ denote the unit ball of $\mathscr{M}$. For $A \in \sO_{\tau}(\varepsilon, \delta) \cap (\mathscr{M})_1$, let $E$ be a projection in $\mathscr{M}$ such that $\| AE \| \le \varepsilon$ and $\tau(I - E) \le \delta$. Using the Kadison-Schwarz inequality (see \cite[Theorem 1]{kadison-schwarz}), we observe that 
\begin{align*}|\tau(A)| &\le |\tau(AE)| + |\tau(A(I-E))| \\
&\le  \|AE \| + \sqrt{\tau(A^*A)} \cdot \sqrt{\tau(I-E)} \\
&\le \varepsilon + \sqrt{\delta}.\end{align*} Thus the restriction of $\tau$ to $(\mathscr{M})_1$ is continuous in the measure topology.
\vskip 0.05in

\noindent (v) From Example \ref{ex:not_coarser}, the sequence of unitary operators $\{ U_n \}$ has no subsequences that are convergent in the measure topology. In other words, the unit ball of $\mathscr{M}$ is not sequentially compact and hence, not compact.

\vskip 0.05in

\noindent (vi) Let $\{ A_{\alpha} \}$ be a Cauchy net of operators in the unit ball of $\mathscr{M}$ converging to $A \in \; \Wtilde{M}$ in the measure topology. The net of positive operators $\{ A_{\alpha}^* A_{\alpha} \}$ in $\mathscr{M}$ converges to $A^*A$ in the measure topology. Since $\|A_{\alpha}\| \le 1$, we have $A_{\alpha}^*A_{\alpha} \le I$. By Proposition \ref{prop:mtilde_pos},(ii), we note that $A^*A \le I.$ Let $M_A$ be the operator in $\afm$ coresponding to $A$. For a vector $x \in \mathscr{D}(M_{A^*A}) \subseteq \mathscr{D}(M_A)$, we have $\|M_A x\|^2 = \langle M_A x, M_A x \rangle  = \langle M_{A^*A}x, x\rangle \le \|x\|^2$ and thus $M_A$ is a bounded operator in the unit ball of $\mathscr{M}$. Thus every Cauchy net in the unit ball of $\mathscr{M}$ converges to an operator in the unit ball of $\mathscr{M}$.
\end{proof}

\section{Matrix algebras over $\afm$ and $\mathscr{M}_{\Delta}$}

In this section, $\mathscr{M}$ denotes a $II_1$ factor acting on the Hilbert space $\mathscr{H}$ and $\tau$ denotes the unique faithful normal tracial state on $\mathscr{M}$. We note that $M_n(\mathscr{M})$ is a $II_1$ factor with the standard matrix action on $\mathscr{H}^{(n)}$ (:= $\oplus_{i=1}^n \mathscr{H}$). We denote the normalized trace on $M_n(\mathscr{M})$ by $\text{\boldmath$\tau _n$}$. Let $\Delta, \Delta_n$ denote the Fuglede-Kadison determinant on $\mathscr{M}, M_n(\mathscr{M})$, respectively. The $(i, j)$th entry of a matrix $A$ is denoted by $A_{ij}$. The matrix unit which contains $I$ in the $(i, j)$th entry and $0$'s elsewhere is denoted by $\mathcal{E}^{(ij)}.$ In \S \ref{sec:intro}, we briefly discussed the construction of matrix algebras over $*$-algebras. Since $\afm$ and $\mathscr{M}_{\Delta}$ are $*$-algebras, this provides a formal description of $M_n(\afm)$ and $M_n(\mathscr{M}_{\Delta})$. In this section, we provide operator algebraic descriptions of $M_n(\afm)$ and $M_n(\mathscr{M}_{\Delta})$ by showing that we have the isomorphisms $M_n(\afm) \cong M_n(\mathscr{M})_{\textrm{aff}}$, and $M_n(\mathscr{M}_{\Delta}) \cong M_n(\mathscr{M})_{\Delta_n}$, in a natural way.

Let $\mathcal{P}$ denote the product topology on $M_n(\mathscr{M})$ (viewed as $\mathscr{M} \times \overset{n^2}{\cdots} \times \mathscr{M}$) derived from the measure topology on $\mathscr{M}$, and $\mathcal{T}$ denote the measure topology on $M_n(\mathscr{M})$. We note that the topologies $\mathcal{P}$ and $\mathcal{T}$ are both translation-invariant. From our discussion in \S \ref{sec:tau_meas_top}, it is straightforward to see that the completion of $M_n(\mathscr{M})$ in $\mathcal{T}$ may be identified with $M_n(\mathscr{M})_{\textrm{aff}}$ as an ordered unital complex $*$-algebra. In Lemma \ref{lem:product_top}, we observe that the completion of $M_n(\mathscr{M})$ in $\mathcal{P}$ may be identified with $M_n(\afm)$ as a unital complex $*$-algebra.

\begin{lemma}
\label{lem:product_top}
The mappings
\begin{align}
A &\mapsto A^{\dagger} \textrm{ of } M_n(\mathscr{M}) \to M_n(\mathscr{M}),\\
(A, B) &\mapsto A+B \textrm{ of } M_n(\mathscr{M}) \times M_n(\mathscr{M}) \to M_n(\mathscr{M}),\\
(A, B) &\mapsto AB \textrm{ of } M_n(\mathscr{M}) \times M_n(\mathscr{M}) \to M_n(\mathscr{M}),
\end{align}
are continuous in the topology $\mathcal{P}$.\ Thus we may identify the complex $*$-algebra $M_n(\Wtilde{M})$ with the completion of $M_n(\mathscr{M})_{\mathcal{P}}$.
\end{lemma}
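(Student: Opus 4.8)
The plan is to verify continuity of the three algebraic operations directly in the product topology $\mathcal{P}$ and then invoke the universal property of completions to pass from continuous operations to the completed algebra. The key observation is that the entries of the output matrix for each operation are built from the entries of the input matrices using only the operations $A \mapsto A^*$, $(A,B) \mapsto A + B$, and $(A,B) \mapsto AB$ on $\mathscr{M}$, each of which is Cauchy-continuous in the measure topology by Theorem \ref{thm:cont_mvn}. Since $\mathcal{P}$ is by definition the product of $n^2$ copies of the measure topology, a map into $M_n(\mathscr{M})$ is $\mathcal{P}$-continuous precisely when each of its $n^2$ coordinate (entry) functions is continuous.

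Concretely, I would argue as follows. For the adjoint, the $(i,j)$th entry of $A^{\dagger}$ is $A_{ji}^*$; this is the composition of the coordinate projection $A \mapsto A_{ji}$ (continuous since $\mathcal{P}$ is a product topology) with the adjoint on $\mathscr{M}$, hence $\mathcal{P}$-continuous. For addition, the $(i,j)$th entry of $A + B$ is $A_{ij} + B_{ij}$, a sum of two coordinate projections composed with the continuous addition map on $\mathscr{M}$. For multiplication, the $(i,j)$th entry of $AB$ is $\sum_{k=1}^n A_{ik} B_{kj}$, which is a finite sum of products of coordinate projections; each product is continuous by continuity of multiplication on $\mathscr{M}$, and the finite sum of continuous maps is continuous by continuity of addition. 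In each case the map factors through finitely many continuous operations on $\mathscr{M}$, so all three maps are $\mathcal{P}$-continuous (indeed Cauchy-continuous, since the underlying operations on $\mathscr{M}$ are).

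Having established Cauchy-continuity, the three operations extend uniquely to continuous operations on the completion of $(M_n(\mathscr{M}), \mathcal{P})$. Because the completion of a finite product of uniform spaces is the product of the completions, the completion of $(M_n(\mathscr{M}), \mathcal{P})$ is canonically identified with $M_n(\Wtilde{M})$, the space of $n \times n$ matrices whose entries lie in $\Wtilde{M}$. The extended operations are precisely the entrywise-adjoint, entrywise-addition, and row-column multiplication formulas on $M_n(\Wtilde{M})$, since these formulas are continuous and agree with the original operations on the dense subalgebra $M_n(\mathscr{M})$; by density and continuity of both sides, they coincide on all of $M_n(\Wtilde{M})$. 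This realizes $M_n(\Wtilde{M})$ as a complex $*$-algebra identified with the $\mathcal{P}$-completion of $M_n(\mathscr{M})$.

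The main subtlety—rather than a genuine obstacle—is the identification of the completion of the product uniformity with the product of completions, together with the verification that the extended operations are given by the expected entrywise formulas rather than some other continuous extension. This is handled by the standard uniqueness of continuous extensions to a completion: any two continuous maps agreeing on the dense set $M_n(\mathscr{M})$ must agree everywhere, so the entrywise formulas (which are manifestly continuous on $M_n(\Wtilde{M})$ by the same coordinatewise argument applied to $\Wtilde{M}$) are forced to be the unique extensions. I would emphasize that this lemma deliberately concerns the \emph{product} topology $\mathcal{P}$ and yields $M_n(\Wtilde{M})$; the comparison with the genuine measure topology $\mathcal{T}$ on $M_n(\mathscr{M})$, and hence the isomorphism $M_n(\afm) \cong M_n(\mathscr{M})_{\textrm{aff}}$, is a separate matter requiring one to show $\mathcal{P}$ and $\mathcal{T}$ induce the same completion, which is the content of the subsequent main theorem.
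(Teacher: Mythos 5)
Your proof is correct and is exactly the argument the paper intends: the paper's own proof is the one-line ``Follows from the definition of product topology,'' and your coordinatewise reduction to Theorem \ref{thm:cont_mvn} together with the identification of the completion of a finite product with the product of completions is precisely the content being elided. No gaps; you have simply written out the details.
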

\begin{proof}
Follows from the definition of product topology.
\end{proof}

\begin{lemma}
\label{lem:norm_c_star}
\textsl{
Let $\mathfrak{A}$ be a $C^*$-algebra acting on the Hilbert space $\mathscr{H}$ and $n$ be a positive integer. For a matrix $A \in M_n(\mathfrak{A})$, we have:
\begin{itemize}
\item[(i)] $\|A\| \le \sum_{1 \le i, j \le n} \|A_{ij}\|$;
\item[(ii)] $\sup_{1 \le i, j \le n} \|A_{ij}\| \le \|A\|.$
\end{itemize} 
}
\end{lemma}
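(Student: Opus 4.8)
The plan is to prove both inequalities using the standard faithful representation of $M_n(\mathfrak A)$ on $\mathscr H^{(n)}$, where the norm of a matrix is its operator norm acting on column vectors.

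For part (i), I would write $A = \sum_{1 \le i, j \le n} \mathcal E^{(ij)} A_{ij}$, decomposing the matrix into its entries placed via the matrix units $\mathcal E^{(ij)}$. Here I view $A_{ij} \in \mathfrak A$ as the scalar matrix $A_{ij} \mathbf I_n$ (equivalently, the operator $A_{ij}$ acting diagonally), so that $\mathcal E^{(ij)} A_{ij}$ is the matrix with $A_{ij}$ in the $(i,j)$ slot and zeros elsewhere. Each $\mathcal E^{(ij)}$ is a partial isometry on $\mathscr H^{(n)}$, hence has operator norm $\le 1$, and the diagonal embedding $B \mapsto B \mathbf I_n$ is isometric, so $\|\mathcal E^{(ij)} A_{ij}\| \le \|A_{ij}\|$. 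Applying the triangle inequality to the sum then gives $\|A\| \le \sum_{1 \le i, j \le n} \|A_{ij}\|$.

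For part (ii), I would extract a single entry by compressing with the coordinate inclusions. Let $V_i : \mathscr H \to \mathscr H^{(n)}$ be the isometry onto the $i$th summand (so $V_i^* V_i = I$ and $V_i^* A V_j = A_{ij}$). Since $\|V_i\| = \|V_j\| = 1$, we get $\|A_{ij}\| = \|V_i^* A V_j\| \le \|V_i^*\| \|A\| \|V_j\| = \|A\|$ for each pair $(i,j)$, and taking the supremum over $i, j$ yields $\sup_{1 \le i, j \le n} \|A_{ij}\| \le \|A\|$.

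Neither direction presents a genuine obstacle; both are elementary consequences of representing $M_n(\mathfrak A)$ concretely on $\mathscr H^{(n)}$ and of the fact that the matrix-unit operators and coordinate isometries are contractions. The only point requiring a modicum of care is to be explicit that the norm on $M_n(\mathfrak A)$ is the $C^*$-norm inherited from $\mathcal B(\mathscr H^{(n)})$ (as recalled earlier via \cite[Proposition 11.1.2]{kadison-ringrose2}), so that these operator-norm estimates are legitimate; once that is fixed, both bounds follow immediately.
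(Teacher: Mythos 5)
Your proof is correct and follows essentially the same route as the paper's: both work with the concrete representation of $M_n(\mathfrak A)$ on $\mathscr H^{(n)}$ and reduce the claims to elementary operator-norm estimates. The only cosmetic differences are that for (i) the paper bounds the sesquilinear form $\langle A\mathbf{x},\mathbf{y}\rangle$ directly via Cauchy--Schwarz while you decompose $A$ into its matrix-unit pieces and apply the triangle inequality, and for (ii) the paper's choice of unit vectors supported in single coordinates is precisely your compression $V_i^* A V_j$.
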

\begin{proof}
\noindent (i) Consider the $C^*$-algebra $M_n(\mathfrak{A})$ acting on $\mathscr{H}^{(n)}$ by the matrix action on column vectors. Let ${\bf x} = (x_1, \cdots, x_n), {\bf y} = (y_1, \cdots, y_n)$ be unit vectors in $\mathscr{H}^{(n)}$. We note that for $1 \le i \le n$, we have $\|x_i\| \le \|{\bf x}\| = 1, \|y_i\| \le \| {\bf y} \| = 1$. Using the Cauchy-Schwarz inequality, we observe that $$|\langle A {\bf x}, {\bf y} \rangle| = |\sum_{1 \le i, j \le n} \langle A_{ij} x_i, y_j \rangle |  \le \sum_{1 \le i, j \le n} |\langle A_{ij} x_i, y_j \rangle | \le \sum_{1 \le i, j \le n} \| A_{ij}\| \| x_i\| \|y_j \| \le \sum_{1 \le i, j \le n} \| A_{ij} \|.$$
Thus $\|A\| \le \sum_{1 \le i, j \le n} \|A_{ij}\|.$
\vskip 0.05in

\noindent (ii) Fix positive integers $k, \ell \le n$. Let $x, y$ be unit vectors in $\mathscr{H}$. Consider the unit vector ${\bf x} \in \mathscr{H}^{(n)}$ which has $x$ at the $k^{\textrm{th}}$ position and $0$'s elsewhere, and the unit vector ${\bf y} \in \mathscr{H}^{(n)}$ which has $y$ at the $\ell ^{\textrm{th}}$ position and $0$'s elsewhere. Since $\langle A_{k \ell} x, y \rangle = \langle A {\bf x}, {\bf y} \rangle \le \|A\|$, the assertion follows.
\end{proof}

\begin{prop}
\label{prop:equiv_top}
\textsl{
The topologies $\mathcal{P}$ and $\mathcal{T}$ on $M_n(\mathscr{M})$ are equivalent.
}
\end{prop}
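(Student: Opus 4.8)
The goal is to show that the product topology $\mathcal{P}$ and the measure topology $\mathcal{T}$ on $M_n(\mathscr{M})$ coincide. Since both are translation-invariant topologies, it suffices to compare their fundamental systems of neighborhoods of $0$, or equivalently, to show that a net $\{A^{(\alpha)}\}$ in $M_n(\mathscr{M})$ converges to $0$ in $\mathcal{P}$ if and only if it converges to $0$ in $\mathcal{T}$. Convergence in $\mathcal{P}$ means precisely that each entry $A^{(\alpha)}_{ij} \to 0$ in the measure topology of $\mathscr{M}$, so the task reduces to relating entrywise measure convergence to measure convergence of the full matrix in the $II_1$ factor $M_n(\mathscr{M})$.

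The plan is to prove the two containments of fundamental neighborhood systems directly using the structure of matrix units. For the direction $\mathcal{T} \subseteq \mathcal{P}$ (i.e., $\mathcal{P}$-convergence implies $\mathcal{T}$-convergence), I would extract each entry via the matrix units: if $A \in M_n(\mathscr{M})$, then $A_{ij}$, regarded as an element of $\mathscr{M}$, is recovered as a compression $\mathcal{E}^{(1i)} A \mathcal{E}^{(j1)}$ (up to placement), and conversely $A = \sum_{i,j} \mathcal{E}^{(i1)} A_{ij} \mathcal{E}^{(1j)}$ where each $A_{ij}$ sits in the $(1,1)$-corner. The measure-topology neighborhood basis on $M_n(\mathscr{M})$ is built from the normalized trace $\text{\boldmath$\tau _n$}$, which satisfies $\text{\boldmath$\tau _n$}(\mathbf{I}_n - E) = \frac{1}{n}\sum_k \tau(\cdots)$ under the standard identification. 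The key computational tool is Corollary \ref{cor:nbd_fund}, which controls how the neighborhoods $\sO_{\tau}(\varepsilon,\delta)$ behave under left and right multiplication by contractions (and the matrix units are partial isometries, hence contractions). Using Lemma \ref{lem:nbd_fund},(ii) to add up the contributions of the $n^2$ entries, I would show that if every $A_{ij} \in \sO_{\tau}(\varepsilon/n^2, \delta/n^2)$ in $\mathscr{M}$, then $A$ lies in a suitable measure-neighborhood of $0$ in $M_n(\mathscr{M})$, giving $\mathcal{T}$-convergence.

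For the reverse direction $\mathcal{P} \subseteq \mathcal{T}$ (measure convergence in $M_n(\mathscr{M})$ implies entrywise measure convergence), I would again use that compression by the matrix-unit contractions $\mathcal{E}^{(1i)}, \mathcal{E}^{(j1)}$ maps a measure-neighborhood in $M_n(\mathscr{M})$ into a measure-neighborhood of $\mathscr{M}$. Concretely, if $A \in \sO_{\text{\boldmath$\tau _n$}}(\varepsilon, \delta)$ with witnessing projection $E \in M_n(\mathscr{M})$ satisfying $\|AE\| \le \varepsilon$ and $\text{\boldmath$\tau _n$}(\mathbf{I}_n - E) \le \delta$, then compressing by the matrix units yields a bound on $\|A_{ij} E'\|$ for an appropriate projection $E'$ in $\mathscr{M}$ together with a trace estimate, using Corollary \ref{cor:nbd_fund} and the relationship between $\text{\boldmath$\tau _n$}$ and $\tau$. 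The main obstacle here is that the witnessing projection $E$ for the matrix need not respect the block structure, so I cannot simply read off its diagonal corners; instead I would pass through the norm inequalities of Lemma \ref{lem:norm_c_star}, using part (ii) ($\sup_{i,j}\|A_{ij}\| \le \|A\|$) to control the norm of individual entries and part (i) for the synthesis in the other direction. The conceptually delicate point is producing, from a single matrix-level projection, the per-entry projections needed for the $\mathscr{M}$-neighborhoods while keeping the trace defect $\tau(I - E')$ controlled; this is handled by taking compressions and invoking the cutting-down estimates already established for $\sO_{\tau}$. Once both containments are in hand, translation-invariance of both topologies upgrades the neighborhood comparison at $0$ to equivalence of the topologies, completing the proof.
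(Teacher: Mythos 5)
Your overall architecture --- comparing the two translation-invariant neighborhood bases at $0$ by means of the matrix units, Lemma \ref{lem:nbd_fund}, Corollary \ref{cor:nbd_fund}, and Lemma \ref{lem:norm_c_star} --- is the same as the paper's. Your first direction ($\mathcal{P}$-convergence implies $\mathcal{T}$-convergence) is sound and essentially equivalent to the paper's Claim \ref{clm:P_less_T}: the paper takes the meet $E=\wedge_{i,j}E^{(ij)}$ of the per-entry witnessing projections and uses the diagonal projection $\mathrm{diag}(E,\dots,E)$ together with Lemma \ref{lem:norm_c_star},(i); your decomposition of $\mathbf{A}$ into $n^2$ single-entry matrices summed via Lemma \ref{lem:nbd_fund},(ii) achieves the same thing with slightly different constants.

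The gap is in the reverse direction, at exactly the point you yourself flag as ``conceptually delicate'' and then dispose of with the phrase ``handled by taking compressions and invoking the cutting-down estimates already established for $\sO_{\tau}$.'' No such estimates exist in the preceding results: Corollary \ref{cor:nbd_fund} only controls multiplication by contractions of elements already known to lie in some $\sO_{\tau}(\varepsilon,\delta)$. After compressing $A^{(ij)}=\mathcal{E}^{(1i)}\mathbf{A}\mathcal{E}^{(j1)}$ and taking a witnessing projection $\mathbf{E}$ in $M_n(\mathscr{M})$ with $\|A^{(ij)}\mathbf{E}\|\le\varepsilon$ and $\text{\boldmath$\tau _n$}(\textbf{I}_n-\mathbf{E})\le 4\delta$, what Lemma \ref{lem:norm_c_star},(ii) hands you is $\|A_{ij}E_{11}\|\le\varepsilon$, where $E_{11}$ is merely a \emph{positive contraction} with $\tau(E_{11})\ge 1-4n\delta$, not a projection. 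Converting this into a genuine projection $E'$ in $\mathscr{M}$ with both $\|A_{ij}E'\|$ and $\tau(I-E')$ small is the actual content of the paper's Claim \ref{clm:T_less_P}, and it requires a new idea: take the spectral resolution $\{F_{\lambda}\}$ of $E_{11}$ and set $E':=I-F_{1-\sqrt{4n\delta}}$, so that $\|A_{ij}E'\|=\|A_{ij}E_{11}\bigl(\int_{1-\sqrt{4n\delta}}^{1}\lambda^{-1}\,dF_{\lambda}\bigr)\|\le \varepsilon/(1-\sqrt{4n\delta})\le 2\varepsilon$ for $\delta<\tfrac{1}{16n}$, while a Markov-type estimate from $1-4n\delta\le\tau(E_{11})$ yields $\tau(F_{1-\sqrt{4n\delta}})\le\sqrt{4n\delta}$. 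Without this step (or a substitute --- for instance, one could instead work with the projection $\mathbf{E}\wedge\mathcal{E}^{(jj)}$, which has the form $\mathrm{diag}(0,\dots,F,\dots,0)$ with $\tau(I-F)\le n\delta$ by the Kaplansky formula, and whose product with $\mathbf{A}$ has $(i,j)$ entry $A_{ij}F$), your argument does not close.
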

\begin{proof}
The subsets of $M_n(\mathscr{M})$ of the form $\prod_{1 \le i, j \le n} \sO_{\tau}(\varepsilon_{ij}, \delta_{ij})$ ($\varepsilon_{ij} > 0, \delta_{ij} > 0$) constitute a fundamental system of neighborhoods of $0$ for the translation-invariant topology $\mathcal{P}$. The subsets of $M_n(\mathscr{M})$ of the form $\sO_{\text{\boldmath$\tau _n$}}(\varepsilon, \delta)$ ($\varepsilon > 0, \delta >0$) constitute a fundamental system of neighborhoods of $0$ for the translation-invariant topology $\mathcal{T}$.

\begin{claim}
\label{clm:P_less_T}
\textsl{
For $\varepsilon_{ij}, \delta_{ij} > 0 \; (1 \le i, j \le n)$, we have
\begin{equation}
\prod_{1 \le i, j \le n} \sO_{\tau}(\varepsilon_{ij}, \delta_{ij}) \subseteq \sO_{\text{\boldmath$\tau _n$}}\Big(\sum_{1 \le i, j \le n} \varepsilon_{ij}, \sum_{1 \le i, j \le n} \delta_{ij} \Big).
\end{equation} }
\end{claim}
\begin{claimpff}
Let $\mathbf{A} \in \prod_{1 \le i, j \le n} \sO_{\tau}(\varepsilon_{ij}, \delta_{ij})$. In other words, $\mathbf{A} \in M_n(\mathscr{M})$ such that $A_{ij} \in  \sO_{\tau}(\varepsilon_{ij}, \delta_{ij})$.
For $1 \le i, j \le n$, there is a projection $E^{(ij)}$ in $\mathscr{M}$ such that $\|A_{ij}E^{(ij)}\| \le \varepsilon_{ij}$ and $\tau(I-E^{(ij)}) \le \delta_{ij}$. Let $E := \wedge_{1 \le i, j \le n} E^{(ij)} \in \mathscr{M}$ and $\mathbf{E}_n := \mathrm{diag}(E, \overset{n}{\cdots}, E) \in M_n(\mathscr{M})$. We note that the $(i, j)$th entry of $\mathbf{A} \mathbf{E}_n$ is given by $A_{ij}E$. Since by Lemma \ref{lem:norm_c_star},(i), $$\|\mathbf{A} \mathbf{E}_n\| \le \sum_{1 \le i, j \le n} \|A_{ij}E\| \le \sum_{1 \le i, j \le n} \|A_{ij}E^{(ij)} \| \le \sum_{1 \le i, j \le n} \varepsilon_{ij},$$ and $$\text{\boldmath$\tau _n$}(\textbf{I}_n - \mathbf{E}_n) = \tau(I - E) = \tau(\vee_{1\le i,j \le n} (I- E^{(ij)})) \le \sum_{1\le i,j \le n} \tau(I - E^{(ij)}) \le \sum_{1 \le i, j \le n} \delta_{ij},$$ we observe that $$A \in \sO_{\text{\boldmath$\tau _n$}}(\sum_{1 \le i, j \le n} \varepsilon_{ij}, \sum_{1 \le i, j \le n} \delta_{ij}).$$
\end{claimpff}

\begin{claim}
\label{clm:T_less_P}
\textsl{
For $0 < \varepsilon$ and $0 < \delta < \frac{1}{16n} $, we have 
\begin{equation}
\sO_{\text{\boldmath$\tau _n$}}(\varepsilon, \delta) \subseteq \prod_{i=1}^{n^2} \sO_{\tau}\big(2\varepsilon, \sqrt{4n \delta}\big).
\end{equation} }
\end{claim}
\begin{claimpff}
Let $\mathbf{A} \in \sO_{\text{\boldmath $\tau _n$}}(\varepsilon, \delta)$ and let $i, j$ be fixed positive integers less than $n$. Since the matrix unit $\mathcal{E}^{(ij)}$ is a contraction in $M_n(\mathscr{M})$, by Corollary \ref{cor:nbd_fund} the matrix $A^{(ij)} := \mathcal{E}^{(1i)}\mathbf{A} \mathcal{E}^{(j1)}$ belongs to $\sO_{\text{\boldmath$\tau _n$}}(\varepsilon, 4\delta)$. Let $\mathbf{E}$ be a projection in $M_n(\mathscr{M})$ such that $\|A^{(ij)} \mathbf{E} \| \le \varepsilon$ and 
\begin{equation}
\label{eqn:contract}
\text{\boldmath$\tau _n$}(\textbf{I}_n - \mathbf{E})  = \frac{1}{n} \left(\tau(I - E_{11}) + \cdots + \tau(I -  E_{nn})\right) \le 4\delta.
\end{equation}
As $A_{ij}E_{11}$ is the $(1, 1)$th entry of $A^{(ij)}\mathbf{E}$, from Lemma \ref{lem:norm_c_star}, (ii),  we have $\|A_{ij}E_{11}\| \le \|A^{(ij)}\mathbf{E} \| \le \varepsilon.$

Since $\mathbf{E}$ is a projection in $M_n(\mathscr{M})$, its $(1, 1)$th entry $E_{11} \in \mathscr{M}$ is a positive contraction. Using the inequality in (\ref{eqn:contract}), we have $1- 4n\delta \le \tau(E_{11}) \le 1$. If the trace of a positive contraction is close to $1$, then the spectrum (with multiplicity given by the trace) is concentrated near $1$. We illustrate this intuitive fact using the estimate below. Let $F_{\lambda}$ denote the resolution of the identity in $\mathscr{M}$ corresponding to the positive operator $E_{11}$. Keeping in mind that$\frac{1}{1-\sqrt{4n\delta}} \le 2$ for $\delta < \frac{1}{16n}$, we have $$\|A_{ij}(I - F_{1-\sqrt{4n\delta}})\| = \|A_{ij}E_{11} \big(\int_{1-\sqrt{4n\delta} }^{1} \frac{1}{\lambda}\;dF_{\lambda}\big)\| \le \frac{1}{1-\sqrt{4n\delta}} \varepsilon \le 2 \varepsilon.$$ 

Let $t := \tau(F_{1-\sqrt{4n \delta}})$. Since $E_{11} \le (1 - \sqrt{4n \delta} ) F_{1 - \sqrt{4n \delta}} + 1 \cdot (I - F_{1- \sqrt{4n \delta}})$, we have $$1 - 4n \delta \le \tau(E_{11}) \le (1-\sqrt{4n \delta})t + (1-t) = 1- \sqrt{4n\delta}\, t  \Longrightarrow \tau(F_{1-\sqrt{4n \delta}}) = t \le \sqrt{4n \delta}.$$ Hence $A_{ij} \in \sO_{\tau}\big(2\varepsilon, \sqrt{4n \delta}\big).$
\end{claimpff}

By Claim \ref{clm:P_less_T} the topology $\mathcal{P}$ is finer than $\mathcal{T}$, and by Claim \ref{clm:T_less_P} the topology $\mathcal{T}$ is finer than $\mathcal{P}$. In conclusion, the topologies $\mathcal{P}$ and $\mathcal{T}$ on $M_n(\mathscr{M})$ are equivalent.
\end{proof}

 From Proposition \ref{prop:mtilde_pos}, the closure of $M_n(\mathscr{M})^{+}$ in the topology $\mathcal{T}$ (and hence in the topology $\mathcal{P}$) is a cone and by Remark \ref{rmrk:maff_pos}, this cone corresponds to the positive cone of $\big( M_n(\mathscr{M})_{\textrm{aff}} \big)^{\mathrm{sa}}$. We encapsulate these observations in the following theorem.

\begin{thm}
\label{thm:main_iso}
\textsl{
 For a positive integer $n$, $M_n(\afm)$ and $M_n(\mathscr{M})_{\textrm{aff}}$ are isomorphic as unital ordered complex  topological $*$-algebras with the isomorphism extending the identity mapping of $M_n(\mathscr{M}) \to M_n(\mathscr{M}).$
}
\end{thm}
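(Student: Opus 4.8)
The plan is to realize both $M_n(\afm)$ and $M_n(\mathscr{M})_{\textrm{aff}}$ as completions of the common dense $*$-subalgebra $M_n(\mathscr{M})$, and then to show that these two completions coincide as ordered topological $*$-algebras. The key structural observation is that $M_n(\mathscr{M})$ is itself a $II_1$ factor (with normalized trace in place of $\tau$), so the entire development of \S\ref{sec:tau_meas_top} applies to it verbatim. In particular, the measure-topology completion of $M_n(\mathscr{M})$ in the topology $\mathcal{T}$ is identified, via Remark \ref{rmrk:maff_pos} applied to $M_n(\mathscr{M})$, with $M_n(\mathscr{M})_{\textrm{aff}}$ as a unital ordered complex topological $*$-algebra, whose positive cone is the $\mathcal{T}$-closure of $M_n(\mathscr{M})^{+}$.

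First I would record, using Lemma \ref{lem:product_top}, that the $\mathcal{P}$-completion of $M_n(\mathscr{M})$ is $M_n(\Wtilde{M}) = M_n(\afm)$ as a unital complex $*$-algebra, where the matrix involution, addition, and multiplication are obtained as the unique continuous extensions of the corresponding operations on $M_n(\mathscr{M})$. Next I would invoke Proposition \ref{prop:equiv_top} to conclude that $\mathcal{P}$ and $\mathcal{T}$ are in fact the same topology on $M_n(\mathscr{M})$. Equal topologies yield the same Hausdorff completion; and since the $*$-algebra operations on either completion are forced to be the unique continuous extensions of the \emph{same} maps defined on the dense subalgebra $M_n(\mathscr{M})$, these extensions necessarily agree. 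This produces a topological $*$-algebra isomorphism $M_n(\afm) \to M_n(\mathscr{M})_{\textrm{aff}}$ that restricts to the identity on $M_n(\mathscr{M})$.

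It then remains to verify that this identity-extending isomorphism is an order isomorphism. Here I would appeal to Proposition \ref{prop:pos_equiv} applied to the factor $M_n(\mathscr{M})$: a Hermitian element of the completion lies in the measure-closure of $M_n(\mathscr{M})^{+}$ precisely when the corresponding affiliated operator in $M_n(\mathscr{M})_{\textrm{aff}}$ is positive. Combined with the observation recorded just before the statement, namely that this closure cone coincides with the positive cone of $\big(M_n(\mathscr{M})_{\textrm{aff}}\big)^{\mathrm{sa}}$, this shows that the common cone carried by the two completions is exactly the affiliated-operator positive cone of $M_n(\afm)$. Hence the isomorphism sends positive cone to positive cone.

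The genuinely analytic content of this theorem — the equivalence of the product topology $\mathcal{P}$ with the measure topology $\mathcal{T}$ — is already discharged in Proposition \ref{prop:equiv_top} through the two norm estimates of its claims, so for the theorem itself the only point demanding real care is the \emph{order} compatibility. One must confirm that the positivity that $M_n(\afm)$ inherits from viewing its Hermitian elements as affiliated operators of the factor $M_n(\mathscr{M})$ agrees with the measure-topological closure cone, rather than with some merely algebraically defined cone. This is precisely what Proposition \ref{prop:pos_equiv} secures at the matrix level, and it is the step I would treat most carefully in assembling the proof.
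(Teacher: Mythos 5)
Your proposal is correct and follows essentially the same route as the paper: identify the $\mathcal{P}$-completion of $M_n(\mathscr{M})$ with $M_n(\afm)$ via Lemma \ref{lem:product_top}, identify the $\mathcal{T}$-completion with $M_n(\mathscr{M})_{\textrm{aff}}$, invoke Proposition \ref{prop:equiv_top} to equate the two completions, and settle the order compatibility by applying Proposition \ref{prop:mtilde_pos} / Proposition \ref{prop:pos_equiv} and Remark \ref{rmrk:maff_pos} to the $II_1$ factor $M_n(\mathscr{M})$. The paper's own argument is exactly this assembly of the preceding results, stated in the paragraph before the theorem, so nothing further is needed.
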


\begin{lemma}[Parellelogram inequality]
\label{lem:paralellogram}
\textsl{
Let $P$ be a positive operator in $\afm$. For $n \in \mathbb{N}$ and operators $T_1, \cdots, T_n \in \,\afm$ (and $T := T_1 \afmsum \cdots \afmsum T_n$), we have
$$ T^* \afmprod  P  \afmprod T \le 2^{n-1} \left( \sum_{i=1}^n T_i ^* \afmprod P \afmprod T_i \right).$$
(Note that the summation symbol, $\sum$, is used with respect to $\afmsum$.)
}
\end{lemma}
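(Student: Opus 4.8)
The plan is to induct on $n$, with the case $n=2$ carrying all of the analytic content and the inductive step reducing to formal manipulations in the ordered $*$-algebra $\afm$.

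For the base case I would invoke the algebraic parallelogram identity, which is valid in $\afm$ by the associative, distributive and involutive laws recorded in Proposition \ref{prop:fund_mva}:
\begin{equation*}
(T_1 \afmsum T_2)^* \afmprod P \afmprod (T_1 \afmsum T_2) \;\afmsum\; (T_1 \afmdiff T_2)^* \afmprod P \afmprod (T_1 \afmdiff T_2) = 2\big( T_1^* \afmprod P \afmprod T_1 \;\afmsum\; T_2^* \afmprod P \afmprod T_2 \big).
\end{equation*}
By Corollary \ref{cor:pos_equiv}, applied with the positive operator $P$ and $B = T_1 \afmdiff T_2$, the second summand on the left is a positive operator in $\afm^{+}$. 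Discarding it therefore gives $(T_1 \afmsum T_2)^* \afmprod P \afmprod (T_1 \afmsum T_2) \le 2\big( T_1^* \afmprod P \afmprod T_1 \afmsum T_2^* \afmprod P \afmprod T_2 \big)$, which is the assertion for $n=2$.

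For the inductive step, assuming the inequality for $n-1$ summands, I would set $U := T_2 \afmsum \cdots \afmsum T_n$ and apply the base case to $T = T_1 \afmsum U$, obtaining $T^* \afmprod P \afmprod T \le 2\, T_1^* \afmprod P \afmprod T_1 \afmsum 2\, U^* \afmprod P \afmprod U$. The inductive hypothesis bounds $U^* \afmprod P \afmprod U$ by $2^{n-2} \sum_{i=2}^n T_i^* \afmprod P \afmprod T_i$; since $(\afm)^{\mathrm{sa}}$ is an ordered real vector space, scaling by the positive scalar $2$ and adding the fixed term $2\, T_1^* \afmprod P \afmprod T_1$ both preserve the order, so transitivity yields $T^* \afmprod P \afmprod T \le 2\, T_1^* \afmprod P \afmprod T_1 \afmsum 2^{n-1} \sum_{i=2}^n T_i^* \afmprod P \afmprod T_i$. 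Finally, $T_1^* \afmprod P \afmprod T_1$ is positive by Corollary \ref{cor:pos_equiv} and $2 \le 2^{n-1}$ for $n \ge 2$, so the operator $(2^{n-1}-2)\, T_1^* \afmprod P \afmprod T_1$ is positive; adding it to the right-hand side replaces the coefficient $2$ by $2^{n-1}$ and produces the claimed bound $2^{n-1} \sum_{i=1}^n T_i^* \afmprod P \afmprod T_i$.

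The only step requiring genuine care, and hence the main obstacle, is the positivity input in the base case. For bounded operators one would simply read off the positivity of $(T_1 - T_2)^* P (T_1 - T_2)$ by factoring $P$ through its square root; in $\afm$ the operators involved are unbounded, so the positivity of $(T_1 \afmdiff T_2)^* \afmprod P \afmprod (T_1 \afmdiff T_2)$ must be secured intrinsically rather than by manipulating concrete bounded operators. This is precisely what Corollary \ref{cor:pos_equiv} provides, resting in turn on the order isomorphism between $\Wtilde{M}$ and $\afm$ established in Proposition \ref{prop:pos_equiv} and Remark \ref{rmrk:maff_pos}. Granting this, every remaining step is a routine consequence of $\afm$ being an ordered $*$-algebra.
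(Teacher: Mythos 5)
Your proof is correct and follows essentially the same route as the paper: induction on $n$, with the $n=2$ case obtained from the positivity of $(T_1 \afmdiff T_2)^* \afmprod P \afmprod (T_1 \afmdiff T_2)$ and the inductive step obtained by splitting off one summand and padding the smaller coefficient using positivity of $T^* \afmprod P \afmprod T$ terms. The only cosmetic differences are that the paper splits off the last summand rather than the first, and states the base case by expanding the positive square directly rather than via the full parallelogram identity.
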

\begin{proof}

We proceed inductively. For $n=1$, the inequality trivially holds and is in fact an identity. For $n = 2$, we note that 
\begin{align*}
0 &\le (T_1 \afmdiff T_2)^* \afmprod P \afmprod (T_1 \afmdiff T_2)\\
&  = T_1^* \afmprod P \afmprod T_1 \afmsum T_2 ^* \afmprod P \afmprod T_2 \afmdiff (T_1^* \afmprod P \afmprod T_2 \afmsum T_2^* \afmprod P \afmprod T_1)\\
\Longrightarrow  T_1^* \afmprod P \afmprod T_2 \afmsum T_2^* \afmprod P \afmprod T_1 &\le T_1^* \afmprod P \afmprod T_1 \afmsum T_2 ^* \afmprod P \afmprod T_2\\
\Longrightarrow  (T_1 \afmsum T_2)^* \afmprod  P \afmprod (T_1 \afmsum T_2) &\le 2(T_1^* \afmprod P \afmprod T_1 \afmsum T_2 ^* \afmprod P \afmprod T_2).
\end{align*}
Thus writing $T$ as $(T \afmdiff T_n) \afmsum T_n$, we observe that 
\begin{equation}
\label{eqn:parallelogram_1}
T^* \afmprod P \afmprod T \le 2 \left( (T \afmdiff T_n)^* \afmprod P \afmprod (T \afmdiff T_n) \afmsum T_n^* \afmprod P \afmprod T_n \right).
\end{equation}
By the induction hypothesis, we have 
\begin{equation}
\label{eqn:parallelogram_2}
(T \afmdiff T_n)^* \afmprod P \afmprod (T \afmdiff T_n) \le 2^{n-2} \left( \sum_{i=1}^{n-1} T_i ^* \afmprod P \afmprod  T_i \right).
\end{equation}
Combining inequalities (\ref{eqn:parallelogram_1}), (\ref{eqn:parallelogram_2}), and the fact that $T_n^* \afmprod P \afmprod T_n \le 2^{n-2}(T_n^* \afmprod P \afmprod T_n)$ for $n \ge 2$, we reach the desired conclusion.
\end{proof}

\begin{thm}
\label{thm:main_iso_delta}
\textsl{
For $n \in \mathbb{N}$, we have $M_n(\mathscr{M})_{\Delta_n} = M_n(\mathscr{M}_{\Delta})$.
}
\end{thm}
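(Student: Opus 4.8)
The plan is to use Theorem \ref{thm:main_iso} to regard both $M_n(\mathscr{M}_{\Delta})$ and $M_n(\mathscr{M})_{\Delta_n}$ as subsets of the single $*$-algebra $M_n(\afm) \cong M_n(\mathscr{M})_{\textrm{aff}}$, and then to compare them by reducing everything to a single matrix entry. Two facts do all the work. First, applying Lemma \ref{lem:mdelt_alg} and Lemma \ref{lem:mdelta_desc} to the $II_1$ factor $M_n(\mathscr{M})$ shows that $M_n(\mathscr{M})_{\Delta_n}$ is a $*$-subalgebra of $M_n(\mathscr{M})_{\textrm{aff}}$ containing $M_n(\mathscr{M})$; in particular it is closed under $\afmsum$, under $\afmprod$, and hence under left and right multiplication by the matrix units $\mathcal{E}^{(ij)}$. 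Second is the following single-entry computation, which I would isolate as a lemma.

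\emph{Single-entry lemma.} For $S \in \afm$, let $\mathbf{S} \in M_n(\afm)$ be the matrix with $S$ in the $(1,1)$ slot and $0$ elsewhere. Computing entrywise via the algebra product (legitimate by Theorem \ref{thm:main_iso}), $\mathbf{S}^{\dagger} \afmprod \mathbf{S}$ is diagonal with $S^* \afmprod S$ in the $(1,1)$ slot, so $|\mathbf{S}|$ is diagonal with $|S|$ in the $(1,1)$ slot. Since $\log ^{+} 0 = 0$, the off-diagonal zeros are inert in the trace integral, and one gets
$$\text{\boldmath$\tau _n$}(\log ^{+}|\mathbf{S}|) = \tfrac{1}{n}\,\tau(\log ^{+}|S|).$$
Hence $\mathbf{S} \in M_n(\mathscr{M})_{\Delta_n}$ if and only if $S \in \mathscr{M}_{\Delta}$.

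\textbf{Inclusion $M_n(\mathscr{M})_{\Delta_n} \subseteq M_n(\mathscr{M}_{\Delta})$.} Given $\mathbf{T} \in M_n(\mathscr{M})_{\Delta_n}$ and indices $i,j$, the compression $\mathcal{E}^{(1i)} \afmprod \mathbf{T} \afmprod \mathcal{E}^{(j1)}$ is again in $M_n(\mathscr{M})_{\Delta_n}$ by algebra closure, and it is exactly the single-entry matrix carrying $T_{ij}$ in the $(1,1)$ slot. The single-entry lemma then forces $T_{ij} \in \mathscr{M}_{\Delta}$, so $\mathbf{T} \in M_n(\mathscr{M}_{\Delta})$. \textbf{Inclusion $M_n(\mathscr{M}_{\Delta}) \subseteq M_n(\mathscr{M})_{\Delta_n}$.} Given $\mathbf{T}$ with every entry $T_{ij} \in \mathscr{M}_{\Delta}$, write $\mathbf{T} = \sum_{i,j} \mathcal{E}^{(i1)} \afmprod \mathbf{S}_{ij} \afmprod \mathcal{E}^{(1j)}$ (sum taken with $\afmsum$), where $\mathbf{S}_{ij}$ carries $T_{ij}$ in the $(1,1)$ slot. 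Each $\mathbf{S}_{ij} \in M_n(\mathscr{M})_{\Delta_n}$ by the single-entry lemma, so each summand, and hence the whole sum, stays in $M_n(\mathscr{M})_{\Delta_n}$ by algebra closure. (One could instead prove this inclusion order-theoretically: by the parallelogram inequality (Lemma \ref{lem:paralellogram}) with $P = \mathbf{I}_n$ one has $\mathbf{T}^{\dagger} \afmprod \mathbf{T} \le 2^{n^2 - 1}\sum_{i,j} \mathbf{S}_{ij}^{\dagger} \afmprod \mathbf{S}_{ij}$, whose right-hand side is diagonal with $\mathscr{M}_{\Delta}$ entries and hence in $M_n(\mathscr{M})_{\Delta_n}$; Proposition \ref{prop:ord_ideal_mdelta}, applied inside the factor $M_n(\mathscr{M})$, then gives $\mathbf{T}^{\dagger} \afmprod \mathbf{T} \in M_n(\mathscr{M})_{\Delta_n}$ and so $\mathbf{T} \in M_n(\mathscr{M})_{\Delta_n}$.)

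The hard part will be the single-entry lemma, not the bookkeeping around it. Specifically, I must use the identification of Theorem \ref{thm:main_iso} to be sure that multiplication in $M_n(\afm)$ genuinely is the affiliated-operator product $\afmprod$ performed entrywise — so that $\mathbf{S}^{\dagger} \afmprod \mathbf{S}$ and the compressions $\mathcal{E}^{(1i)} \afmprod \mathbf{T} \afmprod \mathcal{E}^{(j1)}$ really produce the single-entry matrices claimed, despite the warning that $TB \ne T \afmprod B$ for naive unbounded products; and I must justify that $\log ^{+} 0 = 0$ makes the $(n-1)$ zero blocks contribute nothing to $\text{\boldmath$\tau _n$}(\log ^{+}|\mathbf{S}|)$, i.e.\ that the spectral distribution of $|\mathbf{S}|$ relative to $\text{\boldmath$\tau _n$}$ is $\tfrac1n$ times that of $|S|$ plus a mass at $0$. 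Once these two points are nailed down, both inclusions are immediate.
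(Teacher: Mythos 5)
Your proposal is correct, and for one of the two inclusions it takes a genuinely different route from the paper. The inclusion $M_n(\mathscr{M})_{\Delta_n} \subseteq M_n(\mathscr{M}_{\Delta})$ is the paper's argument almost verbatim: compress by matrix units and use the identity $\tau(\log^{+}|A_{ij}|) = n\,\text{\boldmath$\tau _n$}(\log^{+}|\mathrm{diag}(A_{ij},0,\dots,0)|)$, which is exactly your single-entry lemma (and your worry about the off-diagonal zeros being inert is resolved just as you say: the spectral distribution of $\mathrm{diag}(|S|,0,\dots,0)$ is $\tfrac1n \mu_{|S|}$ plus a point mass at $0$, and $\log^{+}0=0$). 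For the reverse inclusion the paper takes a heavier path: it writes $\mathbf{A}=\mathbf{U}\mathbf{P}$ by polar decomposition, checks $\mathbf{P}\in M_n(\mathscr{M}_{\Delta})$ entrywise, dominates $\mathbf{P}$ by $2^{n-1}\mathrm{diag}(P_{11},\dots,P_{nn})$ via the parallelogram inequality (Lemma \ref{lem:paralellogram}), and then invokes the order-ideal property of $\mathscr{M}_{\Delta}$ (Proposition \ref{prop:ord_ideal_mdelta}, which rests on the Douglas factorization lemma) inside the factor $M_n(\mathscr{M})$. Your route --- write $\mathbf{T}$ as the $\afmsum$-sum of its $n^2$ single-entry pieces $\mathcal{E}^{(i1)}\afmprod\mathbf{S}_{ij}\afmprod\mathcal{E}^{(1j)}$ and appeal to Lemma \ref{lem:mdelt_alg} applied to the $II_1$ factor $M_n(\mathscr{M})$ for closure under $\afmsum$ and $\afmprod$ --- is shorter and bypasses the polar decomposition, the parallelogram inequality, and the Douglas lemma entirely. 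The cost is concentrated in Lemma \ref{lem:mdelt_alg}: closure of $\mathscr{M}_{\Delta}$ under $\afmsum$ is Haagerup--Schultz's Proposition 2.6 and is itself non-trivial; but the paper also uses that lemma (to see $P_{ij}=\sum_k U_{ki}^{*}A_{kj}\in\mathscr{M}_{\Delta}$), so you assume nothing the paper does not. One small caveat on your parenthetical order-theoretic alternative: after concluding $\mathbf{T}^{\dagger}\afmprod\mathbf{T}\in M_n(\mathscr{M})_{\Delta_n}$ you still need a word to get back to $\mathbf{T}$ itself (membership in $\mathscr{M}_{\Delta}$ depends only on $|\cdot|$, cf.\ Remark \ref{rmrk:mdelta_pos}, or use a polar decomposition as the paper does); your main argument does not need this step.
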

\begin{proof}
Recall that $\mathscr{M}_{\Delta} \subset \afm$ and by Theorem \ref{thm:main_iso}, we may identify $M_n(\afm)$ and $M_n(\mathscr{M})_{\textrm{aff}}$ with each other. Keeping this in mind, we consider the $*$-algebras $M_n(\mathscr{M})_{\Delta_n}$ and $M_n(\mathscr{M}_{\Delta})$ as $*$-subalgebras of $M_n(\mathscr{M})_{\textrm{aff}}$. 
\vskip 0.1in
\begin{claim}
$M_n(\mathscr{M})_{\Delta _n} \subseteq M_n(\mathscr{M}_{\Delta}).$
\end{claim}
\begin{claimpff}
Let $A \in M_n(\mathscr{M})_{\Delta _n}.$  Since the matrix units are bounded operators, by Lemma \ref{lem:mdelta_desc}, note that $\mathrm{diag}(A_{ij}, 0, \cdots, 0) = \mathcal{E}^{(1i)} A\, \mathcal{E}^{(j1)}$ belongs to $ M_n(\mathscr{M})_{\Delta _n}$. Thus $$\tau(\log ^{+} (|A_{ij}|)) = n \tau_n(\log ^{+} |\mathrm{diag}(A_{ij}, 0, \cdots, 0)|) < \infty,$$ from which we conclude that $A_{ij} \in \mathscr{M}_{\Delta}$ for $1 \le i, j \le n$ and $A \in M_n(\mathscr{M}_{\Delta}).$
\end{claimpff}

\begin{claim}
$M_n(\mathscr{M}_{\Delta}) \subseteq M_n(\mathscr{M})_{\Delta _n}.$
\end{claim}
\begin{claimpff}
Let $A \in M_n(\mathscr{M}_{\Delta}) \subset M_n(\afm) \cong M_n(\mathscr{M})_{\textrm{aff}}.$ By the polar decomposition theorem, we have a unitary operator $U \in M_n(\mathscr{M})$ and a positive operator $P$ in $M_n(\mathscr{M})_{\textrm{aff}}$ such that $A = U P$. As $P = U^* A$, from Lemma \ref{lem:mdelta_desc} we observe that $P_{ij} = \sum_{k=1}^n U_{ki}^*A_{kj}$ belongs to $\mathscr{M}_{\Delta}$ for $1 \le i, j \le n $. Thus $P \in M_n(\mathscr{M}_{\Delta})$. Since $$\tau_n(\log ^{+} \mathrm{diag}(P_{11}, \cdots, P_{nn})) = \frac{1}{n}(\tau(\log ^{+} P_{11}) + \cdots + \tau(\log ^{+} P_{nn})) < \infty,$$ we have $2^{n-1} \mathrm{diag}(P_{11}, \cdots, P_{nn}) \in M_n(\mathscr{M})_{\Delta _n}$. Since $\sum_{i=1}^n (\mathcal{E}^{(ii)})^2 = \sum_{i=1}^n \mathcal{E}^{(ii)} = I$, by virtue of the parallelogram inequality (see Lemma \ref{lem:paralellogram}) we have $$P \le 2^{n-1}\left(\sum_{i=1}^n \mathcal{E}^{(ii)}P\mathcal{E}^{(ii)} \right) = 2^{n-1} \mathrm{diag}(P_{11}, \cdots, P_{nn}).$$ (Note that the summation symbol $\sum$ is used above with respect to $\afmsum$.)  From Proposition \ref{prop:ord_ideal_mdelta}, we conclude that $P \in M_n(\mathscr{M})_{\Delta _n}$ and thus $A = UP \in M_n(\mathscr{M})_{\Delta _n}$.
\end{claimpff}
\end{proof}

\section{Applications to the Heisenberg relation}

\label{sec:heisenberg}

Let $\mathscr{M}$ be a $II_1$ factor acting on the Hilbert space $\mathscr{H}$. In this section, we apply the results proved in earlier sections to provide some necessary conditions for pairs of operators $P, Q$ in $\afm$ satisfying the Heisenberg commutation relation. From Remark \ref{rmrk:intrinsic_rank}, recall that the rank functional on $\afm$ is independent of the representation of $\mathscr{M}$.

\begin{lemma}[{cf.\ \cite[Proposition 2.24]{haagerup-schultz}}]
\label{lem:upp_tri}
\textsl{
For $A \in \mathscr{M}_{\Delta}$, we have 
$$
\Delta_2 \Big(
\begin{bmatrix}
I & A\\
0 & I
\end{bmatrix} 
\Big) = 
\Delta_2 \Big(
\begin{bmatrix}
I & 0\\
A & I
\end{bmatrix} 
\Big) = 1.$$
}
\end{lemma}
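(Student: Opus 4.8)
The plan is to exploit the multiplicativity of the Fuglede--Kadison determinant together with a symmetry that exchanges the matrix with its inverse. First I would record that the matrix $M := \begin{bmatrix} I & A \\ 0 & I \end{bmatrix}$ has all of its entries ($I$, $A$, $0$) in $\mathscr{M}_{\Delta}$, so by Theorem \ref{thm:main_iso_delta} we have $M \in M_2(\mathscr{M}_{\Delta}) = M_2(\mathscr{M})_{\Delta_2}$; hence $\Delta_2(M)$ is defined and finite. Moreover $M$ is invertible in $M_2(\mathscr{M})_{\textrm{aff}}$ with $M^{-1} = \begin{bmatrix} I & -A \\ 0 & I \end{bmatrix}$, whose entries are again in $\mathscr{M}_{\Delta}$, so $M^{-1} \in M_2(\mathscr{M})_{\Delta_2}$ as well. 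By multiplicativity of $\Delta_2$ (Haagerup--Schultz) applied to $M M^{-1} = \mathbf{I}_2$, we get $\Delta_2(M)\,\Delta_2(M^{-1}) = \Delta_2(\mathbf{I}_2) = 1$; since the Fuglede--Kadison determinant is nonnegative and the product of the two values is $1$, both $\Delta_2(M)$ and $\Delta_2(M^{-1})$ are \emph{positive} real numbers.

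The key observation is that $M$ and $M^{-1}$ are unitarily conjugate. Set $J := \mathcal{E}^{(11)} - \mathcal{E}^{(22)} = \begin{bmatrix} I & 0 \\ 0 & -I \end{bmatrix}$, a self-adjoint unitary in $M_2(\mathscr{M})$, so $J = J^{\dagger} = J^{-1}$. A direct block computation gives $J M J = M^{-1}$. Because $J$ is unitary, $|J| = \mathbf{I}_2$ and thus $\Delta_2(J) = 1$; multiplicativity then yields $\Delta_2(M^{-1}) = \Delta_2(JMJ) = \Delta_2(J)\,\Delta_2(M)\,\Delta_2(J) = \Delta_2(M)$. Combining this with $\Delta_2(M)\,\Delta_2(M^{-1}) = 1$ gives $\Delta_2(M)^2 = 1$, and positivity forces $\Delta_2(M) = 1$.

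For the lower-triangular matrix I would argue identically: writing $N := \begin{bmatrix} I & 0 \\ A & I \end{bmatrix}$, the same $J$ satisfies $J N J = N^{-1}$, and the identical chain of equalities gives $\Delta_2(N) = 1$. (Alternatively, one may invoke $\Delta_2(T^{\dagger}) = \Delta_2(T)$ together with $A^* \in \mathscr{M}_{\Delta}$ from Lemma \ref{lem:mdelt_alg}, since the adjoint of the upper-triangular matrix for $A$ is the lower-triangular matrix for $A^*$.) The conceptual content of this lemma is light; the step requiring care is not the symmetry argument itself but the justification that the determinant identities used---multiplicativity $\Delta_2(ST) = \Delta_2(S)\Delta_2(T)$, the value $\Delta_2(\mathbf{I}_2) = 1$, and unitary invariance $\Delta_2(J) = 1$---are valid for these \emph{unbounded} operators, i.e.\ that $M, M^{-1}, N \in M_2(\mathscr{M})_{\Delta_2}$ so that all quantities are finite and the Haagerup--Schultz determinant calculus applies. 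This well-definedness is precisely what Theorem \ref{thm:main_iso_delta} supplies, and so I expect the identification $M_2(\mathscr{M})_{\Delta_2} = M_2(\mathscr{M}_{\Delta})$ to be the main enabling ingredient rather than any genuine obstacle in the symmetry computation.
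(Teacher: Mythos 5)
Your proof is correct and is essentially identical to the paper's: the paper's entire argument is the single algebraic identity $M^{-1} = JMJ$ with $J = \mathrm{diag}(I,-I)$, from which multiplicativity of $\Delta_2$ gives $\Delta_2(M)^2 = \Delta_2(M)\Delta_2(M^{-1}) = 1$, together with the appeal to Theorem \ref{thm:main_iso_delta} to place the matrices in $M_2(\mathscr{M})_{\Delta_2}$. You have merely spelled out the steps the paper leaves implicit.
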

\begin{proof}
The result follows from the algebraic identity,
$$
\begin{bmatrix}
I & A\\
0 & I
\end{bmatrix} ^{-1} = 
\begin{bmatrix}
I & -A\\
0 & I
\end{bmatrix} = 
\begin{bmatrix}
I & 0\\
0 & -I
\end{bmatrix}\begin{bmatrix}
I & A\\
0 & I
\end{bmatrix}\begin{bmatrix}
I & 0\\
0 & -I
\end{bmatrix}.
$$
Note that here we have used Theorem \ref{thm:main_iso_delta} to conclude that $$\begin{bmatrix}
I & A\\
0 & I
\end{bmatrix},
\begin{bmatrix}
I & -A\\
0 & I
\end{bmatrix} \in M_2(\mathscr{M})_{\Delta_2}.$$
\end{proof}

The following algebraic identity involving free indeterminates $x, y$, is key to our results concerning the Heisenberg-von Neumann puzzle.
\begin{equation}
\label{eqn:mat_id}
\ \left[ \begin{array}{ccc}
1 &  x \\
0 & 1 \end{array} \right]\ \left[ \begin{array}{ccc}
1-xy & 0 \\
0 & 1 \end{array} \right]\ \left[ \begin{array}{ccc}
1 & 0 \\
y & 1 \end{array} \right]
 =\ \left[ \begin{array}{ccc}
1 &  0 \\
y & 1 \end{array} \right]\ \left[ \begin{array}{ccc}
1 & 0 \\
0 & 1-yx \end{array} \right]\ \left[ \begin{array}{ccc}
1 & x \\
0 & 1 \end{array} \right].
\end{equation}

Recall that for $T \in \mathscr{M}_{\Delta}$ we denote the Brown measure of $T$ by $\mu _T$. 

\begin{thm}
\label{thm:bmeas_ab_ba}
\textsl{
\begin{itemize}
\item[(i)] For operators $A, B$ in $\afm$, we have $$\fr(zI \afmdiff A \afmprod B) = \fr(zI \afmdiff B \afmprod A), \;\forall z \in \mathbb{C} - \{0\}.$$
\item[(ii)] For operators $A, B$ in $\mathscr{M}_{\Delta}$, we have $$\Delta(zI \afmdiff A \afmprod B) = \Delta(zI \afmdiff  B \afmprod A), \;\forall z \in \mathbb{C},$$
and thus $$\mu_{A \afmprod B} = \mu_{B \afmprod A}.$$
\end{itemize}
}
\end{thm}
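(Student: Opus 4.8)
The plan is to prove both parts by exploiting the matrix identity \eqref{eqn:mat_id} with $x = A$ and $y = B$, which expresses the same matrix in $M_2(\afm)$ (resp.\ $M_2(\mathscr{M}_{\Delta})$) as two different products of elementary (upper/lower triangular and diagonal) factors. The left-hand side, after multiplying out, contains the block $zI \afmdiff A \afmprod B$ sitting in a position controlled by conjugation by the two unipotent triangular matrices, while the right-hand side contains $zI \afmdiff B \afmprod A$ in the analogous position. The key point is that the triangular factors $\left[\begin{smallmatrix} I & A \\ 0 & I \end{smallmatrix}\right]$ and $\left[\begin{smallmatrix} I & 0 \\ B & I \end{smallmatrix}\right]$ are \emph{invertible} in $M_2(\afm)$ (their inverses are obtained by negating the off-diagonal entry), so they are rank-preserving and, by Lemma \ref{lem:upp_tri}, have Fuglede--Kadison determinant equal to $1$.

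For part (i), I would first normalize by dividing through by $z \ne 0$, reducing to comparing $\fr(I \afmdiff A' \afmprod B')$ with $\fr(I \afmdiff B' \afmprod A')$ after absorbing $z$; alternatively, keep $z$ explicit and apply the identity directly. Substituting $x = z^{-1}A$, $y = B$ (or the symmetric choice) into \eqref{eqn:mat_id} realizes the diagonal matrix $\mathrm{diag}(zI \afmdiff A \afmprod B,\, I)$ (up to invertible factors) as conjugate, via invertible triangular matrices, to $\mathrm{diag}(I,\, zI \afmdiff B \afmprod A)$. By Proposition \ref{prop:rank_prop}, multiplication by invertible operators in $M_2(\afm)$ preserves the rank functional $\fr$ (which for $M_2(\mathscr{M})$ is $\tau_2 \circ \mathcal{R}$). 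Computing the rank of a block-diagonal matrix as the sum of the ranks of its diagonal blocks then yields $$\fr(zI \afmdiff A \afmprod B) + \fr(I) = \fr(I) + \fr(zI \afmdiff B \afmprod A),$$ and cancelling $\fr(I)$ gives the claim. The main technical care here is verifying that the block-diagonal rank decomposition holds in the Murray--von Neumann setting and that the $z \ne 0$ hypothesis is exactly what lets the factor $1 - xy$ carry the information about $zI \afmdiff A\afmprod B$ after scaling.

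For part (ii), the strategy is identical but with the multiplicativity of the Fuglede--Kadison determinant $\Delta_2$ on $M_2(\mathscr{M})_{\Delta_2}$ replacing Proposition \ref{prop:rank_prop}. By Theorem \ref{thm:main_iso_delta}, since $A, B \in \mathscr{M}_{\Delta}$ the elementary triangular and diagonal matrices all lie in $M_2(\mathscr{M})_{\Delta_2} = M_2(\mathscr{M}_{\Delta})$, so $\Delta_2$ is defined on every factor appearing in \eqref{eqn:mat_id}. Taking $\Delta_2$ of both sides and using multiplicativity together with Lemma \ref{lem:upp_tri} (the triangular factors have determinant $1$), the determinants of the two diagonal factors must agree, giving $$\Delta_2(\mathrm{diag}(zI \afmdiff A \afmprod B,\, I)) = \Delta_2(\mathrm{diag}(I,\, zI \afmdiff B \afmprod A)).$$ Since the determinant of a block-diagonal matrix factors as the product of the determinants of its blocks (and $\Delta(I) = 1$), this reduces to $\Delta(zI \afmdiff A \afmprod B) = \Delta(zI \afmdiff B \afmprod A)$ for all $z \in \mathbb{C}$, including $z = 0$ by continuity of $\log\Delta$.

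The hardest step, and the one I would treat most carefully, is establishing that $\Delta_2$ is genuinely multiplicative on the relevant products in $M_2(\mathscr{M})_{\Delta_2}$ and that it factors over block-diagonal matrices as a product of the entrywise determinants $\Delta$ on $\mathscr{M}_{\Delta}$. For \emph{bounded} invertible operators multiplicativity of the Fuglede--Kadison determinant is classical, but here the factors are unbounded elements of $\mathscr{M}_{\Delta}$, so I would need to invoke the extension of multiplicativity to $\mathscr{M}_{\Delta}$ (as established by Haagerup--Schultz, cited earlier in the paper) and confirm that it applies to these particular products. Once the determinant identity $\Delta(zI \afmdiff A\afmprod B) = \Delta(zI \afmdiff B \afmprod A)$ holds for all $z$, the equality of Brown measures $\mu_{A\afmprod B} = \mu_{B\afmprod A}$ follows immediately, since the Brown measure is by definition the distributional Laplacian $\tfrac{1}{2\pi}\nabla^2 \log\Delta(zI \afmdiff \,\cdot\,)$ and the two operators produce identical such functions of $z$.
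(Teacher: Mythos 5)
Your proposal is correct and follows essentially the same route as the paper: substitute $x=A$, $y=B$ into the two-sided factorization \eqref{eqn:mat_id}, use invertibility of the unipotent triangular factors together with Proposition \ref{prop:rank_prop} for the rank statement, and Lemma \ref{lem:upp_tri} plus multiplicativity of $\Delta_2$ on $M_2(\mathscr{M})_{\Delta_2}$ for the determinant statement, with the general $z\neq 0$ case obtained by rescaling. The one imprecision is your handling of $z=0$ ``by continuity of $\log\Delta$'': the function $z\mapsto\log\Delta(zI\afmdiff T)$ is only subharmonic, hence upper semicontinuous and possibly $-\infty$ at atoms, so continuity at $z=0$ is not available; the paper instead gets $\Delta(-A\afmprod B)=\Delta(A)\Delta(B)=\Delta(-B\afmprod A)$ directly from the Haagerup--Schultz multiplicativity of $\Delta$ on $\mathscr{M}_{\Delta}$, a tool you already invoke elsewhere.
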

\vskip 0.05in
\begin{proof}
\noindent (i) Since $A, B \in \afm$, from Theorem \ref{thm:main_iso} we observe that the operators,  $$\begin{bmatrix}
I & A\\
0 & I
\end{bmatrix},
\begin{bmatrix}
I & 0\\
B & I
\end{bmatrix} \in M_2(\mathscr{M})_{\textrm{aff}},$$ and are invertible in $M_2(\mathscr{M})_{\textrm{aff}}$ with inverses $$\begin{bmatrix}
I & -A\\
0 & I
\end{bmatrix},
\begin{bmatrix}
I & 0\\
-B & I
\end{bmatrix}, \textrm{respectively}.$$
 Thus evaluating the rank functional for $M_2(\mathscr{M})_{\textrm{aff}}$ on both sides of the identity in (\ref{eqn:mat_id}) (substituting $x = A, y = B$), from Proposition \ref{prop:rank_prop} we conclude that $\fr(I \afmdiff A \afmprod B) + \fr (I) = \fr(I \afmdiff B \afmprod A) + \fr(I)$ which implies that $\fr (I \afmdiff A \afmprod B) = \fr (I \afmdiff B \afmprod A)$. For $z \ne 0$, $\fr (zI \afmdiff A \afmprod B) = \fr (I \afmdiff (z^{-1}A) \afmprod B)= \fr (I \afmdiff  B \afmprod (z^{-1}A)) = \fr (zI \afmdiff B \afmprod A).$

\vskip 0.1in
\noindent (ii) Since $A, B \in \mathscr{M}_{\Delta}$, from Theorem \ref{thm:main_iso_delta} we observe that  $$\begin{bmatrix}
I & A\\
0 & I
\end{bmatrix},
\begin{bmatrix}
I & 0\\
B & I
\end{bmatrix} \in M_2(\mathscr{M})_{\Delta _2}.$$

Using Lemma \ref{lem:upp_tri} to evaluate $\Delta_2$ on both sides of the identity in (\ref{eqn:mat_id}) (substituting $x = A, y = B$), we note that $\Delta(I \afmdiff A \afmprod B) = \Delta(I \afmdiff B \afmprod A)$ for $A, B \in \mathscr{M}_{\Delta}$. For $z \ne 0$,  $\Delta(zI \afmdiff A \afmprod B) = |z| \Delta(I \afmdiff (\frac{1}{z}A) \afmprod B) = |z|\Delta(I \afmdiff  B \afmprod (\frac{1}{z}A)) = \Delta(zI \afmdiff B \afmprod A).$ For $z = 0$, using \cite[Proposition 2.5]{haagerup-schultz}, we have $\Delta(-A \afmprod B) = \Delta(A)\Delta(B) = \Delta(B)\Delta(A) = \Delta(-B \afmprod A).$ Taking the Laplacian of the mappings $z \in \mathbb{C} \mapsto \frac{1}{2\pi} \log \Delta(zI \afmdiff A \afmprod B)$, and $z \in \mathbb{C} \mapsto \frac{1}{2\pi} \log \Delta(zI \afmdiff B \afmprod A)$, we conclude that $\mu_{A \afmprod B} = \mu_{B \afmprod A}.$
\end{proof}

\begin{cor}
\label{cor:heisenberg_lp}
\textsl{
Let $P, Q$ be operators in $\afm$ such that $Q \afmprod P \afmdiff P \afmprod Q = \iu I$. Then at least one of $P$ or $Q$ does not belong to $\mathscr{M}_{\Delta}$ and {\it a fortiori}, does not belong to $L^p(\mathscr{M}, \tau)$  for any $p \in (0, \infty]$.
}
\end{cor}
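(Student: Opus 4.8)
The plan is to argue by contradiction: I assume that both $P$ and $Q$ lie in $\mathscr{M}_{\Delta}$ and derive an impossible translation-invariance property for a Brown measure. Since $\mathscr{M}_{\Delta}$ is a $*$-subalgebra of $\afm$ (Lemma \ref{lem:mdelt_alg}), the assumption $P, Q \in \mathscr{M}_{\Delta}$ forces both products $P \afmprod Q$ and $Q \afmprod P$ into $\mathscr{M}_{\Delta}$, so their Brown measures are defined. Applying Theorem \ref{thm:bmeas_ab_ba}, (ii) then yields the first key equality $\mu_{P \afmprod Q} = \mu_{Q \afmprod P}$.

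The second ingredient is the behavior of the Brown measure under scalar translation. Rewriting the commutation relation as $Q \afmprod P = (P \afmprod Q) \afmsum \iu I$, I would observe that for any $T \in \mathscr{M}_{\Delta}$ and $c \in \mathbb{C}$ the subharmonic function defining the Brown measure satisfies $\log \Delta(\lambda I \afmdiff (T \afmsum cI)) = \log \Delta((\lambda - c) I \afmdiff T)$. Since the distributional Laplacian commutes with the translation $\lambda \mapsto \lambda - c$, it follows that $\mu_{T \afmsum cI}$ is the pushforward of $\mu_T$ under $\lambda \mapsto \lambda + c$. Taking $T = P \afmprod Q$ and $c = \iu$, this gives $\mu_{Q \afmprod P}(S) = \mu_{P \afmprod Q}(S - \iu)$ for every Borel set $S \subseteq \mathbb{C}$. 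Combined with the equality from the previous paragraph, I conclude that the probability measure $\mu_{P \afmprod Q}$ is invariant under the translation $\lambda \mapsto \lambda + \iu$.

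To reach a contradiction, I would use that $\mu_{P \afmprod Q}$ is a Borel \emph{probability} measure on $\mathbb{C}$. Partition the plane into the disjoint horizontal strips $S_k := \{ z \in \mathbb{C} : k \le \mathrm{Im}(z) < k+1 \}$ for $k \in \mathbb{Z}$. Invariance under translation by $\iu$ forces $\mu_{P \afmprod Q}(S_k)$ to be a constant independent of $k$; but since the $S_k$ are disjoint and exhaust $\mathbb{C}$, their measures must sum to $1$, which is impossible for a constant nonnegative sequence over $\mathbb{Z}$. This contradiction shows that $P$ and $Q$ cannot both lie in $\mathscr{M}_{\Delta}$, establishing the main claim. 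The final \emph{a fortiori} assertion is then immediate from Lemma \ref{lem:mdelta_desc}, which gives $L^p(\mathscr{M}, \tau) \subset \mathscr{M}_{\Delta}$ for every $p \in (0, \infty]$: an operator outside $\mathscr{M}_{\Delta}$ lies in no such $L^p$-space.

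I anticipate the step requiring the most care to be the translation-covariance of the Brown measure, since it is the only point that reaches back into the distributional definition of $\mu_T$ rather than the formal properties of $\Delta$ and $\fr$ invoked elsewhere; everything else is either a direct appeal to Theorem \ref{thm:bmeas_ab_ba} or the elementary measure-theoretic observation that a probability measure admits no nontrivial translation symmetry.
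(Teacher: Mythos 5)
Your proof is correct and follows essentially the same route as the paper: assume both operators lie in $\mathscr{M}_{\Delta}$, use Theorem \ref{thm:bmeas_ab_ba},(ii) together with the translation covariance of the Brown measure to show that $\mu_{P \afmprod Q}$ is invariant under a nontrivial translation, and contradict the fact that it is a probability measure. The only differences are cosmetic: the paper first absorbs $-\iu$ into one operator so that the translation is by $1$ and then uses disjoint discs of radius $\tfrac12$, whereas you translate by $\iu$ and partition $\mathbb{C}$ into horizontal unit strips.
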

\begin{proof}
Let, if possible, $P$ and $Q$ be operators in $\mathscr{M}_{\Delta}$ such that $Q \afmprod P \afmdiff P \afmprod Q = \iu I$. Absorbing $-\iu$ into one of the operators, we may assume that $Q \afmprod P \afmdiff P \afmprod Q = I.$ For $w \in \mathbb{C}$, denote the open disc of radius $\frac{1}{2}$ in $\mathbb{C}$ centered at $w$ by $B_w := \{ z : |z - w| < \frac{1}{2} \} \subset \mathbb{C}$. From Theorem \ref{thm:bmeas_ab_ba}, we observe that $\mu_{Q \afmprod P}(B_w) = \mu_{P \afmprod Q}(B_w) =  \mu_{Q \afmprod P \afmdiff I}(B_w) = \mu_{Q \afmprod P}(B_{w + 1})$ for all $w \in \mathbb{C}$. Since $\mu_{Q \afmprod P}$ is a Borel probability measure, there is $\lambda \in \mathbb{C}$ such that $\mu_{Q \afmprod P}(B_{\lambda}) > 0.$ As $\{ B_{\lambda + n-1} \}_{n \in \mathbb{N}}$ is a collection of mutually disjoint open unit discs, note that $\mu_{Q \afmprod P}(\bigcup_{k=0}^{n-1} B_{\lambda + k}) = n \,\mu_{Q \afmprod P}(B_{\lambda})$ for $n \in \mathbb{N}$. Thus by choosing $n$ to be sufficiently large, we have $\mu_{Q \afmprod P}(\bigcup_{k=0}^{n-1} B_{\lambda + k}) > 1$, contradicting the fact that $\mu_{Q \afmprod P}$ is a probability measure. Thus at least one of $P$ or $Q$ does not belong to $\mathscr{M}_{\Delta} \supset \bigcup_{p \in (0, \infty]} \; L^p(\mathscr{M}, \tau)$.
\end{proof}

\begin{cor}
\label{cor:heisenberg_drange}
\textsl{
Let $P, Q$ be operators in $\afm$ such that $Q \afmprod P \afmdiff P \afmprod Q = \iu I$. Then for all $\lambda \in \C$, the operators $P - \lambda I$ and $Q - \lambda I$ are invertible in $\afm$, that is, the respective point spectrums of $P$ and $Q$ are empty.
}
\end{cor}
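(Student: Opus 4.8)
The plan is to reduce everything to a single invertibility assertion and then exploit the rank identity of Theorem~\ref{thm:bmeas_ab_ba},(i). First I would observe that shifting either operator by a scalar preserves the commutation relation: if $Q \afmprod P \afmdiff P \afmprod Q = \iu I$, then for every $\lambda \in \C$ the pairs $(P \afmdiff \lambda I, Q)$ and $(P, Q \afmdiff \lambda I)$ satisfy the very same relation, since the cross terms $\mp \lambda Q$ (resp.\ $\mp \lambda P$) cancel. Hence it suffices to prove the normalized claim: \emph{whenever $Q \afmprod P \afmdiff P \afmprod Q = \iu I$, both $P$ and $Q$ are invertible in $\afm$.} Indeed, applying this claim to the shifted pair $(P \afmdiff \lambda I, Q)$ yields invertibility of $P \afmdiff \lambda I$, and applying it to $(P, Q \afmdiff \lambda I)$ yields invertibility of $Q \afmdiff \lambda I$; by Proposition~\ref{prop:luck} this is exactly the emptiness of the point spectra.

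For the normalized claim I would argue by contradiction, treating $P$ (the case of $Q$ being symmetric). Set $R := Q \afmprod P$ and recall $P \afmprod Q = R \afmdiff \iu I$. For $z \neq 0$, Theorem~\ref{thm:bmeas_ab_ba},(i) gives $\fr(zI \afmdiff R) = \fr(zI \afmdiff P \afmprod Q) = \fr\big((z+\iu)I \afmdiff R\big)$. Writing $m(z) := 1 - \fr(zI \afmdiff R) = \tau(\mathcal{N}(zI \afmdiff R))$ for the multiplicity of $z$ as an eigenvalue of $R$ (the two expressions agree by Proposition~\ref{prop:range_proj},(i) and (iv), using $\tau(\mathcal{R}(T)) = \tau(\mathcal{R}(T^*))$), this reads $m(z) = m(z + \iu)$ for all $z \neq 0$. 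If $P$ were not invertible then $\mathcal{N}(P) \neq 0$ by Proposition~\ref{prop:luck}, and since $\mathcal{N}(P) \subseteq \mathcal{N}(R)$ we get $m(0) = \tau(\mathcal{N}(R)) \geq \tau(\mathcal{N}(P)) > 0$. Feeding $z = -\iu, -2\iu, \dots$ (all nonzero) into the invariance successively gives $m(-k\iu) = m(0) > 0$ for every $k \geq 0$; thus $R$ would possess the infinitely many distinct eigenvalues $\{-k\iu : k \geq 0\}$, each of multiplicity $m(0)$.

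To close the contradiction I need the fact that the eigenvalue multiplicities of a single operator $R \in \afm$ sum to at most $\tau(I) = 1$. I would establish this for any finite set of distinct eigenvalues $\lambda_1, \dots, \lambda_N$ with nullspace projections $N_j := \mathcal{N}(R \afmdiff \lambda_j I)$ by showing the eigenspaces are independent: the operator $p_j := (R \afmdiff \lambda_1 I) \afmprod \cdots \afmprod (R \afmdiff \lambda_{j-1} I)$ lies in $\afm$, hence is closed with closed nullspace, this nullspace contains the algebraic sum $N_1\mathscr{H} + \cdots + N_{j-1}\mathscr{H}$, while $p_j$ acts as the nonzero scalar $\prod_{i<j}(\lambda_j - \lambda_i)$ on $N_j \mathscr{H}$; therefore $(N_1 \vee \cdots \vee N_{j-1}) \wedge N_j = 0$. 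The modular (parallelogram) law for $\tau$ on projections of the finite algebra $\mathscr{M}$ then gives $\tau(N_1 \vee \cdots \vee N_N) = \sum_{j=1}^N \tau(N_j)$, whence $\sum_j \tau(N_j) \leq 1$. Taking $N > 1/m(0)$ of the eigenvalues $-k\iu$ forces $N\,m(0) \le 1$, a contradiction. The symmetric argument for $Q$ uses $R' := P \afmprod Q$ together with $\mathcal{N}(Q) \subseteq \mathcal{N}(R')$; the same computation yields $m'(z) = m'(z - \iu)$ and the forbidden eigenvalues $\{k\iu : k \geq 0\}$.

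The main obstacle I anticipate is the multiplicity-summation lemma of the third paragraph: the subtlety is that the \emph{algebraic} span of distinct eigenspaces need not be closed, so one cannot naively intersect closures of eigenspaces. Routing the independence through the closed affiliated operator $p_j$ — whose nullspace is genuinely closed and contains the algebraic span, yet which is injective on $N_j\mathscr{H}$ — is the key device that makes the modular law applicable. Everything else is bookkeeping with the affiliated-operator arithmetic $\afmsum, \afmprod$ and the already-established rank identity.
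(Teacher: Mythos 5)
Your proof is correct and follows essentially the same route as the paper's: the rank identity of Theorem~\ref{thm:bmeas_ab_ba},(i) combined with the commutation relation makes the nullity of $zI \afmdiff Q \afmprod P$ invariant under translation by $\iu$, yielding infinitely many eigenvalues of equal positive multiplicity and contradicting $\tau(I) = 1$, after which the scalar-shift observation disposes of general $\lambda$. The one point where you go beyond the paper is the multiplicity-summation step: the paper records only the pairwise relation $E_k \wedge E_\ell = 0$ before summing dimensions, whereas your closed operator $p_j$ supplies the joint independence $(N_1 \vee \cdots \vee N_{j-1}) \wedge N_j = 0$ that the additivity $\tau(N_1 \vee \cdots \vee N_N) = \sum_{j} \tau(N_j)$ actually requires --- a worthwhile tightening, since pairwise trivial meets alone do not give additivity of $\dim_c$ over joins.
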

\begin{proof}
Absorbing $-\iu$ into one of the operators, we may assume that $Q \afmprod P \afmdiff P \afmprod Q=I.$
By Theorem \ref{thm:bmeas_ab_ba},(i), we have $$\fr(nI \afmdiff P^* \afmprod Q^*) = \fr(nI \afmdiff Q^* \afmprod P^*), \textrm{ for }n \in \Z - \{ 0 \}.$$ For an operator $T \in \afm$, define $\fn(T) := \tau(\mathcal{N}(T)) $. Using Proposition \ref{prop:range_proj},(i), we also have $$\fn(nI \afmdiff Q \afmprod P) = \fn(nI \afmdiff P \afmprod Q), \textrm{ for }n \in \Z - \{ 0 \}.$$
As $Q \afmprod P \afmdiff P \afmprod Q = I$, for $k \in \N$ we observe that $\fn(kI \afmdiff  Q \afmprod P) = \fn((k-1)I \afmdiff P \afmprod Q) = \fn((k-1)I \afmdiff Q \afmprod P),$ and $\fn(-kI \afmdiff P \afmprod Q) = \fn((-k + 1)I \afmdiff Q \afmprod P) = \fn((-k + 1)I \afmdiff P \afmprod Q)$. By induction we conclude that $\fn(kI \afmdiff Q \afmprod P) = \fn(Q \afmprod P)$, and $\fn(-k I \afmdiff P \afmprod Q) = \fn(P \afmprod Q)$.
 
 For $k \in \mathbb{N}$, define $E_k := \mathcal{N}(k I \afmdiff Q \afmprod P)$. Note that for distinct positive integers $k$ and $\ell$, we have $E_k \wedge E_{\ell} = 0$. For a projection $E \in \mathscr{M}$, let $\mathrm{dim}_c (E) := \tau(E)$ denote the normalized dimension of $E$. Recall that $\mathrm{dim}_c(E \vee F) = \mathrm{dim}_c(E) + \mathrm{dim}_c(F)$ for projections $E, F$ such that $E \wedge F = 0$. Thus for $n \in \mathbb{N}$ we have
\begin{align*}
n \; \mathrm{dim}_c(\mathcal{N}(Q \afmprod P)) &= \sum_{k=1}^n \mathrm{dim}_c(E_k)\\
&= \mathrm{dim}_c(\vee_{k=1}^n E_k)  \\
&\le \mathrm{dim}_c(I) = 1.
\end{align*}

Consequently, $$\mathrm{dim}_c(\mathcal{N}(Q \afmprod P)) = 0 \Longrightarrow \mathcal{N}(Q \afmprod P) = 0 \Longrightarrow \mathcal{N}(P) = 0.$$ Using an analogous argument for the projections $\mathcal{N}(-mI \afmdiff P \afmprod Q)$ ($m \in \N$), we observe that $\mathcal{N}(Q) = 0$. Taking adjoints of both sides of the relation $Q \afmprod P \afmdiff P \afmprod Q = I$, we get the relation $P^* \afmprod Q^* \afmdiff  Q^* \afmprod P^* = I$. Thus $\mathcal{N}(P^*) = 0,  \; \mathcal{N}(Q^*) = 0$ which by Proposition \ref{prop:range_proj},(i), implies that $\mathcal{R}(P) = I, \mathcal{R}(Q) = I$. By Proposition \ref{prop:luck}, $P$ and $Q$ are invertible in $\afm$. In other words, we have shown that if the operators $P, Q$ satisfy the relation $Q \afmprod P \afmdiff P \afmprod Q = I$, then $P$ and $Q$ are invertible in $\afm$. The proof is complete after noting that if $Q \afmprod P \afmdiff  P \afmprod Q = I$, then $(Q - \lambda I) \afmprod (P - \lambda I) \afmdiff (P - \lambda I) \afmprod (Q - \lambda I) = I$ for all $\lambda \in \C$.
\end{proof}

By virtue of Corollary \ref{cor:heisenberg_drange}, the Heisenberg-von Neumann puzzle may be recasted in the following equivalent manner.
\begin{qn}
\textsl{Are there invertible operators $P, A$ in $\mathscr{M}_{\textrm{aff}}$ such that $$P^{-1} \afmprod A \afmprod P = I \afmsum A ?$$}
\end{qn}

\bibliographystyle{plain}
\bibliography{matrix_algebras_over_algebras_of_unbounded_operators_Nayak.bib}

\begin{thebibliography}{10}

\bibitem{brown-lidskii}
L.~G. Brown.
\newblock Lidski\u{\i}'s theorem in the type {${\rm II}$} case.
\newblock In {\em Geometric methods in operator algebras ({K}yoto, 1983)},
  volume 123 of {\em Pitman Res. Notes Math. Ser.}, pages 1--35. Longman Sci.
  Tech., Harlow, 1986.

\bibitem{dirac-quant-mech}
P.~A.~M. Dirac.
\newblock {\em The {P}rinciples of {Q}uantum {M}echanics}.
\newblock Oxford, at the Clarendon Press, 1947.
\newblock 3d ed.

\bibitem{douglas-factor}
R.~G. Douglas.
\newblock On majorization, factorization, and range inclusion of operators on
  {H}ilbert space.
\newblock {\em Proc. Amer. Math. Soc.}, 17:413--415, 1966.

\bibitem{haagerup-schultz}
Uffe Haagerup and Hanne Schultz.
\newblock Brown measures of unbounded operators affiliated with a finite von
  {N}eumann algebra.
\newblock {\em Math. Scand.}, 100(2):209--263, 2007.

\bibitem{kadison-schwarz}
Richard~V. Kadison.
\newblock A generalized {S}chwarz inequality and algebraic invariants for
  operator algebras.
\newblock {\em Ann. of Math. (2)}, 56:494--503, 1952.

\bibitem{kadison-liu}
Richard~V. Kadison and Zhe Liu.
\newblock The {H}eisenberg relation---mathematical formulations.
\newblock {\em SIGMA Symmetry Integrability Geom. Methods Appl.}, 10:Paper 009,
  40, 2014.

\bibitem{kadison-liu-thom}
Richard~V. Kadison, Zhe Liu, and Andreas Thom.
\newblock A note on commutators in algebras of unbounded operators.
\newblock {\em https://arxiv.org/abs/1901.10711}, 10:Paper 009, 40, 2014.

\bibitem{kadison-ringrose1}
Richard~V. Kadison and John~R. Ringrose.
\newblock {\em Fundamentals of the theory of operator algebras. {V}ol. {I}},
  volume~15 of {\em Graduate Studies in Mathematics}.
\newblock American Mathematical Society, Providence, RI, 1997.
\newblock Elementary theory, Reprint of the 1983 original.

\bibitem{kadison-ringrose2}
Richard~V. Kadison and John~R. Ringrose.
\newblock {\em Fundamentals of the theory of operator algebras. {V}ol. {II}},
  volume~16 of {\em Graduate Studies in Mathematics}.
\newblock American Mathematical Society, Providence, RI, 1997.
\newblock Advanced theory, Corrected reprint of the 1986 original.

\bibitem{luck}
Wolfgang L\"{u}ck.
\newblock {\em {$L^2$}-invariants: theory and applications to geometry and
  {$K$}-theory}, volume~44 of {\em Ergebnisse der Mathematik und ihrer
  Grenzgebiete. 3. Folge. A Series of Modern Surveys in Mathematics [Results in
  Mathematics and Related Areas. 3rd Series. A Series of Modern Surveys in
  Mathematics]}.
\newblock Springer-Verlag, Berlin, 2002.

\bibitem{vN-rings}
F.~J. Murray and J.~Von~Neumann.
\newblock On rings of operators.
\newblock {\em Ann. of Math. (2)}, 37(1):116--229, 1936.

\bibitem{nelson}
Edward Nelson.
\newblock Notes on non-commutative integration.
\newblock {\em J. Functional Analysis}, 15:103--116, 1974.

\bibitem{simon-reed}
Michael Reed and Barry Simon.
\newblock {\em Methods of modern mathematical physics. {I}}.
\newblock Academic Press, Inc. [Harcourt Brace Jovanovich, Publishers], New
  York, second edition, 1980.
\newblock Functional analysis.

\bibitem{sakai}
Sh\^{o}ichir\^{o} Sakai.
\newblock A characterization of {$W^*$}-algebras.
\newblock {\em Pacific J. Math.}, 6:763--773, 1956.

\bibitem{vN-heisenberg-classic}
J.~v.~Neumann.
\newblock Die {E}indeutigkeit der {S}chr\"{o}dingerschen {O}peratoren.
\newblock {\em Math. Ann.}, 104(1):570--578, 1931.

\bibitem{vN-math-found-qm}
John von Neumann.
\newblock {\em Mathematical foundations of quantum mechanics}.
\newblock Princeton University Press, Princeton, NJ, 2018.
\newblock New edition of [ MR0066944], Translated from the German and with a
  preface by Robert T. Beyer, Edited and with a preface by Nicholas A. Wheeler.

\end{thebibliography}

\end{document}